\newcommand{\numFamily}[1]{\mathbb{#1}}
\newcommand{\E}{\numFamily{E}}
\newcommand{\N}{\numFamily{N}}
\newcommand{\R}{\numFamily{R}}
\newcommand{\boldvec}[1]{ \boldsymbol{#1} }
\newcommand{\balpha}{{\ensuremath{\boldvec{\alpha}}}}
\newcommand{\bdelta}{{\ensuremath{\boldvec{\delta}}}}
\newcommand{\btheta}{{\ensuremath{\boldvec{\theta}}}}
\newcommand{\bphi}{{\ensuremath{\boldvec{\phi}}}}
\newcommand{\bpsi}{{\ensuremath{\boldvec{\psi}}}}
\newcommand{\btau}{{\ensuremath{\boldvec{\tau}}}}
\newcommand{\bp}{{\ensuremath{\boldvec{p}}}}
\newcommand{\bq}{{\ensuremath{\boldvec{q}}}}
\newcommand{\bT}{{\ensuremath{\boldvec{T}}}}
\newcommand{\bu}{{\ensuremath{\boldvec{u}}}}
\newcommand{\bv}{{\ensuremath{\boldvec{v}}}}
\newcommand{\bx}{{\ensuremath{\boldvec{x}}}}
\newcommand{\by}{{\ensuremath{\boldvec{y}}}}
\newcommand{\Dir}{\mathop{\textrm{Dir}}}
\newcommand{\Mult}{\mathop{\textrm{Mult}}}
\newcommand{\logit}{\mathop{\textrm{logit}}}
\newtheorem{theorem}{Theorem} 
\newtheorem{theorem*}{Theorem} 
\newtheorem{prop}{Proposition}
\newtheorem*{prop*}{Proposition}
\newtheorem*{corollary*}{Corollary} 
\newtheorem*{remark*}{Remark}
\newtheorem{lemma}{Lemma}
\newtheorem*{lemma*}{Lemma} 
\newtheorem*{definition*}{Definition} 
\newtheorem*{algorithm*}{Algorithm} 
\title{Modelling birdsong transmission with methods from molecular sequence analysis}
\author{
	Anthony Kwong \\
	Department of Mathematics\\
	University of Manchester\\
	Manchester, UK \\
	\texttt{shingyan.kwong@manchester.ac.uk} \\
	%% examples of more authors
	\And
	\href{https://orcid.org/0000-0002-5004-7195}{\includegraphics[scale=0.06]{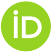}\hspace{1mm}Mark Muldoon}\thanks{Author to whom any correspondence should be addressed.}  \\
	Department of Mathematics\\
	University of Manchester\\
	Manchester, UK \\
	\texttt{mark.muldoon@manchester.ac.uk} \\
}
\date{\today}
\begin{document}
\maketitle

\begin{abstract}
In many species of songbirds, juvenile males learn their songs from adult
male tutors. In this paper we formulate a novel Markov model for birdsong
transmission developed by analogy with models used in biological sequence
analysis. We fit the  model using the recently developed Interacting
Particle Langevin Algorithm (IPLA) of \cite{Akyildiz:2023nh} and analyse a
collection of songs from Java sparrows (\emph{Lonchura oryzivora})
originally recorded and studied by Masayo~Soma and her collaborators
\citep{soma2011social,Lewis:2021hj,Lewis:2023gp}. The model proves to have
limited predictive power for a number of natural problems associated
with song transmission in Java sparrows and we propose reasons for
this, including the well-established faithfulness of song-learning and the
comparative brevity of Java sparrow songs.	
\end{abstract}

% keywords can be removed
\keywords{birdsong \and social learning \and maximum marginal likelihood estimation \and 
bridge sampling \and stochastic gradient ascent}

\section{Introduction}

In many species of songbirds, juvenile males learn their songs from adult male tutors.
In this paper we ask whether tutor-to-pupil song transmission can be studied by analogy with molecular sequence evolution. Where songs consist of a sequence of notes, genomes are sequences of nucleotides. Like nucleotides, notes can change from one type to another due to learning errors (i.e. point mutations). Individual notes or even song segments can be inserted, deleted, or duplicated, just like nucleotides and motifs in genomes. Hence, tools from molecular phylogenetics may provide a foundation for modelling birdsong transmission. However, there are important differences between genomes and birdsong. For example, individuals have a single genome, but a bird may sing many different songs. Although there are only four nucleotide bases, birds can sing many different note types. Moreover, molecular phylogenetics is an advanced field where estimates of mutation rates and substitution matrices are available, while birdsong transmission is an emerging subject where such tools are not yet available. These differences necessitate the development of novel statistical approaches.

\subsection{Java sparrow songs}

Java sparrows (\emph{Lonchura oryzivora}) are a highly vocal estrildid finch native to the island of Java, Indonesia (\cite{Kagawa:2013hn}). Male birds typically sing between 2 and 8 note types and their songs often include repetitive syllables, trills, and non-trill notes. Previous studies have established that male juveniles (\emph{pupils}) learn their songs from their so-called social fathers, whom we will call \emph{tutors} in this paper. (\cite{soma2011social}). Learning fidelity is high: pupils learn their tutors' note repertoires, note sequences (song structure) and temporal features (\cite{Lewis:2021hj, Lewis:2023gp}). The strong effect of social learning provides the rationale for creating a data-driven, model for the transmission of note sequences.

\subsection{Overview of the data}

We study a dataset published in (\cite{Lewis:2021hj}) consisting of whole song recordings from a captive population of Java sparrows, reared by the Soma lab at the University of Hokkaido. The birds have been selectively tutored, so that almost all pupil-tutor relationships are known. Recordings were made in a soundproof chamber with each bird singing in isolation and all recordings for a given bird were made within a single week
(time point). For each bird, when more than 10 full song recordings were available from a single time point, 10 were randomly selected for inclusion. If fewer than 10 recordings were available at all time points, all songs from one time point were used, prioritising recordings made when the bird was approximately 2–5 years old. Full details of the data collection procedures are provided in Lewis, Soma et al. (2021). In total, there are 676 songs recorded from 73 individuals (mean 9.3 songs per individual, range 3 to 10), including 58 pupil-tutor pairs. The birds came from 10 different social lineages, named with arbitrary colours by the Soma lab. A social lineage is a chain of birds all of whose members can trace their songs back to the same founder: see Figure~\ref{fig:LineageForest}. Lineages varied from 1 to 4 generations in length. 

\begin{figure}
	\begin{center} 
		\includegraphics[width=0.95\columnwidth]{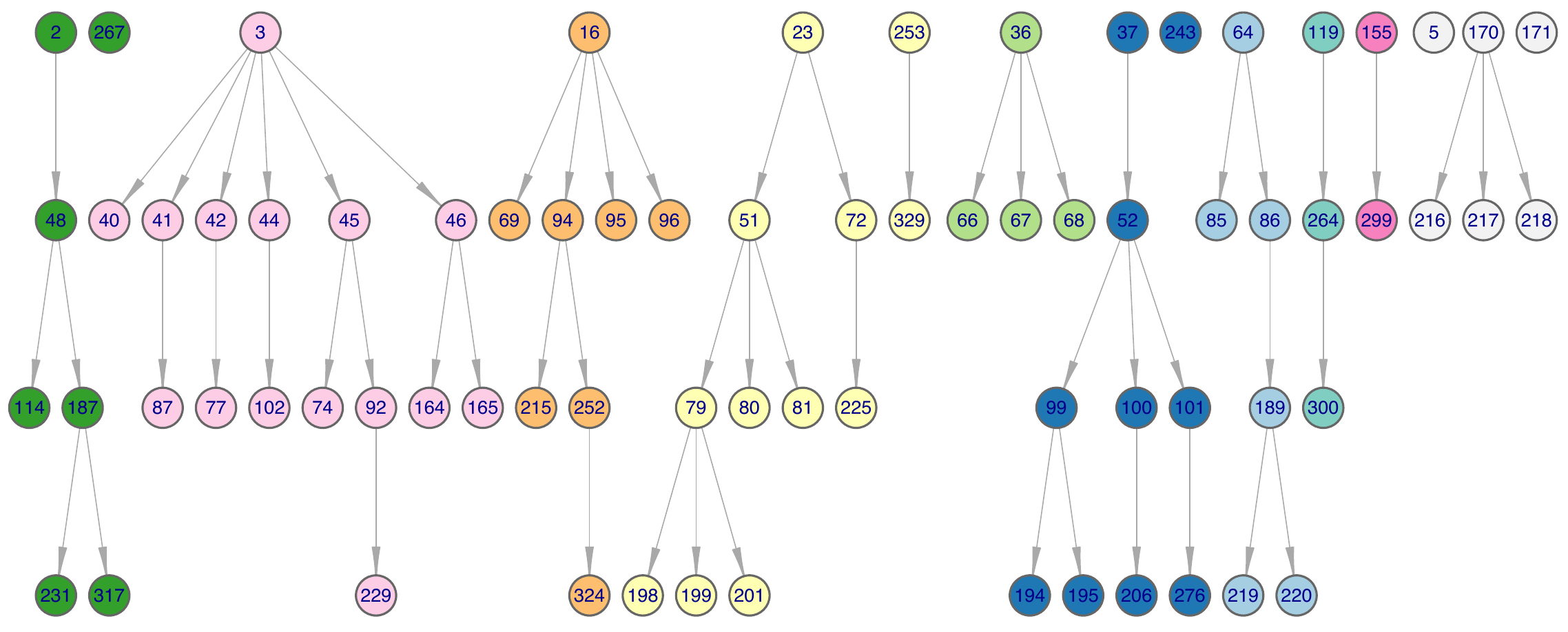}
	\end{center}
	\caption{ 
		Song-transmission lineages represented in the data from (\cite{Lewis:2021hj}).
		Nodes are labelled with numerical bird IDs, while directed edges point from
		tutor to pupil. Lineages are socially connected groups of birds linked through song transmission: nodes for birds in the same lineage are shown in same colour. Birds 267, 243, 5 and 171 do not have any pupils and their tutors were unknown, so Soma assigned them to lineages based on source location. For example, birds 5 and 171 were purchased from the same pet store as bird 170. }
	\label{fig:LineageForest}
\end{figure}

\begin{figure}
	\begin{center} 
		% See https://tex.stackexchange.com/questions/16942/difference-between-textwidth-linewidth-and-hsize
		\includegraphics[width=0.9\columnwidth]{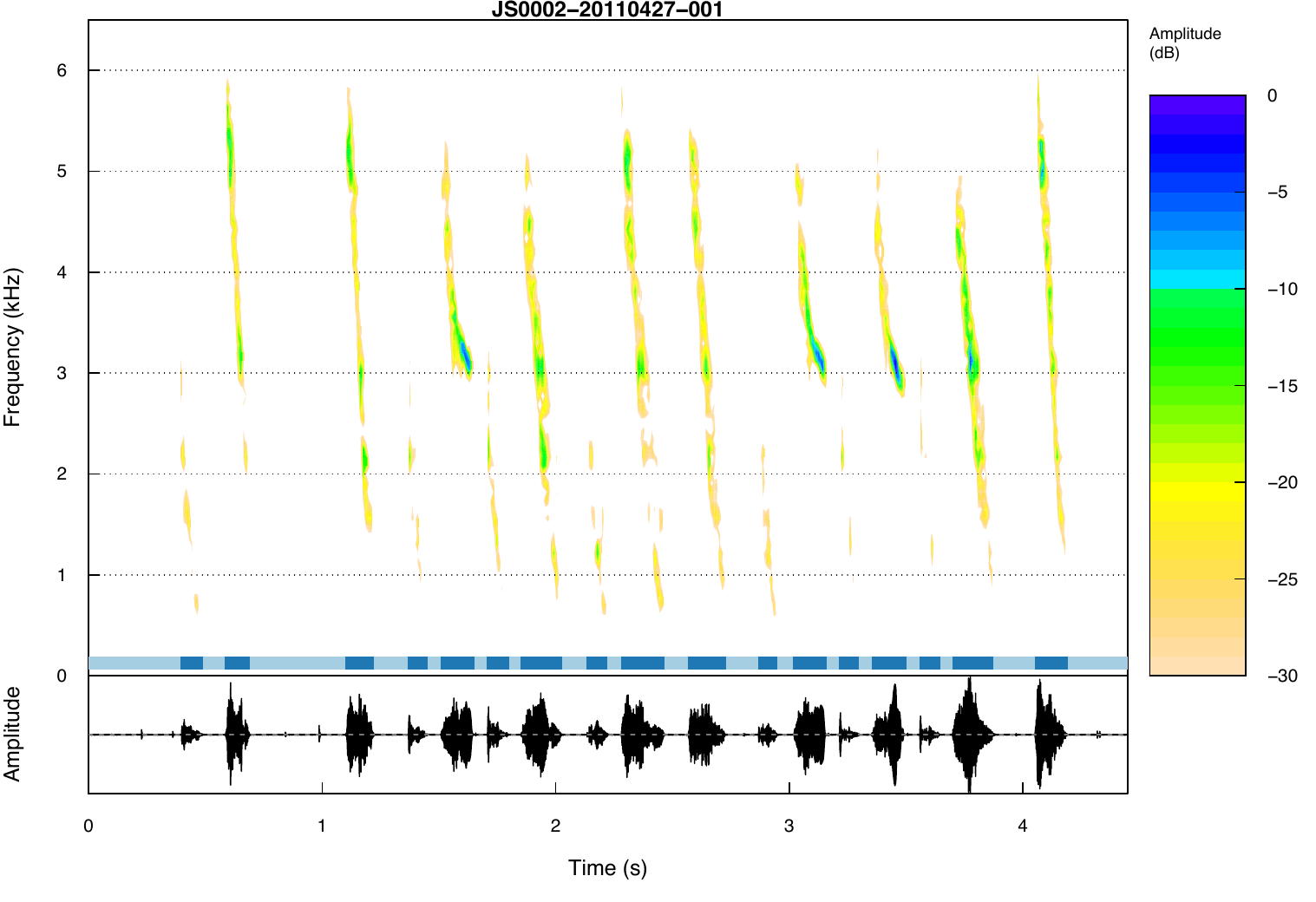}
	\end{center}
	\caption{ 
		Example spectrogram of one song, sung by JS0002 from the Java sparrow dataset (\cite{Lewis:2021hj}). Every vertical strip in the diagram holds a heatmap representing the power spectrum for a 512-sample ($\approx 0.0116$ secs) temporal interval. The darker regions in the  blue bar toward the bottom of the plot indicate the intervals from which the individual notes were extracted. 
	}
	\label{fig:specgram_sample}
\end{figure}

\Cref{fig:specgram_sample} is a visual representation of one song in the Java sparrow dataset. The coloured waves are the individual sound elements (i.e. notes) which are separated by intervals of near-silence. \textcite{Lewis:2021hj} inspected such spectrograms and excised notes using the software tool Koe (\cite{Fukuzawa:2020cd}). They then manually classified the notes based on their shape and spectral features: a full discussion of the note classification procedure is provided in \textcite{Lewis:2021hj}. In total, Lewis et al. classified 22,972 notes into 16 note classes: see \Cref{fig:noteclass_sample} for typical examples of the note types. There is now software to computationally excise and classify notes with high reliability using tools such as TweetyNet (\cite{Cohen:2022xn}). By excising and classifying the notes, Lewis et al. converted each song into a note sequence. We generated multiple sequence alignments of note sequences from tutors and pupils using profile hidden Markov models 
(\cite{Kwong:2025dv}: in preparation). 

\begin{figure}
	\begin{center} 		\includegraphics[width=0.9\columnwidth]{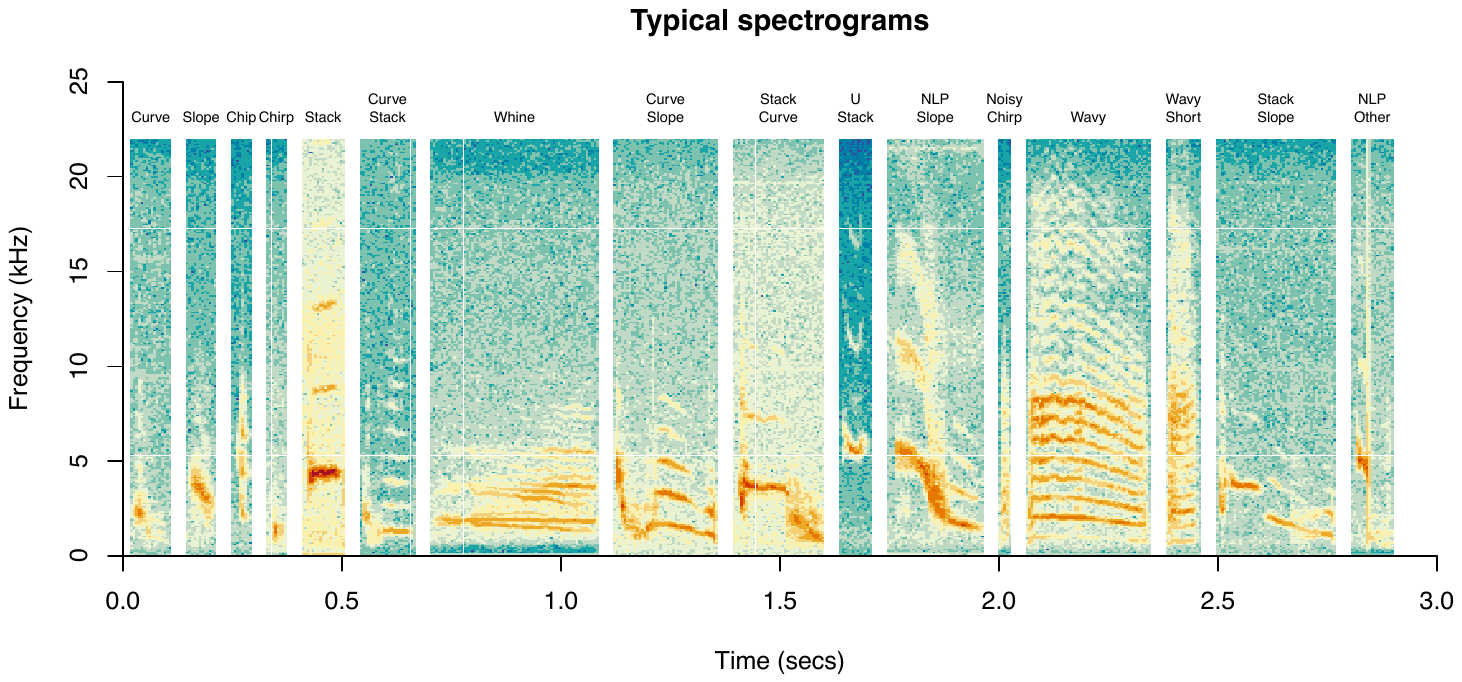}
	\end{center}
	\caption{ 
    Example spectrograms of the 16 note classes identified by \cite{Lewis:2021hj} in the Java sparrow dataset. Note that the scale used to draw the heatmap varies from note to note. 
	}
	\label{fig:noteclass_sample}
\end{figure}

\subsection{Reduction of songs to count data}
\Cref{fig:alignment} illustrates the kind of data that we
analyze. It consists of a set of aligned note sequences from a tutor and a
pupil. We restricted attention to those columns of the alignment where both
birds have notes in at least half of their songs (as opposed to gap characters). Next, we reduced the alignment to a pair of matrices of counts:
\begin{center}
	\begin{tabular}{c|l}
		$x_{ij}$ & Number of times tutor sang note $i$ in position $j$ of the alignment \\
		$y_{ij}$ & Number of times pupil sang note $i$ in position $j$ of the alignment\end{tabular}
\end{center}
For example, we write $\bx_j = (x_{1j}, \dots, x_{dj})$ to indicate the vector 
of the tutor's counts at the $j$-th aligned position and $\by_j$ for the pupil's 
counts. We will do the same thing for all pupil-tutor pairs and so the index $j$ 
will run across \emph{all} aligned positions, both within the
alignment for a single known pupil/tutor pair and across all such pairs.
In Section~\ref{sec:NotionsAndNotation} below, we develop a probabilistic
model for the generation of such counts.

\begin{figure}
	\begin{center} 
		% See https://tex.stackexchange.com/questions/16942/difference-between-textwidth-linewidth-and-hsize
		\includegraphics[width=\columnwidth]{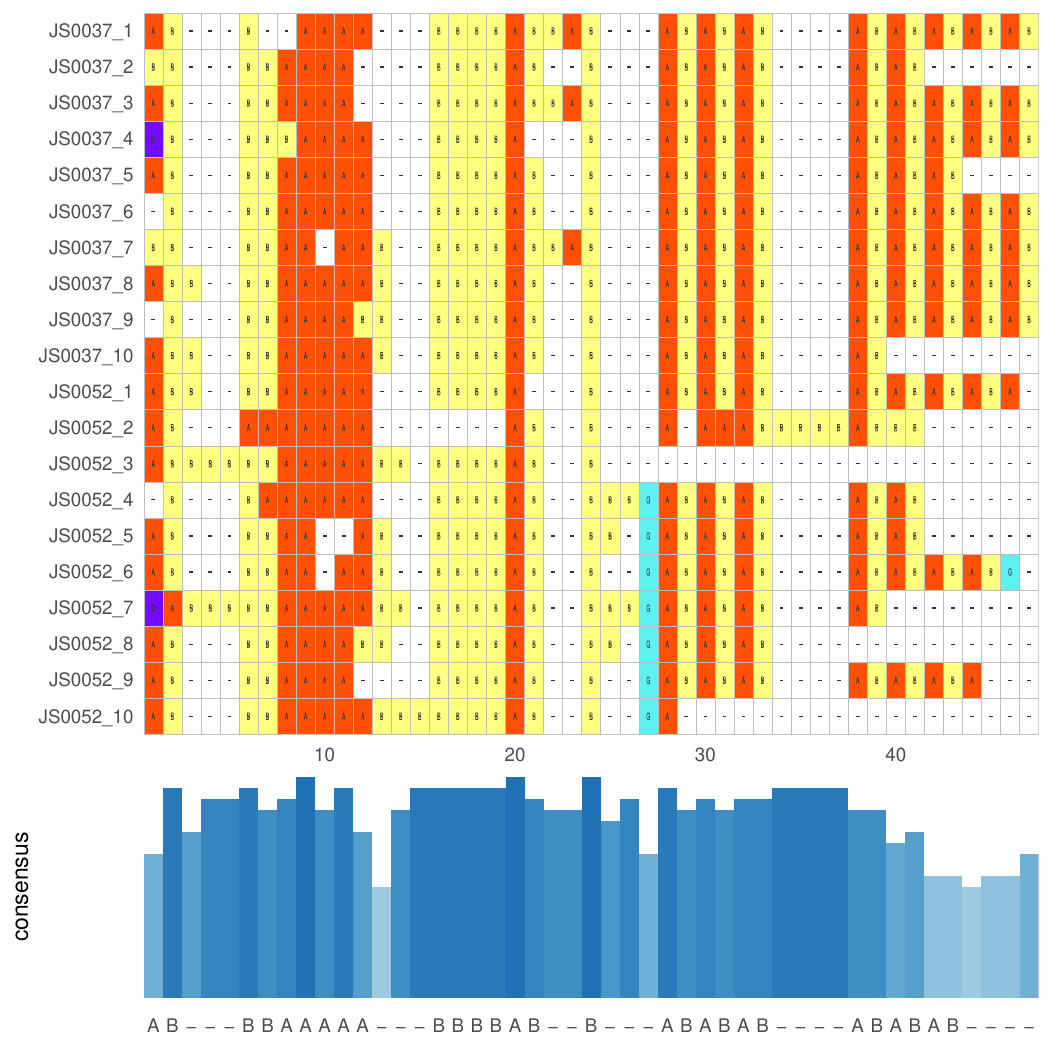}
	\end{center}
	\caption{ 
		A typical alignment produced using profile hidden Markov models
		(see Chapter 5 \textcite{Durbin:1998yu} for details of this approach).
		Each row corresponds to a recorded song while the 
		differently coloured blocks represent different note types and 
		are labelled with letters: A is the most commonly sung note 
		(across all recordings), B the second most commonly sung note, and so on.
		The 20 songs aligned here come from a pupil-tutor pair, birds JS0037 (the tutor) 
		and JS0052 (the pupil).
	}
	\label{fig:alignment}
\end{figure}

\subsection{Outline of the paper}
The remainder of the paper is organised as follows. Section~\ref{sec:NotionsAndNotation} lays out the mathematical foundations of our model and the associated inference problems.
In particular, in Section~\ref{sec:generativeModel} we develop a probabilistic model for birdsong transmission using an analogy to models used in molecular phylogenetics. A key ingredient is the \emph{tranmission matrix} $T$, whose element
$T_{rs}$ is the probability that a pupil will sing note $r$ given that his tutor
sang note $s$. In Section~\ref{sec:BayesianInferenceForT} we formulate a Bayesian inference problem for the model's parameters and in Sections~\ref{sec:introMMLE}---\ref{sec:stickBreaking}
we explain how to use the recently developed Interacting Particle Langevin Algorithm (IPLA) of \cite{Akyildiz:2023nh}---which is essentially an optimisation algorithm---to estimate $T$. 

One of the attractions of the IPLA is that, subject to certain restrictions on the optimisation target, it comes with strong convergence guarantees. In Section~\ref{sec:smoothness} we ask whether our optimisation target, a certain marginal likelihood, satisfies these conditions and conclude that it does not: full proofs are supplied in an Appendix. To address this problem, we did numerical 
experiments in which we applied the IPLA to synthetic data generated from our model. We report on these in Section~\ref{sec:syntheticData}, where we find that despite the lack of theoretical guarantees, the inference problem shows no evidence of multimodality and we are able to recover the transmission matrix. Bolstered by these observations, we fit a transmission matrix for the birdsong data: these results are reported in Sections~\ref{sec:realData} and \ref{sec:trans_mat}. All these calculations rely on codes that we developed ourselves and ran in R \citep{R-Core-Team:2025zi}, accelerated with the \texttt{Rcpp} package \citep{Eddelbuettel:2025yo,Eddelbuettel:2011hy}.

Armed with the fitted transmission matrix, we then used the Hamiltonian Monte Carlo package Stan \citep{Carpenter:2017rm} and the \texttt{bridgesampling} R package \citep{Gronau:2020fb} to estimate the posterior probabilities that each possible pair of birds had a pupil-tutor relationship: these results are summarised in Section~\ref{sec:evidence}. The paper concludes with a discussion in Section~\ref{sec:conclusion} and an Appendix providing detailed proofs. 

\section{Notions and notation}
\label{sec:NotionsAndNotation}

Here, we define terms and introduce tools that we will use to specify our probabilistic model.\newline

The \emph{$d$-dimensional simplex} $\Delta_d$  is the set
\begin{displaymath}
	\Delta_d = \left\{ (p_1, \dots, p_d) \in \R^d \; \biggl|  \; p_i \geq 0 \;\; \mbox{and} \;\; \sum_{i=1}^d p_i = 1 \right\}.
\end{displaymath}
One can think of points in the interior of $\Delta_d$ as discrete probability 
distributions over the numbers $1, \dots, d$.

The \emph{multinomial distribution} is a probability distribution over 
vectors of counts $\by \in \N^d$ that is parameterised by the total number 
of counts $N = \sum_{i=1}^d y_i$ and a vector of probabilities $\bp \in \Delta_d$.
Its probability mass function is
\begin{equation}
	P( \by \, | \, \bp, N ) \equiv \Mult( \by \, | \, \bp ) =
	\frac{N!}{\prod_{i=1}^d y_i !} \left( \prod_{i=1}^d p_i^{y_i} \right). 
\label{eqn:multinomialDef}
\end{equation}

The \emph{Dirichlet distribution} is a probability distribution over 
$\Delta_d$ and is parameterised by a vector of positive \emph{shape
parameters} $\balpha \in \R_+^d$. Its probability density function is
\begin{equation}
	P( \bp \, | \, \balpha  ) \equiv \Dir( \bp\ \, | \, \balpha  ) =
	 \frac{\Gamma( \alpha ) }{\prod_{i=1}^d \Gamma(\alpha_i)}  \left( \prod_{i=1}^d p_{i}^{\alpha_i - 1} \right),
\label{eqn:DirichletDef}
\end{equation}
where $\alpha = \sum_{i=1}^d \alpha_i$ is the sum of the shape
parameters.

\subsection{A probabilistic generative model}
\label{sec:generativeModel}

If we then introduce the following unobserved (and, indeed, unobservable) quantities
\begin{center}
	\begin{tabular}{c|l}
		$p_{ij}$ & Latent probability that the tutor will sing note $i$ in position $j$ \\
		$q_{ij}$ & Latent probability that the pupil will sing note $i$ in position $j$ \\
	\end{tabular}
\end{center}
then the likelihood for the count data is:
\begin{equation}
	L = \prod_j \Mult(\bx_j \, | \, \bp_j ) \times \Mult(\by_j \, | \, \bq_j )
\label{eqn:firtLikelihoodExpr}
\end{equation}
where the product ranges over all aligned positions for all pupil-tutor pairs. 
Here $\bp_j = (p_{1j}, \dots, p_{dj})$ is the tutor's vector of note-usage 
probabilities at the $j$-th aligned position and $\bq_j$ is the pupil's. 

In the spirit of substitution models used in phylogenetics,
we assume that the pupil's probabilities $\bq_j$ are 
related to the tutor's $\bp_j$ by a single, fixed \emph{transmission matrix}
$T$ that is shared across all aligned positions, so that
\begin{equation}
	\bq_j = T \bp_j.
\label{eqn:qTp}
\end{equation}
Here $T$ is a $d \times d$ matrix and
\begin{equation}
	T_{rs} = P(\mbox{pupil sings note $r$} \, | \, \mbox{tutor sings note $s$} ).
\label{eqn:TmatDef}
\end{equation} 
As the pupil has to sing \emph{something}, the columns of $T$ must satisfy:
\begin{equation}
	\sum_{r=1}^d T_{rs} \; = \; 
	\sum_{r=1}^d P( \mbox{pupil sings $r$} \; |\; \mbox{tutor sings $s$} )
	\; = \; 1.
\label{eqn:colTsumsToOne}
\end{equation}
Hence, we obtain a new expression for the log-likelihood:
\begin{equation}
	\mathcal{L} = \log(L) = \sum_j \log\left(\Mult(\bx_j \, | \, \bp_j )\right) 
		+ \log\left(\Mult(\by_j \, | \, T \bp_j )\right),
\label{eqn:loglike}
\end{equation}
where, as above, the sum ranges over all aligned positions for all pupil-tutor pairs.

\subsection{Bayesian inference}
\label{sec:BayesianInferenceForT}

Given we have the aligned note sequences and a transmission model, we need only 
specify a set of priors to set up a Bayesian inference problem.
Let $D$ denote the data (the counts $\bx_j$ and $\by_j$ from the alignment) 
and $\btheta$ be the parameters ($T$ and the latent probabilities $\bp_j$). 
Then we write

\begin{description}
	\item[$P( D \, | \, \btheta )$] the \emph{likelihood}.
	$P( D \, | \, \btheta ) = \prod_j \Mult(\bx_j \, | \, \bp_j ) \times \Mult(\by_j \, | \, T \bp_j )$.
	
	\item[$P( \btheta )$] the \emph{prior on the parameters}. We use 
	$d$-dimensional Dirichlet distributions here: one for every column of $T$ and one
	for each $\bp_j$.
	
	\item[$P( \btheta \, | \, D )$] the \emph{posterior distribution
	over $\btheta$}, $P( \btheta \, | \, D ) \propto P( D \, | \, \btheta )P( \btheta ).$

	\item[$P(D)$] the \emph{marginal likelihood}. This is a value
	from the marginal distribution obtained by integrating $\btheta$
	out of $P(D, \btheta)$
	\begin{displaymath}
		P(D) = \int P(D \,|\, \btheta) P(\btheta) \, d\btheta.
	\end{displaymath}
	
\end{description}
 
\subsection{Maximal marginal likelihood estimation of $T$}
\label{sec:introMMLE}

It would be straightforward to use Hamiltonian Monte Carlo (HMC) to draw
samples from the posterior over $T$ and the $\bp_j$. But as we are only interested in $T$, the $\bp_j$ are nuisance parameters. Ideally, we would compute or sample from the marginal distribution obtained by integrating them out of the posterior:
\begin{equation}
	P(T \; | \; D) = \int P(T, \bp \; | \; D) \, d\bp
	= \frac{1}{P(D)} \int P(D \; | \; T, \bp ) P(T, \bp) \, d\bp.
\label{eqn:marginalForT}
\end{equation}
where $\bp$ is a vector of \emph{all} the latent probabilities of
note-usage $\bp_j$ and $T$ is the transmission matrix. 

If we could compute $P(T \; | \; D)$, we could then maximise it with respect to the elements of $T$ to obtain a \emph{maximal marginal likelihood estimate (MMLE)} for $T$. And since the marginal likelihood $P(D)$ does not depend on $T$, it would be sufficient to maximise
\begin{equation}
	 \int P(D \; | \; T, \bp ) P(T, \bp) \, d\bp.
\label{eqn:IPLA_Target}
\end{equation}
Unfortunately the integral in \eqref{eqn:IPLA_Target} is intractable and so we are
forced to resort to numerics, using a recently-developed algorithm described below.

\subsubsection{MMLE with stochastic differential equations}
Recent work of \textcite{Akyildiz:2023nh}, inspired by the
methods of \textcite{Kuntz:2023rp}, suggests a diffusion-based approach to maximal marginal likelihood estimation, the \emph{interacting particle Langevin algorithm} (IPLA), 
that is a member of a family 
of algorithms that generalise the EM-algorithm using ideas that originated in \textcite{Neal:1998jw}. The setup includes observed data denoted by $y$ (for us this would be the count data), parameters that one wishes to estimate denoted by $\theta$ and latent variables, denoted by $x$, that one wishes to marginalise-out. They begin by defining the negative log-likelihood,
\begin{equation}
    U(\theta,x) \equiv -\log( p_\theta(x,y)),
\end{equation}
where $p_\btheta(x,.)$ is the joint probability distribution of the latent variables $x$ and observed data $y$ given the parameters $\theta$. The goal of IPLA is to maximise the marginal likelihood:

\begin{equation}
	k(\theta) \equiv p_\theta(y) = \int p_\theta(x,y) dx = \int e^{-U(\theta, x )} \, dx
\label{eqn:kDefIPLA}
\end{equation}

\cite{Akyildiz:2023nh} propose to maximise $k(\theta)$ via gradient ascent, approximating $\nabla k(\theta)$ by averaging over $N$ of independent realisations of a stochastic differential equation (SDE) on the full parameter space:
\begin{align}
	d\theta_t^N & = -\frac{1}{N} \sum_{j=1}^N \nabla_\theta U(\theta_t, X_t^{j,N}) \, dt + \sqrt{\frac{2}{N}} \, dB_t^{0,N} \nonumber \\
	dX_t^{i,N} & = -\nabla_x U(\theta_t^N, X_t^{i,N}) \, dt + \sqrt{2} \, dB_t^{i,N}
\label{eqn:AkyilidizSDEs}
\end{align}
for $i = 1, \dots , N.$ Here
$X^{i,N}_t$ are the $N$ particles used for estimating the gradient and
$\{ (B_t^{i,N})_{t \geq 0} \}^N_{i=0}$ is a family independent Brownian motions.
% $\sqrt{\frac{2}{N}} \, dB_t^{0,N}$ is a small, noise term. 
In practice, Akyildiz et al. use an Euler-Maruyama discretisation to generate approximate realisations. Given 
$(\theta_0, X_0^{1:N,N}) \in \mathbb{R}^{d_\theta} \times (\mathbb{R}^{d_x})^{\otimes N}$ and for any $n \in \mathbb{N}$
\begin{align}
	\theta_{n+1} & = \theta_n -\frac{\gamma}{N} \sum_{j=1}^N \nabla_T U(\theta_n, X_n^{j,N}) + \sqrt{\frac{2 \gamma}{N}} \, \xi_{n+1}^{0,N} \nonumber \\
	X^{i,N}_{n+1} & = X^{i,N}_{n} -\gamma \nabla_\bp U(\theta_n, X_n^{i,N}) + \sqrt{2 \gamma} \, \xi_{n+1}^{i,N} 
\label{eqn:AkyilidizEulerMaruyama}
\end{align}
where $\gamma > 0$ is a step size and $\xi_{n}^{i,N} = B_{n\gamma}^{i,N} - B_{(n-1)\gamma}^{i,N}$ for $0 \leq i \leq N$
are i.i.d. samples from the standard normal.

An advantage of the IPLA is that, subject to certain smoothness and convexity assumptions on $U$, one can obtain provable bounds on:
\begin{itemize}
	\item the range of step sizes $\gamma$ that will yield convergence to
	 $T^\star$, the MMLE for $T$;
	
	\item the expected deviation of $T_n$ from the MMLE, $T^\star$: 
	\begin{displaymath}
		\E\left[\| T_n - T^\star \|^2\right]^{1/2}
	\end{displaymath}
\end{itemize} 
The second set of bounds depends on the step-number $n$ and so also provides a 
bound on the rate of convergence of the algorithm.
These smoothness and convexity assumptions, which are discussed---and, where possible, verified---for our application in Section~\ref{sec:smoothness} below, also ensure that 
\begin{displaymath}
	\nabla_T k(T) \; = \;  \nabla_T \int  e^{-U(T, \bp )} \, d\bp
		\; = \; - \int \nabla_T  U(T, \bp ) e^{-U(T, \bp )} \, d\bp.
\end{displaymath}

\subsubsection{The IPLA for birdsong transmission}
For our application the role of Akyldiz's data $y$ is played by the count data, the latent variables $x$ are our $\bp_j$, the collection of latent note probabilities, and $\theta$ is the transmission matrix $T$. The optimisation target, a negative log-posterior, is
\begin{align}
	U(T, \bp) 
		& = -\log\left( P(D \; | \; T, \bp ) P(T, \bp) \right) \nonumber \\
		& = -\sum_j 
				\log\left(\Mult(\bx_j \, | \, \bp_j ) \right) +
				\log\left(\Mult(\by_j \, | \, T \bp_j ) \right) \nonumber \\
		& \qquad - \sum_{i=1}^d \log\left(\Dir( \bT_{\star i}\, | \, \balpha_T ) \right) 
			- \sum_j \log\left(\Dir( \bp_j \, | \, \balpha_p ) \right).
\label{eqn:iplaUdef}
\end{align}
Here $\bT_{\star i}$ is the $i$-th column of $T$ and the shape parameters for
the priors are
\begin{equation}
	\balpha_p = \underbrace{(0.5, \dots, 0.5)}_{\mbox{$d$ times}}
	\qquad \mbox{and} \qquad
	\balpha_T = \underbrace{(1.1, \dots, 1.1)}_{\mbox{$d$ times}}.
\label{eqn:shapeParamsForPriors}
\end{equation}
This choice of $\balpha_p$ corresponds to a distribution that favours $\bp_j$ in
which most of the entries are small,
while the Dirichlet distribution with $\balpha_T$ favours weakly $T_{*j}$ with uniform probabilities. 

Unfortunately one cannot use the SDEs in Eqn.~\eqref{eqn:AkyilidizSDEs} 
directly for either the columns $\bT_{\star i}$
or the vectors of probabilities of note selection $\bp_j$. The issue is that these quantities must lie in the $d$-simplex, but the SDEs cannot enforce such constraints. Thus to apply the IPLA we must transform the $\bp_j$ and columns of $T$ onto  unconstrained scales: we use \emph{logit-transformed, stick-breaking coordinates}, which are explained below.

\subsection{Unconstrained coordinates}
\label{sec:stickBreaking}

The problem of needing parameters to lie in a simplex when using an algorithm that
requires unconstrained variables is well known and we adopted the approach used by the developers of the HMC package Stan \parencite{Carpenter:2017rm}. Alternative coordinate systems exist, (e.g. softmax coordinates or the system described in \cite{Betancourt:2012qd}), but as we use Stan in Section~\ref{sec:evidence}, we chose
to minimise the number of coordinate systems we needed to introduce.

\begin{figure}
	\begin{center}	
		\includegraphics[width=10cm]{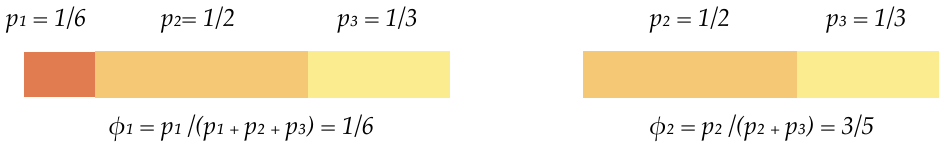}
	\end{center}
	\caption{ 
		Stick-breaking coordinates for the point 
		$p = (\frac{1}{6}, \frac{1}{2}, \frac{1}{3}) \in \Delta_3$.
		Figure S1 in the Supplementary Material illustrates all 
		the coordinate systems used in this paper.
	}
\label{fig:stickBreakingExample}
\end{figure}

The first thing to note is that the $d$-simplex is a $(d-1)$-dimensional object as, given the first $(d-1)$ of the $p_j$, we can solve the constraint
$\sum_{j=1}^d p_j = 1$ for the remaining component $p_d$. A point $\bp = (p_1, \dots, p_d)$ in the $d$-simplex $\Delta_d$ will thus be associated with $d-1$ unconstrained coordinates. We begin by constructing \emph{stick-breaking coordinates} 
$\bphi = (\phi_1, \dots, \phi_{d-1}) \in [0,1]^{d-1}$. 
To define these coordinates---and to understand the origin of their name---imagine starting with a stick of unit length and
snapping-off a piece of length $p_1$, then one of length $p_2\; \dots$ and so on. The stick-breaking coordinate $\phi_j$ is then defined as the
fraction of the remaining stick that gets snapped-off at step $j$ in this process.
Figure~\ref{fig:stickBreakingExample} illustrates the construction for an example with $\bp \in \Delta_3$, but in general we have
\begin{equation}
	\phi_1	= \frac{p_1}{p_1 + \dots + p_d}, \; \; 
	\phi_2	= \frac{p_2}{p_2 + \dots + p_d}, \; \;\dots ,\; \;
	\phi_j	= \frac{p_j}{\sum_{i=j}^d p_i},
\label{eqn:stickBreakingPToPhiToo}
\end{equation}
where the denominator of $\phi_j$ is $p_j + \dots + p_d$, the length of the stick
that remains after the first  $j-1$ pieces have been broken off.
In what follows it will prove convenient to rewrite the denominators in terms of 
the pieces already removed, so that the formulae become:  
\begin{equation}
	\phi_1	= \frac{p_1}{1}, \; \; 
	\phi_2	= \frac{p_2}{1 - p_1}, \; \;\dots ,\; \;
	\phi_j	= \frac{p_j}{1 - \sum_{i=1}^{j-1} p_i}.
\label{eqn:stickBreakingPToPhi}
\end{equation}

Next we invert the transformation to give 

\begin{align}
	p_1	&= \phi_1, \; \; 
	p_2	= \phi_2 (1 - \phi_1), \; \;\dots ,\; \;
	p_j	= \phi_j \prod_{i=1}^{j-1} (1 - \phi_i), \; \;\dots \nonumber \\
	p_d & = \prod_{i=1}^{d-1} (1 - \phi_i).
\label{eqn:stickBreakingPhiToP}
\end{align}

Although the sum of the stick-breaking coordinates is no longer constrained, the
$\phi_j$ are still restricted to satisfy $0 \leq \phi_j \leq 1$ and so are 
unsuited to use in the IPLA. We thus apply the logit transformation to the components of $\bphi$ to obtain $\bu = (u_1, \dots, u_{d-1}) \in  \R^{d-1}$ defined by
\begin{equation}
	u_j = \logit( \phi_j) = \log\left( \frac{\phi_j}{1 - \phi_j}\right),
\label{eqn:logitTransform}
\end{equation}
a transformation that's also simple to invert:
\begin{equation}
	\phi_j = \frac{e^{u_j}}{1 + e^{u_j}}.
\label{eqn:invLogitTransform}
\end{equation}
Table~\ref{tbl:coordinateChanges} summarises the coordinate changes
required to convert between $\bp \in \Delta_d$ and the unconstrained variables 
needed for the IPLA. In the remainder of the article, we will use the following
notation:

\begin{itemize}
	\item $\Phi:\Delta_d \rightarrow [0,1]^{d-1}$ will be the map that sends
	a point $\bp \in \Delta_d$ to its associated stick-breaking coordinates,
	as given by Eqn.~\eqref{eqn:stickBreakingPToPhi}.
	
	\item $\Lambda:[0,1]^{d-1} \rightarrow \R^{d-1}$ will be the map that sends
	stick-breaking coordinates to their logit-transformed counterparts, 
	as given by Eqn.~\eqref{eqn:logitTransform}.	

	\item $\bu_j = \Lambda(\Phi(\bp_j))$ will be the $(d-1)$-dimensional vector of 
	logit-transformed, stick-breaking
	coordinates associated with the tutor's note-usage probabilities $\bp_j$;

	\item $\btau_k = \Lambda(\Phi(\bT_{\star k}))$ will be the $(d-1)$-dimensional 
	set of logit-transformed, stick-breaking coordinates associated with the 
	$k$-th column of the transmission matrix.
\end{itemize}

\begin{table}[hb]
	\begin{center}
		\renewcommand{\arraystretch}{1.5}
		\begin{tabular}{ll|lc}
			\multicolumn{2}{l|}{Coordinate} & Domain & Definition \\
			\hline
			\small Simplex	& $\bp = (p_1, \dots, p_d)$ & $\bp \in \Delta_d$  \\
			\small Stick-breaking & $\bphi = (\phi_1, \dots, \phi_{d-1})$ & $0 \leq \phi_j \leq 1$ & $\phi_j = p_j/( \textstyle \sum_{i=j}^d p_i)$ \\
			\small Logit-transformed & $\bu = (u_1, \dots, u_{d-1})$ & $u_j \in \R$ & $u_j = \log\left( \phi_j/(1 - \phi_j) \right)$ \\
		\end{tabular}
	\end{center}
	\caption{ 
		Formulae for stick-breaking coordinates $\bphi$ and their logit-transformed 
		partners $\bu$ in terms of the simplex coordinates $\bp$. These changes of 
		coordinate allow us to set up the SDEs for the IPLA using unconstrained 
		variables. 
	}
	\label{tbl:coordinateChanges}
\end{table}

\section{Establishing convergence}
\label{sec:smoothness}
The proofs of convergence in \textcite{Akyildiz:2023nh} depend on three
assumptions, the first two of which involve the 
function $U(\bu, \btau)$ defined by transforming the log density in 
Eqn.~\eqref{eqn:iplaUdef} to a density over the unconstrained variables. 
In this section we introduce these assumptions and check
whether our $U(\bu, \btau)$ satisfies them. In brief, it doesn't. Although
we can arrange our implementation of the IPLA so that two of the three
conditions are satisfied, we can also prove that the remaining one is not.
In the rest of this section we give precise statements of the assumptions
and remark briefly on the extent to which our system satisfies them. 

Nevertheless, numerical
results presented in Section~\ref{sec:numericalResultsIPLA} suggest
that the IPLA does converge to a sensible MMLE for $T$ when applied 
to synthetic data and produces plausible results when applied to 
Java sparrow song. This is in keeping with the discussion in
\textcite{Akyildiz:2023nh}, where the authors say 
\begin{quote}
$\dots$ our results are obtained in the case where $U$ is gradient Lipschitz and obeys a strong convexity condition, but we believe that similar results can be obtained under much weaker (nonconvex) conditions, \textcite{Zhang:2023gd}.
\end{quote}

\subsection{Smoothness and regularity assumptions}

After suppressing terms that do not depend on the $\bp_j$ or $T$ and combining terms in $\log(p_{ij})$, the negative log-posterior function is:

\begin{align}
	U(T, \bp) 
		& = -\sum_j \sum_{i=1}^d 
				(\alpha_p - 1 + x_{ij}) \log(p_{ij})  + y_{ij} \log(q_{ij}) \nonumber \\
		& \qquad - \sum_{r=1}^d \sum_{i=1}^d 
					(\alpha_T - 1) \log\left(T_{ri} \right) \nonumber \\
		& = -\sum_j \sum_{i=1}^d 
				(\alpha_p - 1 + x_{ij}) \log(p_{ij}) 
				+ y_{ij} \log\left(\sum_{r=1}^d T_{ir} p_{rj}\right)  \nonumber \\
		& \qquad - \sum_{r=1}^d \sum_{i=1}^d (\alpha_T - 1) \log\left(T_{ri} \right) 
\label{eqn:iplaUlogTerms}
\end{align}
where the $\bp_j$ and the columns of $T$ are constrained to lie in the 
simplex $\Delta_d$.

To transform this density to one over the unconstrained coordinates $\bu_j$ and $\btau_k$, we need to include a factor arising from the Jacobian of the change of variables. Calculations detailed in Section~\ref{sec:changeOfCoords} of the Appendix establish that the Jacobian factor is just the product 
\begin{equation}
	\left(\prod_{j}\prod_{i=1}^d p_{ij}\right) \times
	\left( \prod_{r=1}^d \prod_{i=1}^d T_{ri} \right)
\label{eqn:iplaDensityJacobinaFactor}
\end{equation}
and so the optimisation target for the IPLA becomes
\begin{align}
	U(\bu, \btau) 
		& = -\sum_j \sum_{i=1}^d 
				(\alpha_p + x_{ij}) \log(p_{ij}) 
				+ y_{ij} \log\left(\sum_{r=1}^d T_{ir} p_{rj}\right)  \nonumber \\
		& \qquad - \sum_{r=1}^d \sum_{i=1}^d \alpha_T \log\left(T_{ri} \right),
\label{eqn:iplaTransformedUlogTerms}
\end{align}
where the $p_{ij}$ and the $T_{ri}$ are to be regarded as functions of the unconstrained
coordinates. 

To state the conditions, we need the following definitions:

\begin{definition*}[Lipschitz function]
	A function $f:X \rightarrow Y$ between two Banach spaces is said to
	be \textbf{Lipschitz continuous} if, for all $\bx_1, \bx_2 \in X$ we
	have, for some fixed constant $K$, that
	\begin{equation}
		\| f(\bx_1) - f(\bx_2)\|_Y \leq K\| \bx_1 - \bx_2\|_X.
	\label{eqn:Lipschitz}
	\end{equation}
	Any such $K$ is called a \emph{Lipschitz constant} for $f$ and when such
	a constant exists, we say that ``$f$ is Lipschitz".
\end{definition*}

\begin{definition*}[Strong convexity]
	A differentiable function $f:\R^n \rightarrow \R$ is said to
	be \textbf{strongly convex} if, for all $\bv, \bv' \in \R^n$ we
	have, for some fixed constant $\alpha$, that
	\begin{equation}
		\left\langle \bv - \bv', \, \nabla f(\bv) - \nabla f(\bv') \right\rangle
		\geq \alpha \| \bv - \bv' \|^2,
	\label{eqn:stronglyConvex}
	\end{equation}
	where the angle braces on the left hand side denote a Euclidean inner product.
	Strong convexity means, roughly, that the growth of $f$ is at least quadratic,
	but there are a host of equivalent conditions: see, for example, 
	\textcite{Zhou:2018dg}.
\end{definition*}

The proofs of convergence in \textcite{Akyildiz:2023nh} as applied to our 
problem then depend on the following assumptions:
\begin{description}
	\item[A1:] If the count data are held fixed, then $\nabla_{\btheta} U$ 
	is Lipschitz in $\btheta = (\bu, \btau)$ where $\bu$ encompasses
	of all the $\bu_j$. We prove this in Appendix~\ref{sec:SmoothnessProofs}.
	
	\item[A2:] The function $U(\bu, \btau)$ is strongly convex. 
	This assumption is violated for our problem and we provide a proof in 
	Appendix~\ref{sec:SmoothnessProofs}.
	
	\item[A3:] If we fix $N$, the number of particles (that is, random walkers) in 
	Eqns.~\eqref{eqn:AkyilidizSDEs} and \eqref{eqn:AkyilidizEulerMaruyama}
	of the IPLA, then the distribution from which we draw the initial condition
	should be such that 
	\begin{displaymath}
		\left( \btau_0, \bu^{1}_0/\sqrt{N}, \dots, \bu^{N}_0/\sqrt{N} \right)
	\end{displaymath}
	has bounded second moment. We have arranged for this to be true, as is 
	also shown in Appendix~\ref{sec:SmoothnessProofs}.
\end{description}

\section{Numerical results}
\label{sec:numericalResultsIPLA}
Here we report briefly on two numerical experiments, one with synthetic data 
generated by the model in Section~\ref{sec:generativeModel} and another with
the birdsong data from \textcite{Lewis:2021hj}. 

\subsection{Synthetic data}
\label{sec:syntheticData}
We generated synthetic count data as follows:
\begin{itemize}
	\item Fixed a number of notes $d = 5$, a number of aligned positions $n = 500$ and
	a number of recordings per bird $m = 8$. The values of $d$ and $m$ are similar to
	those in our real data, but we chose $n$ rather smaller, so the computations would 
	finish more quickly.
	
	\item Generated a $d \times d$ transmission matrix $T$ whose columns were drawn
	from a Dirichlet distribution with shape parameters $\alpha_i = \alpha_T = 1.0$
	for $i \in \{1, \dots, 5\}$. This corresponds to a uniform distribution over 
	$\Delta_5$.
	
	\item Generated note-usage probabilities for the tutors $\bp_j \in \Delta_5$ for 
	$j \in \{1, \dots, 500\}$ by drawing them from a Dirichlet distribution with 
	shape parameters $\alpha_i = \alpha_p = 0.5$. This favours $\bp_j$ similar to 
	observed data in that some of the entries are much, much smaller than others
	(see Section~\ref{sec:realData} for an overview of the real data).
	
	\item Generate note-usage counts $\bx_{j}$ for the tutors by drawing 
	$(x_{1j}, \dots, x_{5j})$ from the multinomial distribution with $m=8$ total
	counts and probabilities given by $\bp_j$.

	\item Use $T$ and the $\bp_j$ to compute note-usage probabilities for the pupils
	using the model $\bq_j = T \bp_j$.
	
	\item Generate note-usage counts $y_{ij}$ for the pupils by drawing from
	the multinomial distribution as sketched above.

\end{itemize}
The virtue of this approach is that we know the true value of $T$. 
Figure~\ref{fig:syntheticIPLA} summarises a numerical experiment to estimate
$T^\star$ using the IPLA applied to count data generated as above and suggests that 
the estimates, though sometimes biased, are reasonably accurate, despite the 
lack of strong convexity in the negative log posterior $U(\bu, \btau)$.

We also investigated this example for the possibility of multimodality in
the marginal likelihood by repeating the IPLA estimation of $T^\star$ with
$N = 512$ particles 200 times, initialising the algorithm differently each
time. We thus obtained, for each of the $d \times d$ entries $T^{\star}_{ij}$ in
the transmission matrix, a collection of 200 estimates and we examined the
distribution of these estimates, both visually (histograms are available in
Figure~S2 of the Supplementary Material) and with the dip test of unimodality developed
in \cite{Hartigan:1985yn}. This is a hypothesis test whose null hypothesis
is that the data are drawn from a distribution that has a continuous,
unimodal density. In all cases, the estimates for the  $T^{\star}_{ij}$
appeared sharply concentrated near a single value and the dip test 
suggested little evidence of deviation from unimodality ($p > 0.7$ in all cases).

\begin{figure}
	\begin{center} 
		% See https://tex.stackexchange.com/questions/16942/difference-between-textwidth-linewidth-and-hsize
		\includegraphics[width=\columnwidth]{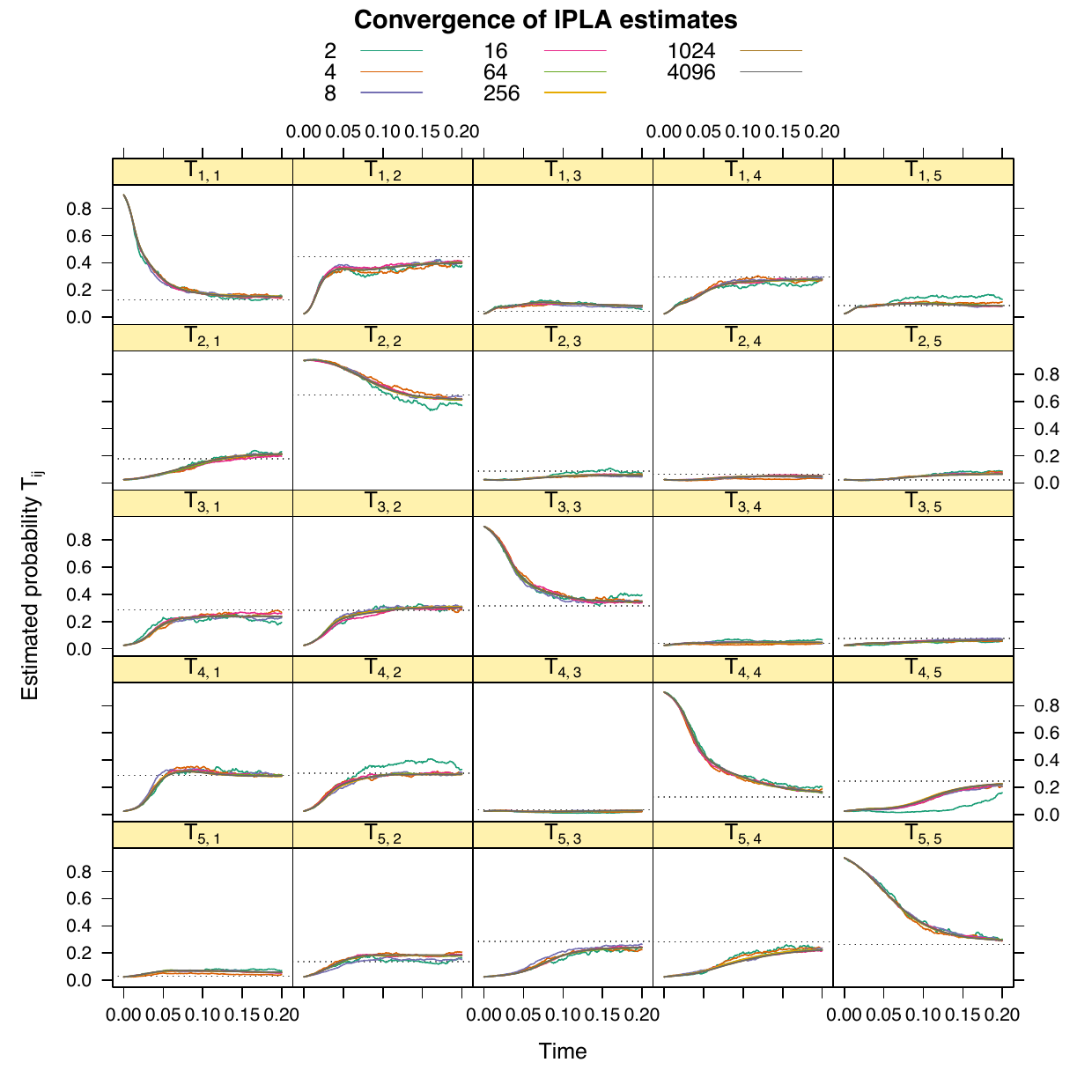}
	\end{center}
	\caption{ 
		Estimates of the elements $T^{\star}_{ij}$ of the transmission matrix 
		produced by the IPLA applied to synthetic data with
		$N \in \{2, 4, 8, 16, 64, 256, 1025, 4096\}$. Each panel shows the 
		time-course of the estimates for a single matrix element with 
		separate curves for each value of $N$ (see key at top). 
		Dashed horizontal lines indicate true values.
	}
	\label{fig:syntheticIPLA}
\end{figure}

\subsection{Data from recorded birdsong}
\label{sec:realData}
Building on work reported by \textcite{Lewis:2021hj}, we aligned note
sequences derived from separately-recorded songs for pupil-tutor pairs and
focused attention on positions where fewer than half of each bird's aligned
songs contained gap characters. Although \cite{Lewis:2021hj} defined 16
note types, many of them are sung very infrequently (see
\Cref{fig:noteUsageHeatmap} and Table~S1 in the Supplementary Material),
thus we amalgamated the eight rarest types into a single ``Other" category
to produce a total of nine note types. Next we used the IPLA to estimate a note
transmission matrix $T^\star$, producing results illustrated in
\Cref{fig:birdsongIPLA} and \Cref{fig:birdsongTransMat}. For the most
commonly-sung notes, $T^\star$ is very close to the identity, reflecting the
faithfulness of song transmission: pupils typically learn to reproduce
their tutors' notes accurately. For the three rarest note types, including the
amalgameted note class 9, pupils appear sometimes to replace the tutor's note 
with a more commonly sung one.

\begin{figure}
	\begin{center} 
		\includegraphics[width=0.85\columnwidth]{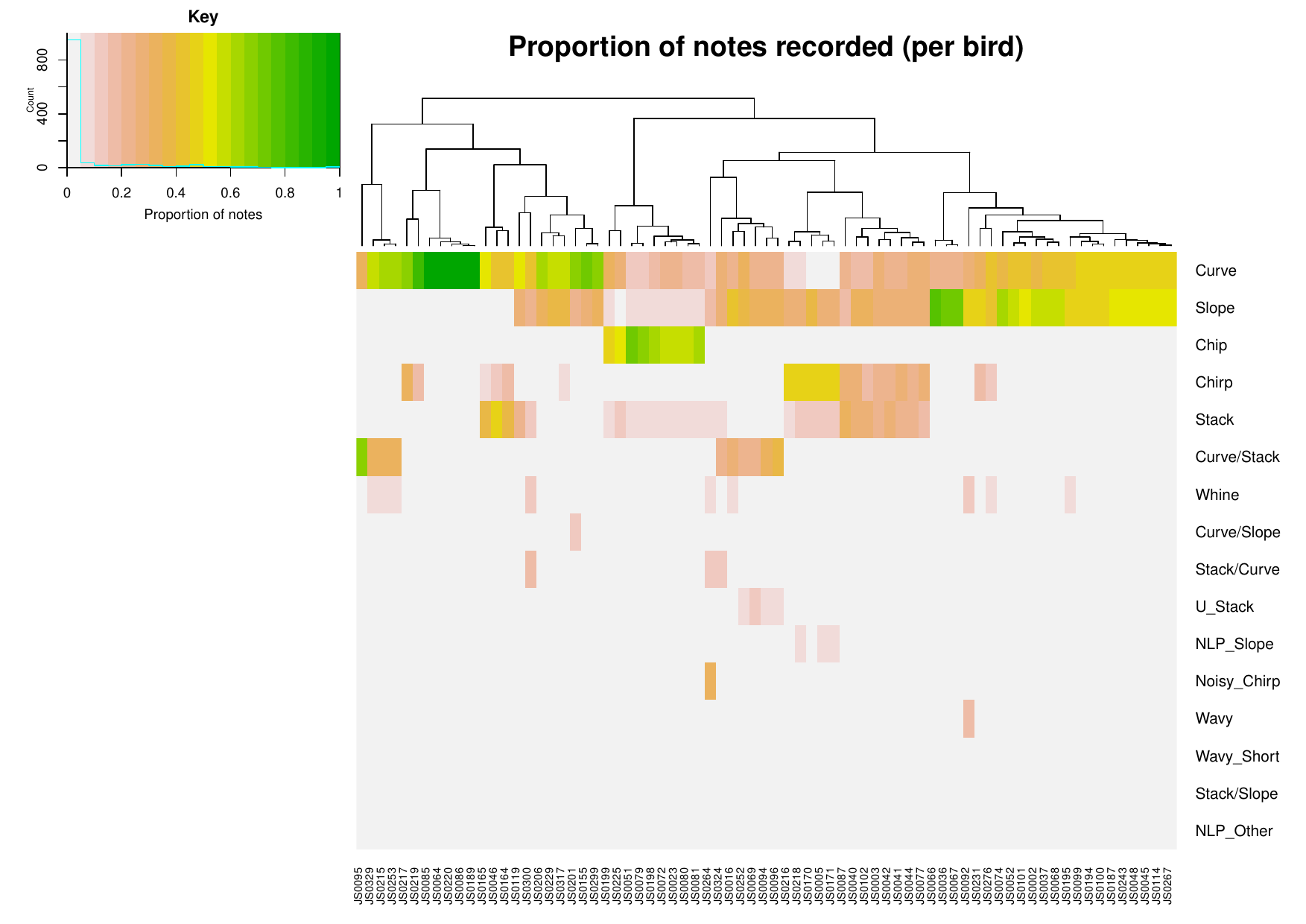}
	\end{center}
	\caption{ 
		A heatmap of note-usage proportions. Columns correspond to individual 
		birds, while rows correspond to note types. Note types are listed in order
		of decreasing frequency-of-use, so Curve notes are the most commonly-sung,
		followed by Slope and Chip notes, in that order.
		Pixels in the heatmap are coloured according to the
		proportion of a given bird's notes that were of a given
		type. The eight most commonly used note types account for more than 97.7\% of
		all notes recorded. Further details of the distribution of note
		types are available in Table~S1 in the Supplementary Material.
	}
	\label{fig:noteUsageHeatmap}
\end{figure}

\begin{figure}
	\begin{center} 
		\includegraphics[width=\columnwidth]{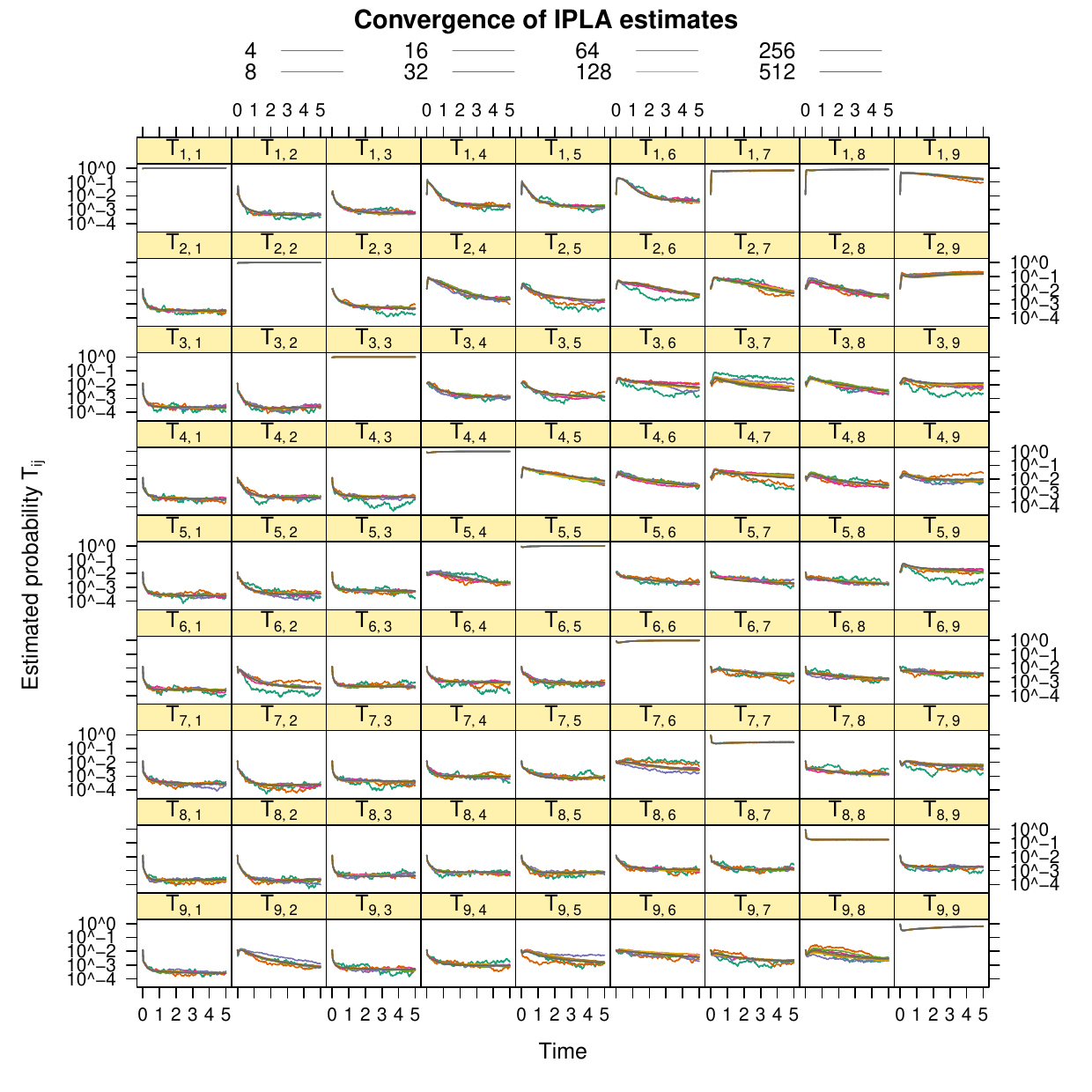}
	\end{center}
	\caption{ 
		Estimates of the elements $T^{\star}_{ij}$ produced by applying the IPLA
		to a nine-note version of the data from 
		\protect\textcite{Lewis:2021hj} with  
		$N \in \{4, 8, 16, 32, 64, 128, 256, 512\}$. Each panel shows the 
		time-course of the estimates for a single matrix element with 
		separate curves for each value of $N$ (see key at top). Note
		that the matrix elements plotted are on a log scale and the 
		majority are very small.
	}
	\label{fig:birdsongIPLA}
\end{figure}

\begin{figure}
	\begin{center} 
		% See https://tex.stackexchange.com/questions/16942/difference-between-textwidth-linewidth-and-hsize
		\includegraphics[width=\columnwidth]{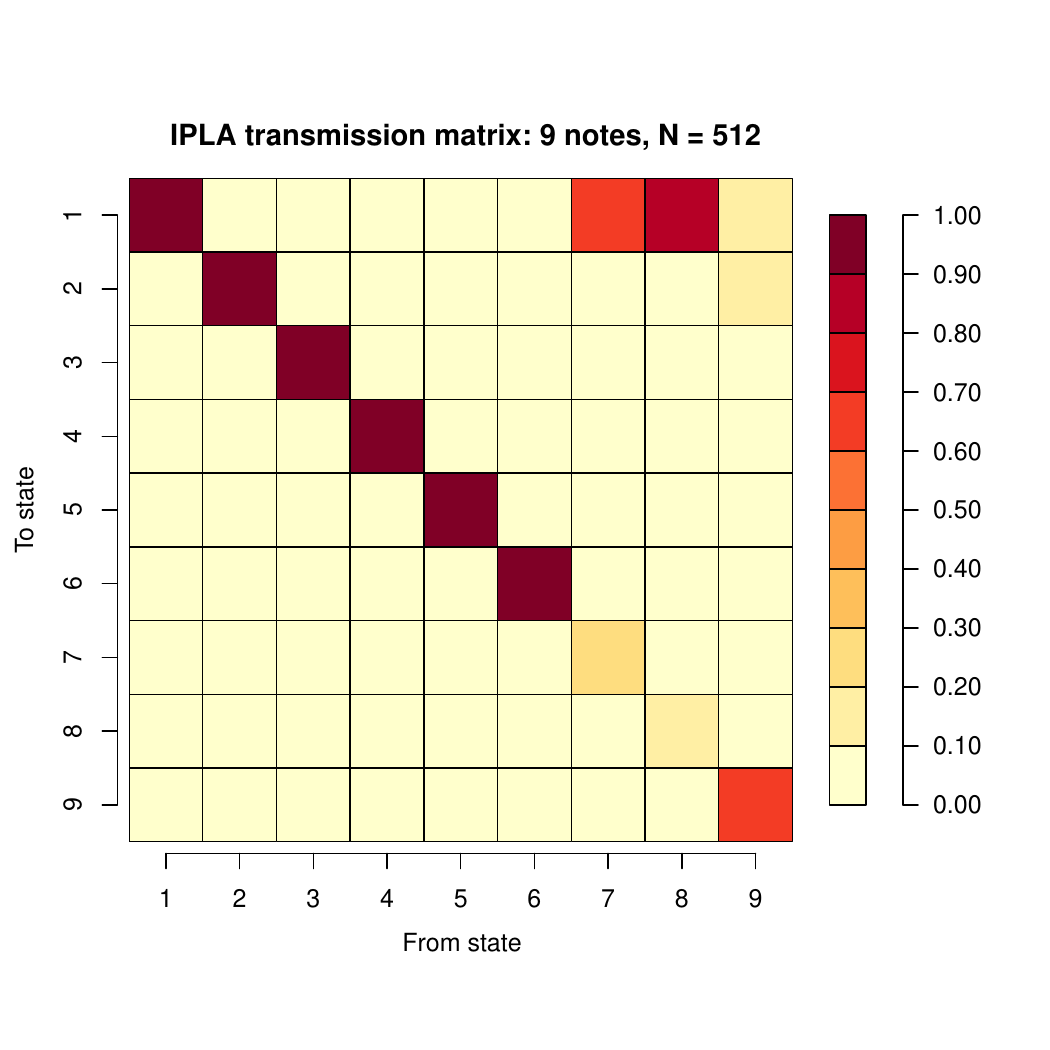}
	\end{center}
	\caption{
		A heatmap illustrating the sizes of the matrix elements 
		$T^{\star}_{ij}$ produced by 
		applying the IPLA to a nine-note version of the data from 
		\protect\textcite{Lewis:2021hj} with $N = 512$.
	}
	\label{fig:birdsongTransMat}
\end{figure}

\subsection{Transmission Matrix}
\label{sec:trans_mat}

We fitted a transmission matrix for the eight most commonly used notes,
along with a ninth, amalgamated note-type for the most rarely-sung notes,
using IPLA (\Cref{fig:birdsongTransMat}). $T^{\star}_{ij}$ is the estimated
probability that a pupil sings note $i$, given the tutor sings note $j$ in
a given position. The diagonal elements corresponding to the six most
commonly-sung notes are very close to one, suggesting that the learning
fidelity is high overall. The exceptions are notes 7, 8 and 9, which have
probabilities of 0.68, 0.8 and 0.3, respectively, of changing to one of the
most commonly-sung notes, 1 or 2. Thus, we expect only small changes in the
note sequences per generation, unless the tutor often sings notes of type
7, 8 or one of the rare notes aggregated into class 9. 

\subsection{Evidence}
\label{sec:evidence}

\begin{figure}
    \centering  \includegraphics[width=\linewidth]{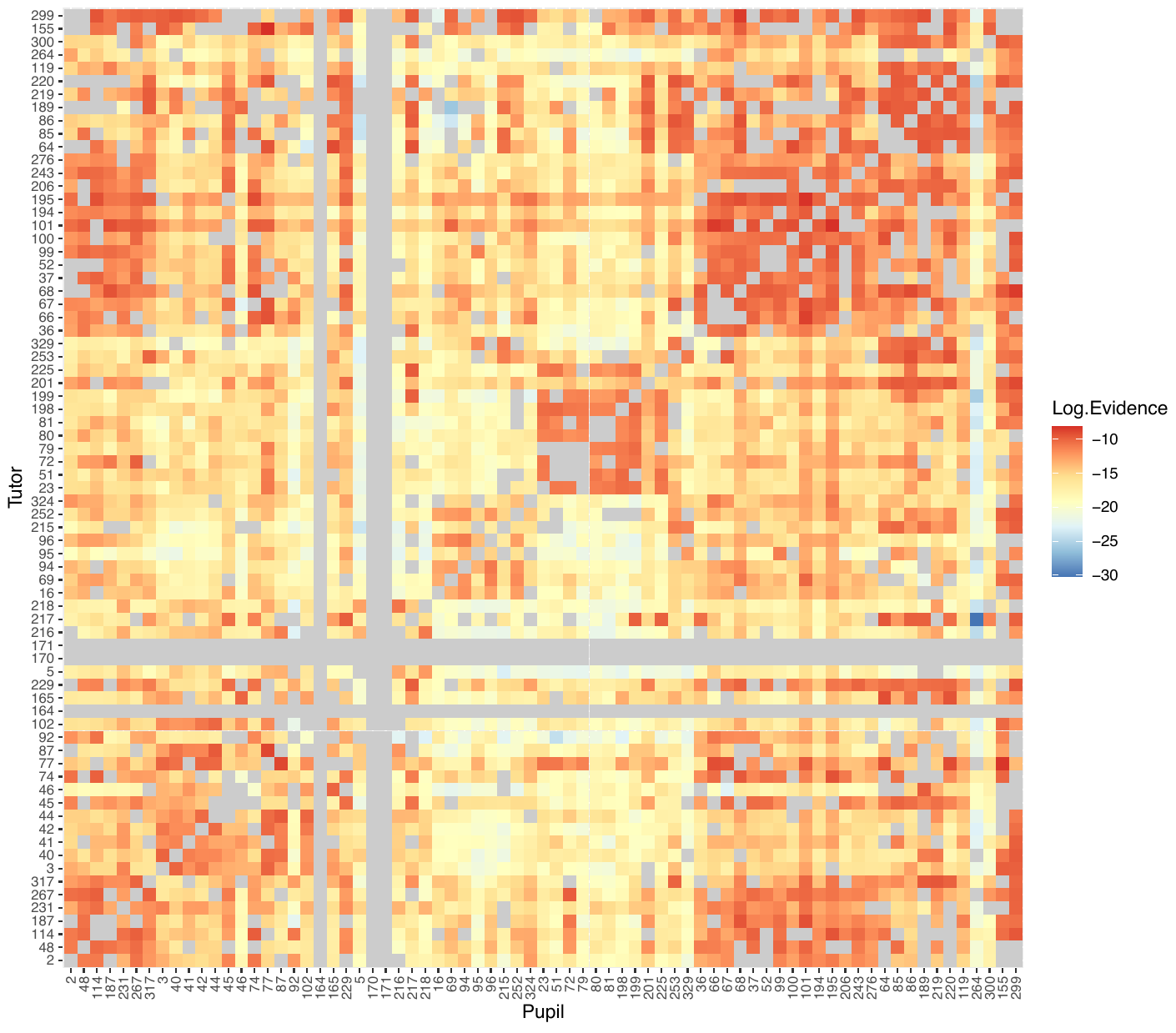}
    \caption{Heatmap of log evidence per site estimated via bridge sampling. The bird ID’s are ordered first by lineage and then, within the same lineage, numerically. The pixel for a given  (pupil, tutor) pair corresponds to the log evidence per site of a pupil-tutor relationship based on all positions in an alignment of the two birds' songs. Grey pixels indicate cases where the log evidence could not be estimated, either because the birds' songs could not be aligned or because the associated alignments had too few aligned positions, given our threshold on song count. The diagonal is grey because birds cannot tutor themselves.}
    \label{fig:logevidence_mat}
\end{figure}

We used bridge sampling (\cite{Gelman:1998ta, Gronau:2017pd}) as implemented in the
\texttt{bridgesampling} R package of \cite{Gronau:2020fb}, in combination with 
the HMC sampler Stan (\cite{Carpenter:2017rm}) to estimate the probability that any one bird tutored another. That is, for each potential (pupil, tutor) pair $(a,b)$,  
we aligned the birds' songs and then estimated
\begin{align}
	E_{ab} 
		& =  \prod_j  \int P(\bx_j \, | \, \bp_j ) P(\by_j \, | \, \bp_j, T^\star ) 
			P(\bp_j) \, d\bp_j \nonumber \\
		& = \prod_j  \int \Mult(\bx_j \, | \, \bp_j ) \Mult(\by_j \, | \, T^
			\star \bp_j )\, \Dir( \bp_j \, | \, \balpha_p ) \, d\bp_j
\label{eqn:singlePairEvidence}
\end{align}
where here the index $j$ ranges over all aligned positions for the putative 
pupil-tutor pair, $\bx_j$ is the putative tutor's note usage counts at position $j$,
$\by_j$ are those of the putative pupil and $\balpha_p$ is as in Eqn.~\eqref{eqn:shapeParamsForPriors}. 
We will refer to this quantity 
as the \emph{evidence} for the pair and,
as it depends on the number $n_{ab}$ of aligned positions, will work with a related quantity
\begin{equation}
	e_{ab} = \frac{1}{n_{ab}} \log( E_{ab} ), 
\label{eqn:logEvidencePerSite}
\end{equation}
the mean log evidence per aligned site, to permit comparisons between different
putative pupil-tutor pairs.

In addition to bridge sampling, we also estimated the evidence with a harmonic mean estimator (see \cite{Robert:2009kc} and \cite{Gronau:2017pd}).
The results typically agreed well with those from bridge sampling (differences of less that 2\% in the estimates of $e_{ab}$ for 94\% of putative pairs), but were more computationally expensive to obtain and had a larger variance than those from bridge sampling, so we haven't reported them. 

\Cref{fig:logevidence_mat} shows log evidence per aligned site as specified by Eqns.~\eqref{eqn:singlePairEvidence} and \eqref{eqn:logEvidencePerSite}. The blocks of high log evidence near the diagonal align broadly  with the social lineages, suggesting that our method can pick up some social relationships among birds. However, the considerable variation within these blocks suggest that this is not always the case. Furthermore, the evident symmetry of the matrix suggests that our method may struggle to identify the direction of transmission. We were unable to compute the evidence for some pairs, either because we were unable to align their songs or because there were insufficiently many aligned positions. This occurs when the song sequences are too dissimilar (e.g. birds sing completely different notes most of the time).  

\begin{figure}
    \centering   
    \includegraphics[width=0.85\linewidth]{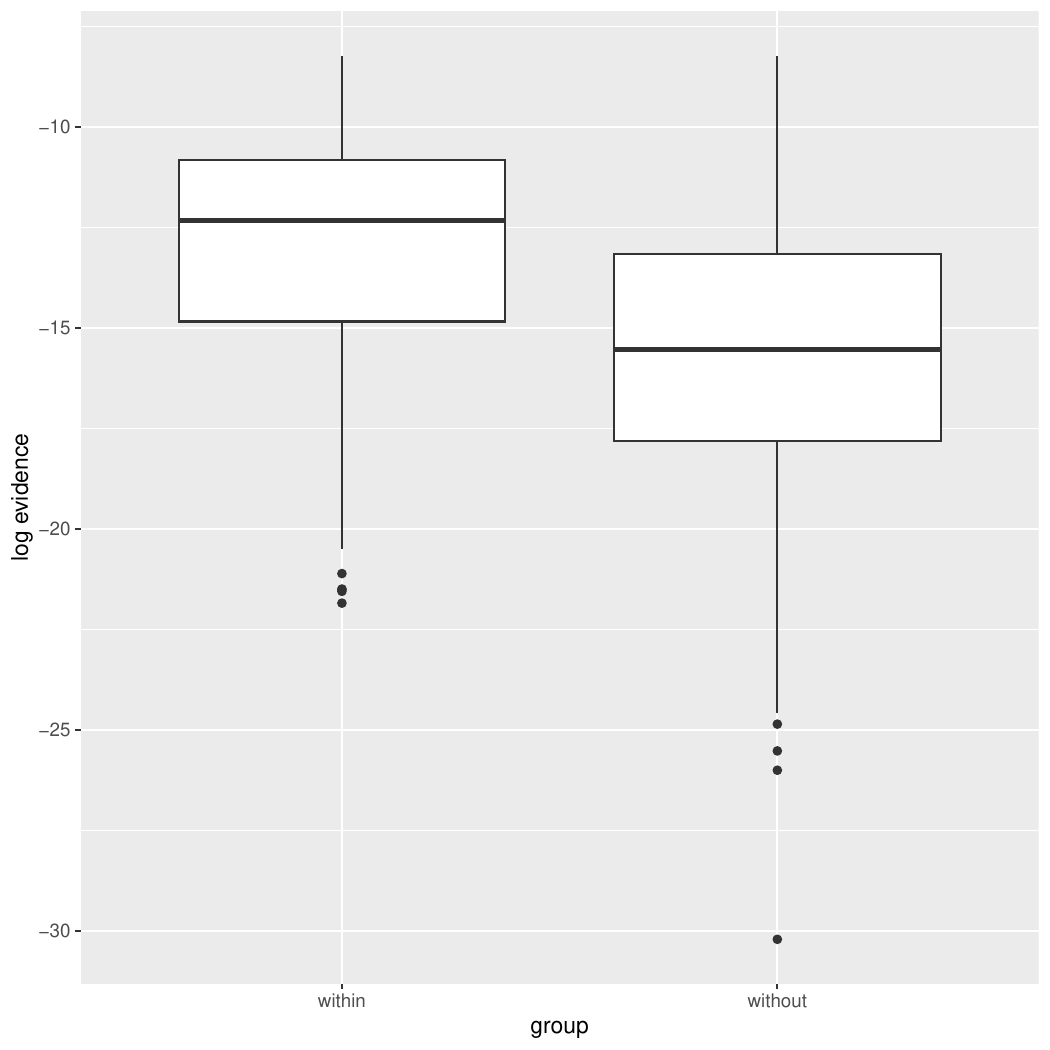}
    \caption{
    	A boxplot of log evidence values for putative pupil-tutor
    	relationships, comparing values for pairs from within the same
    	lineage with those for pairs from different lineages. Same lineage
    	pairs have a higher median log evidence, but there is some overlap
    	in the two interquartile ranges.
    }
    \label{fig:logevi-lineage-bp}
\end{figure}

First, we considered whether our method can accurately discern the direction of song tutoring within the 57 pupil-tutor pairings provided in the original data. We compared the log evidence for the true pairings with those in which the tutor and pupil swapped roles and performed a paired Wilcoxon signed-ranked test (0.05 significance) to determine whether the true pairings had a significantly higher log evidence than the reverse. Our results were insignificant ($p$-value 0.3217, V = 581), thus we could not correctly determine the direction of song tutoring. This is consistent with our informal observation of the symmetry of \Cref{fig:logevidence_mat}. 

Next, we considered whether we could use the log evidence results to deduce pupil-tutor pairings, regardless of the direction. For every pupil in the original dataset, we took our predicted tutor to be the bird with the highest log evidence. We compared this against 57 known pupil-tutor pairs and found the method to be correct only 2 times. This is also consistent with \Cref{fig:logevidence_mat} where the columns, representing the pupils, do not typically have a single pixel showing substantially higher log evidence than the rest. Instead, there are several pixels showing high evidence, indicating several potential song tutors. 

Moving beyond individual pairings, we investigated whether our method could distinguish between first and second generation pupil-tutor relationships. For all cases where the comparison was possible, we compared the log evidence for the pupils and their tutors with that for the pupils and their grand tutors (tutor of their tutor). As above, we compared the two sets of values using a paired Wilcoxon signed-ranked test (0.05 significance). The results were insignificant (V = 166, $p$-value = 0.2221). 

Finally, we investigated whether social lineages could be detected by our method. First, we classified each putative pupil-tutor pair according to whether both birds were part of the same lineage or not. We then compared the two sets of log evidence values---those from within-lineage pairs and those from cross-lineage pairs---graphically: see \Cref{fig:logevi-lineage-bp}. Although the log evidence tended to be higher for the within-lineage pairs, there was some overlap in the inter-quartile ranges and outliers in both sets.  We also tried to use log evidence to predict a bird's lineage by considering all putative (tutor, pupil) pairs in which the given bird played a role, associating the pair with the lineage of the given bird's partner. We then assigned the bird to the lineage whose pairs had the highest mean log evidence.

 We noticed that the two Dark Pink birds (JS299, JS155) had very high log evidence values with most other birds, excluding the cases where the evidence could not be estimated. These two birds to sing mostly A's, interspersed with some B's. Since A is the most commonly sung note, this could explain why the birds gave unusually high log evidence values. When we remove these two birds from the analysis, then out of the 70 remaining birds, the approach sketched above assigned 39 to the correct lineage. This corresponds to a predictive accuracy of $\approx 55 \%$, whereas a naive classifier using the most common class would have $\approx 21 \%.$ We also evaluated our predictions using the Adjusted Rand Index (ARI), which measures the similarity between two clusterings while accounting for chance agreement. Unlike classification accuracy, ARI captures the overall grouping structure rather than point-by-point label matches. Our predictions achieved an ARI of 0.2763, indicating performance better than random labelling (ARI = 0) but well short of a perfect match (ARI = 1), consistent with the observed classification accuracy of 55\%.
Both the ARI and percentage accuracy are consistent with \Cref{fig:logevidence_mat}, where the blocks along the diagonal, which align broadly with the lineages, are not regions of consistently high log evidence. Thus our model made modestly accurate predictions of social lineage. This is consistent with \Cref{fig:logevidence_mat}, where the blocks along the diagonal, which align broadly with the lineages, are not regions of consistently high log evidence. 

\section{Conclusion}
\label{sec:conclusion}

Using Java sparrows as a model organism, we have presented the first sequence evolution model of birdsong transmission. By applying the IPLA of \cite{Akyildiz:2023nh}, we estimated a note transmission matrix for the entire population. We regard this as a first approximation: the variation in note-usage pattern evident in \Cref{fig:noteUsageHeatmap} might suggest the use of a Bayesian hierarchical model in which the prior for the note usage probabilities varies from lineage to lineage. As a further simplification, we lumped the eight rarest note-types into a single catch-all category, numbered 9 in our scheme. 

Generally, our estimated transmission matrix is consistent with the well-established faithfulness of song transmission: most diagonal elements $T^{\star}_{jj}$ are very close to 1.0. The three exceptions include the two comparatively rarely-sung note classes seven and eight---Whine and Curve/Slope, respectively---which pupils sometimes convert to the most commonly-sung note type, Curve. The catch-all class of very rare notes, nine, shows similar conversions to the two most commonly-sung note classes. This may be an artifact of the alignment process, during which phrases containing rarely-sung notes may be aligned with similar phrases that lack these notes. Ideally, one would incorporate the probabilistic models underpinning alignment along with those representing transmission into one grand estimation problem, but this would be computationally intractable.

Using the estimated note transmission matrix, we computed bridge sampling estimates for the log evidence for all potential pupil-tutor relationships. Although our model was not able to detect pairwise relationships (e.g. correctly match pupils to tutors) or the direction of song tutoring, we did achieve modest results in assigning pupils to their correct lineages. The key issue appears to be the faithfulness of song transmission and the comparative sparsity of data, particularly the song lengths. In bioinformatic contexts, where mistakes during, for example, DNA replication are also very rare, researchers are typically working with kilobases or even megabases worth of data. For example, the median length of a human gene is 24 kilobases (\cite{Scherer:2008ph}) and the human genome is $\approx $ 3 billion base pairs long. Even with a very small mutation rate of $0.5 \times 10^{-9}$ (\cite{Scally:2016pf}), there are still a substantial number of variants to study. By contrast, our songs are all less than 90 notes long and their collections of aligned phrases are even shorter, making observed note changes far rarer. 

Furthermore, our sequence evolution model lies downstream of the preprocessing of the song recordings conducted by \cite{Lewis:2023gp,Lewis:2021hj}. There are two main preprocessing steps; namely the segmentation of notes from the song spectrograms and the classification of notes into note classes. The recordings were taken in controlled conditions with minimal background noise, so it was relatively simple to manually identify the start and end points of notes. After segmentation, \cite{Lewis:2021hj} manually classified the notes using a suite of acoustic characteristics. They repeated the classification with a second observer, achieving a consensus of $97.5 \%.$ Although our model is dependent on the work of \cite{Lewis:2021hj}, we are confident that this was done as accurately as anything in the field. Modern tools such as TweetyNet (\cite{Cohen:2022xn}), which uses a neural network to preprocess song recordings from a range of birds could increase the pool of available data substantially and expand the scope for sequence evolution models to other species of songbirds. 

Additionally, our sequence evolution model relies on the multiple sequence alignments of the pupil and tutor songs. We produced these alignments using profile hidden Markov models, which treat transmission at all sites as independent of each other. Arguably this assumption doesn't hold as birds' songs are known to contain motifs. For example, if a bird sings ``ACACAC'', then the probability of singing a C depends on whether the bird has sung an A immediately before. The same challenge is found in bioinformatics where genetic data may contain a variety of motifs and repeated elements. Despite this problem, researchers have still been able to derive useful biological insights from their alignments (\cite{Reyes:2017nu, SkewesCox:2014rp}). In our case, we inspected the alignments to check that they identified regions of consensus: see \Cref{fig:alignment} for an example. Whilst relaxing the independent sites assumption may be more accurate, it would involve formulating a new, complex approach to multiple sequence alignment that would be beyond the scope of this paper. 

To establish that our approach would perform better given enough aligned motifs, we could in principle repeat our analysis with synthetic data. Currently the median number of aligned positions per pair is 18, but, using the profile hidden Markov models
of \cite{Kwong:2025dv}, we could in principle generate songs with arbitrarily long motifs, but the computations reported here already lie near the limit of our resources. 

\noindent \textbf{Acknowledgements}  \newline \noindent
The authors acknowledge the assistance provided by the Research IT team and the use of the Computational Shared Facility at The University of Manchester. 
We also acknowledge Dr.~R.~Tucker~Gilman who was Kwong's co-supervisor. 
Finally, we thank Dr.~Rebecca~N.~Lewis, Professor~Masayo~Soma (University of Hokkaido) and the members of her lab, for giving us access to their data and sharing their expertise.

\noindent \textbf{Funding} \newline \noindent
Kwong was supported by EPSRC Doctoral Training Programme grant EP/W524347/1.

\noindent \textbf{Supplementary Material}  \newline \noindent
A schematic diagram illustrating the coordinate systems used in the paper;
a set of histograms illustrating the lack of multimodality in the IPLA
estimates of the entries $T^{*}_{ij}$ in the transition matrix from
Section~\ref{sec:syntheticData} and a table listing the note types
appearing in our data along with the frequencies with which they are sung.

\newpage
\appendix
\section{Smoothness and regularity conditions: proofs}
\label{sec:SmoothnessProofs}
In this section we prove the following:

\begin{prop} \label{prop:transformedDensity}
	When one transforms the negative log posterior density whose $\bp$ and $T$-dependent
	terms are listed in Eqn.~\eqref{eqn:iplaUlogTerms} to the unconstrained coordinates,
	the terms that depend on $\bu$ and $\btau$ are
	\begin{align}
		U(\bu, \btau) 
			& = -\sum_j \sum_{i=1}^d 
				(\alpha_p + x_{ij}) 
					\log\left(\Phi^{-1}_i(\Lambda^{-1}(\bu_j))\right) \nonumber \\
			& \qquad -\sum_j \sum_{i=1}^d y_{ij} 
				\log\left(\sum_{k=1}^d \Phi^{-1}_i(\Lambda^{-1}(\btau_k))
					\Phi^{-1}_k(\Lambda^{-1}(\bu_j))\right) \nonumber \\
			& \qquad - \sum_{k=1}^d \sum_{i=1}^d 
				\alpha_T \log\left(\Phi^{-1}_i(\Lambda^{-1}(\btau_k))\right)
		\label{eqn:iplaUlogTermsUnconstrained}
	\end{align}
	where $\Lambda^{-1}$ and $\Phi^{-1}$ are the inverses of the functions 
	$\Lambda$ and $\Phi$ defined in Table~\ref{tbl:coordinateChanges}, 
	$\Phi^{-1}_i(\bphi)$ is the $i$-th component of $\Phi^{-1}(\bphi)$, 
	$\bu_j \in \R^{d-1}$
	is the vector of unconstrained coordinates associated with the tutor's
	note-usage probabilities at alignment position $j$ and $\btau_k$ is the
	vector of unconstrained coordinates associated with the $k$-th column
	of the transition matrix.
\end{prop}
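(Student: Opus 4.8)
The plan is to read the statement as a direct application of the change-of-variables formula for probability densities. Writing $\bp = \Phi^{-1}(\Lambda^{-1}(\bu))$ for the composite map that carries a vector of unconstrained coordinates back to the simplex (and similarly $\bT_{\star k} = \Phi^{-1}(\Lambda^{-1}(\btau_k))$ for each column of $T$), the density in the new coordinates equals the old density composed with this map and multiplied by the absolute value of the Jacobian determinant of the map. Because $U$ is a \emph{negative log} density, this multiplicative Jacobian factor contributes an \emph{additive} term $-\log|\det J|$ to $U$. The whole proposition therefore reduces to three sub-steps: (i) computing $\det J$; (ii) absorbing its logarithm into the coefficients already present in \eqref{eqn:iplaUlogTerms}; and (iii) writing the simplex coordinates explicitly as functions of $\bu$ and $\btau$.

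For step (i) I would exploit the fact that the transformation acts independently on each $\bp_j$ and on each column $\bT_{\star k}$, so that $\det J$ factorises into a product of single-simplex Jacobians, one for each such vector. For a single vector the map is the composition $\Phi^{-1}\circ\Lambda^{-1}$, and its Jacobian splits into two pieces. The logit part is diagonal: differentiating \eqref{eqn:invLogitTransform} gives $d\phi_j/du_j = \phi_j(1-\phi_j)$, contributing $\prod_{j=1}^{d-1}\phi_j(1-\phi_j)$. The stick-breaking part, read off from \eqref{eqn:stickBreakingPhiToP}, is lower-triangular because $p_i$ depends only on $\phi_1,\dots,\phi_i$; its determinant is therefore the product of the diagonal entries $\partial p_i/\partial\phi_i = \prod_{k<i}(1-\phi_k)$. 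Regrouping the combined product is then clean: the factor $\prod_{j=1}^{d-1}(1-\phi_j)$ from the logit piece is exactly $p_d$, while the stick-breaking diagonal multiplied by $\prod_{j=1}^{d-1}\phi_j$ reassembles $p_1\cdots p_{d-1}$ through \eqref{eqn:stickBreakingPhiToP}. Hence the single-simplex Jacobian equals $\prod_{i=1}^d p_i$, and taking the product over all $\bp_j$ and all columns of $T$ recovers the factor quoted in \eqref{eqn:iplaDensityJacobinaFactor}.

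Step (ii) is then pure bookkeeping. The additive term $-\log|\det J| = -\sum_j\sum_{i=1}^d \log(p_{ij}) - \sum_{r=1}^d\sum_{i=1}^d \log(T_{ri})$ increments by one the coefficient of each $\log(p_{ij})$ and each $\log(T_{ri})$ in \eqref{eqn:iplaUlogTerms}, converting the Dirichlet exponents $(\alpha_p-1+x_{ij})$ and $(\alpha_T-1)$ into $(\alpha_p+x_{ij})$ and $\alpha_T$ while leaving the multinomial log-sum term untouched; this produces the intermediate form \eqref{eqn:iplaTransformedUlogTerms}. For step (iii) I would substitute $p_{ij}=\Phi^{-1}_i(\Lambda^{-1}(\bu_j))$ and $T_{ik}=\Phi^{-1}_i(\Lambda^{-1}(\btau_k))$ directly, taking care that the term $\sum_r T_{ir}p_{rj}$ inside the multinomial becomes $\sum_k \Phi^{-1}_i(\Lambda^{-1}(\btau_k))\,\Phi^{-1}_k(\Lambda^{-1}(\bu_j))$ under the relabelling $r\to k$. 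This lands on \eqref{eqn:iplaUlogTermsUnconstrained}.

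I expect the only genuine obstacle to be step (i), and within it the determinant of the stick-breaking map: one must recognise its triangular structure to avoid a brute-force cofactor expansion, and then regroup the resulting $\phi$ and $(1-\phi)$ factors carefully so that they reassemble the $p_i$. (Alternatively, this computation can simply be cited from Section~\ref{sec:changeOfCoords}, where the factor \eqref{eqn:iplaDensityJacobinaFactor} is established, in which case the proposition follows immediately from the coefficient shift and the substitution in steps (ii) and (iii).) Everything else is careful index management.
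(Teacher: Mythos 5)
Your proposal is correct and follows essentially the same route as the paper: change of variables for the density, block-diagonal Jacobian factorising over the $\bp_j$ and the columns of $T$, triangularity giving a determinant of $\prod_{i=1}^d p_i$ per block, and the resulting $-\log|\det J|$ shifting the Dirichlet exponents from $\alpha-1$ to $\alpha$. The only cosmetic difference is that you take the product of the two diagonal factors ($\phi_i(1-\phi_i)$ and $p_i/\phi_i$) separately, whereas the paper first forms the product matrix $J_\bphi J_\bu$ and reads off its diagonal $p_i(1-\phi_i)$; the arithmetic is identical.
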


\begin{theorem} \label{thm:A1}
	The gradient of $U(\bu,\btau)$ is Lipschitz and so 
	assumption A1 of Akylidiz et al. is satisfied.
\end{theorem}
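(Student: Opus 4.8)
The plan is to prove the stronger statement that the Hessian $\nabla^2 U(\bu, \btau)$ is bounded in operator norm, uniformly over the whole unconstrained space, with the count data held fixed. Since $U$ is $C^2$ there---every coordinate change involved is smooth and the inverse-logit map $\Lambda^{-1}$ takes values in the open interval $(0,1)$, so no logarithm is ever evaluated at $0$---a uniform bound $\|\nabla^2 U\| \leq M$ together with the fundamental theorem of calculus applied to $\nabla U$ along the segment joining any two points gives $\|\nabla U(z_1) - \nabla U(z_2)\| \leq M\,\|z_1 - z_2\|$, which is exactly A1.

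The first step is to observe that each log-probability appearing in Eqn.~\eqref{eqn:iplaUlogTermsUnconstrained}, written in unconstrained coordinates, has bounded first and second derivatives. Combining the stick-breaking inverse \eqref{eqn:stickBreakingPhiToP} with the inverse logit \eqref{eqn:invLogitTransform} gives, for a single probability vector,
\[
	\log\left(\Phi^{-1}_i(\Lambda^{-1}(\bu))\right)
		= \log \sigma(u_i) + \sum_{k=1}^{i-1} \log\left(1 - \sigma(u_k)\right),
\]
where $\sigma(u) = e^u/(1+e^u)$. Each summand depends on a single coordinate, and a direct computation gives $\tfrac{d}{du}\log\sigma(u) = 1 - \sigma(u)$, $\tfrac{d}{du}\log(1-\sigma(u)) = -\sigma(u)$, with second derivative $-\sigma(u)(1-\sigma(u))$ in both cases; all lie in bounded intervals (contained in $[-1,1]$ and $[-\tfrac14,0]$ respectively) for every $u \in \R$. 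Hence the Hessian of any such log-probability with respect to $\bu$ (or $\btau$) is diagonal with uniformly bounded entries. Consequently the first and third terms of $U$ are immediate: each is a finite linear combination, with fixed non-negative coefficients $(\alpha_p + x_{ij})$ and $\alpha_T$, of these log-probability terms, so their contribution to $\nabla^2 U$ is bounded.

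The delicate term is the middle one, $-\sum_j\sum_i y_{ij}\log q_{ij}$ with $q_{ij} = \sum_k T_{ik}\,p_{kj}$, because naive differentiation produces factors of $1/q_{ij}$ that blow up as $q_{ij}\to 0$. I expect this to be the crux of the argument. The resolution is to recognise a log-sum-exp structure: writing $a_k = \log T_{ik} + \log p_{kj}$---each a smooth function of the unconstrained coordinates with bounded gradient and (block-diagonal, hence bounded) Hessian by the previous step---we have $\log q_{ij} = \log\sum_k e^{a_k}$. The chain rule then yields
\[
	\nabla^2 \log q_{ij} = (Da)^\top \bigl(\mathrm{diag}(s) - s s^\top\bigr)(Da) + \sum_k s_k\,\nabla^2 a_k,
\]
where $s$ is the softmax vector with entries $s_k = e^{a_k}/\sum_m e^{a_m} \in [0,1]$ and $Da$ is the Jacobian of $(a_1,\dots,a_d)$. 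The singular factors $1/q_{ij}$ have been absorbed into the weights $s_k$, which are automatically bounded: $\mathrm{diag}(s) - ss^\top$ is positive semidefinite with trace $1 - \|s\|^2 \leq 1$, hence operator norm at most $1$; $Da$ has bounded entries; and each $\nabla^2 a_k$ is bounded. Therefore $\nabla^2 \log q_{ij}$ is bounded in operator norm.

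Finally I would assemble the pieces. Because the data are finite---finitely many alignment positions $j$ and finitely many note types $d$---the full Hessian $\nabla^2 U$ is a finite sum of the bounded contributions above, with finite coefficients $x_{ij}$, $y_{ij}$, $\alpha_p$, $\alpha_T$. Its operator norm is thus bounded by a constant $M$ depending only on the data and on $d$, which establishes that $\nabla U$ is Lipschitz. The only real obstacle is the middle term, and specifically making rigorous that the apparent singularity from $\log q_{ij}$ is harmless; once the log-sum-exp reformulation is in place, the remaining estimates are routine bounds on sigmoids and on the softmax Hessian.
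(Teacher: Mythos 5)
Your proof is correct, and its overall strategy coincides with the paper's: decompose $U$ into the pure-$\bu$, pure-$\btau$ and mixed $\log q_{ij}$ terms, show each has a uniformly bounded Hessian, and conclude via the standard bounded-Hessian-implies-Lipschitz-gradient argument (the paper's Lemmas~\ref{lemma:sumOfLipschitz} and~\ref{lemma:bddHessian}). For the pure log-probability terms your single-coordinate decomposition $\log p_i = \log\sigma(u_i) + \sum_{k<i}\log(1-\sigma(u_k))$ is just a rearrangement of the paper's computation of $\nabla_\bp f\, J_\bphi J_\bu$; both yield a diagonal Hessian with entries $-\phi_j(1-\phi_j)\cdot(\text{const})$, bounded by $\tfrac14$ times a fixed coefficient sum. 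Where you genuinely diverge is the crux, the $\log q_{ij}$ term. The paper (Lemma~\ref{lemma:qTermsLipschitzGradient}) computes all first and second partial derivatives explicitly and verifies boundedness case by case, with the singular factors $1/q_i$ controlled by the separate observation (Lemma~\ref{lemma:qRatioBdd}) that $0 \leq T_{ik}p_k/q_i \leq 1$. You instead recognise $\log q_{ij}$ as a log-sum-exp of the smooth quantities $a_k = \log T_{ik} + \log p_{kj}$ and invoke the softmax Hessian identity $\nabla^2\log\sum_k e^{a_k} = (Da)^\top(\mathrm{diag}(s)-ss^\top)(Da) + \sum_k s_k\nabla^2 a_k$; your softmax weights $s_k = T_{ik}p_{kj}/q_{ij}$ are exactly the ratios of the paper's Lemma~\ref{lemma:qRatioBdd}, so the underlying mechanism taming the singularity is identical, but your packaging replaces the paper's multi-case verification of Eqns.~\eqref{eqn:partialQRatioWrtUk}--\eqref{eqn:partialQRatioWrtTauRsThirdTerm} with a one-line operator-norm bound ($\mathrm{diag}(s)-ss^\top$ is PSD with trace at most $1$). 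This buys concision and makes the cancellation structurally transparent; the paper's explicit route buys concrete expressions for the derivatives that it then reuses in the non-convexity proof of Theorem~\ref{thm:A2}. The only nitpick is that your formula for $\log p_i$ should note the boundary case $i=d$, where the $\log\sigma(u_i)$ term is absent; this does not affect the bound.
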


\begin{theorem} \label{thm:A2}
	$U(\bu,\btau)$ is not strongly convex and so assumption 
	A2 of Akylidiz et al. is violated.
\end{theorem}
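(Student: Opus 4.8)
The plan is to disprove strong convexity by exhibiting a single straight ray in parameter space along which $U$ grows only linearly; since strong convexity with any constant $\alpha>0$ forces growth that is at least quadratic, this yields a contradiction. Concretely, I would hold every coordinate fixed except one component of a single tutor block --- say $u_{j,1}$, the first logit--stick-breaking coordinate of $\bu_j$ for one chosen alignment position $j$ --- and let $u := u_{j,1} \to +\infty$ while keeping $\btau$ fixed at a value for which $T$ lies in the interior of the simplex (all $T_{ik}>0$). Because strong convexity of $U$ on the whole space would imply strong convexity of its restriction to this line with the same constant, it suffices to show the restriction is not strongly convex.

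Next I would read off the asymptotics of each group of terms in $U$ (Eqn.~\eqref{eqn:iplaUlogTermsUnconstrained}) along this ray. Using $\phi_1 = e^{u}/(1+e^{u})$ one has $\log(1-\phi_1) = -\log(1+e^{u})$, so as $u\to+\infty$ the coordinate change gives $p_{1j}=\phi_1\to 1$ and $p_{kj}\to 0$ for $k\geq 2$, with $\log p_{kj} = \log(1-\phi_1) + (\text{terms independent of } u) \sim -u$. Hence the tutor contribution $-\sum_i (\alpha_p + x_{ij})\log p_{ij}$ is asymptotically linear in $u$ with positive slope $\sum_{k\geq 2}(\alpha_p + x_{kj})>0$ (recall $\alpha_p = 0.5$ from Eqn.~\eqref{eqn:shapeParamsForPriors}). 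The pupil contribution is even easier: $q_{ij}=\sum_k T_{ik}p_{kj}\to T_{i1}>0$, so each $\log q_{ij}$ tends to a finite constant and the whole pupil term stays bounded; the $\btau$-terms do not involve $u$ at all. Summing, the restriction satisfies $U(u)=b\,u+O(1)$ with $b>0$, and in particular its directional derivative $g(u):=\tfrac{d}{du}U$ is bounded, with $g(u)\to b$.

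The contradiction then follows from the equivalent characterisation of strong convexity (see \textcite{Zhou:2018dg}): a strongly convex $f$ obeys $f(\bv)\geq f(\bv')+\langle \nabla f(\bv'),\bv-\bv'\rangle + \tfrac{\alpha}{2}\|\bv-\bv'\|^2$, which along our unit-speed ray would force $U(u)\geq U(0)+g(0)\,u+\tfrac{\alpha}{2}u^2$, growing quadratically. This is incompatible with the bound $U(u)=b\,u+O(1)$ established above, so no $\alpha>0$ can exist. Equivalently, one may argue directly from the gradient-monotonicity definition in Eqn.~\eqref{eqn:stronglyConvex}: it implies $g(u)$ grows at least linearly, whereas we showed $g(u)\to b$ is bounded.

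The main obstacle is purely bookkeeping in the asymptotic analysis: one must confirm that the terms $-\log p_{kj}$, which blow up as $p_{kj}\to 0$, do so only linearly in $u$ (not faster), and that the coupling pupil term cannot itself grow. Choosing the specific ray $u_{j,1}\to+\infty$ with an interior $T$ neutralises the second difficulty --- the pupil term is then bounded because every $q_{ij}$ converges to the strictly positive limit $T_{i1}$ --- so the only genuine computation is the linear-growth estimate for the tutor term, which is routine once the stick-breaking formulae of Table~\ref{tbl:coordinateChanges} are substituted.
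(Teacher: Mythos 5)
Your proof is correct, but it takes a different route from the paper's. The paper decomposes $U$ into the three kinds of log-terms, and for each kind separately uses the gradient-monotonicity form of strong convexity (Eqn.~\eqref{eqn:stronglyConvex}): it fixes a displacement $\bdelta$ and sends the base point's coordinates to $-\infty$, showing via the explicit gradient formulae (Eqns.~\eqref{eqn:logDirichletDensityGradientWrtU}, \eqref{eqn:partialFtildeWrtUj}, \eqref{eqn:partialFtildeWrtTauIj}) that the inner product $\langle\bdelta,\nabla f(\bu)-\nabla f(\bu-\bdelta)\rangle$ can be made arbitrarily small, contradicting the lower bound $\alpha\|\bdelta\|^2$. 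You instead work with the \emph{whole} function $U$ restricted to a single coordinate ray $u_{j,1}\to+\infty$ and use the equivalent function-value characterisation (quadratic growth), showing $U$ is asymptotically affine along that ray. Both arguments rest on the same phenomenon --- saturation of the logistic map makes the transformed log-densities asymptotically flat --- but your version has a genuine advantage: the paper's final step, ``none of the three kinds of terms is strongly convex and so neither is their sum,'' is not a valid inference in general (a sum of non-strongly-convex functions can be strongly convex, as $x^2+y^2 = $ (function flat in $y$) $+$ (function flat in $x$) shows); the paper's conclusion is salvageable because all its terms flatten in the \emph{same} limit, but your single-ray argument on the full $U$ sidesteps the issue entirely. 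What the paper's decomposition buys in exchange is reuse of the gradient computations already needed for the Lipschitz proof of Theorem~\ref{thm:A1}. Your one bookkeeping obligation --- that each $-\log p_{kj}$ for $k\geq 2$ grows only linearly in $u$ via the common factor $-\log(1-\phi_1)=\log(1+e^{u})\sim u$, and that the pupil terms stay bounded because every $q_{ij}\to T_{i1}>0$ for interior $T$ --- is handled correctly.
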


\begin{theorem} \label{thm:A3}
	Let $\tau_{ij,0}$ denote the initial condition for the unconstrained 
	coordinate $\tau_{ij}$ in the IPLA and let
	$u_{ij,0,k}$ indicate the initial condition for the unconstrained coordinate
	$u_{ij}$ for the $k$-th particle. The distributions from which these
	initial conditions are drawn are such that there exists a constant $H$
	such that, for all $k$,
	\begin{displaymath}
		\sum_{i,j = 1}^d \E[\tau_{ij,0}^2] + 
		\frac{1}{N} \sum_j \sum_{i=1}^d \E[u_{ij,0,k}^2] \leq H
	\end{displaymath}
	and so assumption A3 of Akylidiz et al. is satisfied.
\end{theorem}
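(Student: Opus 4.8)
The plan is to exploit the fact that we are free to choose the distribution from which the initial conditions are drawn, so the theorem is really an assertion that our chosen initialisation has finite second moments. The most economical route is to initialise directly on the unconstrained scale — for instance by drawing each $\tau_{ij,0}$ and each $u_{ij,0,k}$ independently from a fixed distribution with finite variance, such as a standard normal — in which case every term $\E[\tau_{ij,0}^2]$ and $\E[u_{ij,0,k}^2]$ equals a finite constant and, since there are only finitely many coordinates, the bound follows at once. The more informative case, and the one most natural for our setup, is to initialise by sampling the columns of $T$ and the $\bp_j$ from their Dirichlet priors in Eqn.~\eqref{eqn:shapeParamsForPriors} and then mapping through $\Lambda \circ \Phi$ to the unconstrained coordinates; the remainder of the plan treats this case.

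First I would reduce to a single coordinate. Because each $\btau_k$ and each $\bu_j$ has only $d-1$ components, and there are only finitely many columns and alignment positions, it suffices to bound one second moment of the form $\E[(\logit \phi)^2]$, uniformly over the finitely many distinct Dirichlet parameters that occur. The key structural fact is that the stick-breaking map $\Phi$ turns a $\Dir(\balpha)$ sample into \emph{independent} Beta-distributed coordinates: $\phi_j \sim \mathrm{Beta}\bigl(\alpha_j,\, \sum_{i>j}\alpha_i\bigr)$. Hence every unconstrained coordinate is of the form $u = \logit(\phi)$ with $\phi \sim \mathrm{Beta}(a,b)$ for some $a,b>0$, and the whole theorem comes down to showing that $\E[(\logit \phi)^2] < \infty$ for such $\phi$.

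The cleanest way to finish is via the moment generating function of the logit-Beta variable. A direct Beta-integral computation gives $\E[e^{t\,\logit \phi}] = \E[(\phi/(1-\phi))^{t}] = B(a+t,\,b-t)/B(a,b)$, which is finite for every $t$ in the open interval $(-a,\,b)$. Since this interval contains a neighbourhood of the origin, the MGF is finite near $0$ and so every moment of $\logit \phi$ exists; in particular the second moment is finite. Equivalently, one can argue by direct integrability: near $\phi=0$ the relevant integrand behaves like $(\log \phi)^2\,\phi^{a-1}$ and near $\phi=1$ like $(\log(1-\phi))^2\,(1-\phi)^{b-1}$, both of which are integrable for \emph{any} $a,b>0$ because polynomial decay dominates logarithmic growth. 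Assembling the pieces, each second moment is a finite constant determined solely by the fixed shape parameters — independent of $N$ and of the particle index $k$ — so summing the finitely many $\tau$-terms together with the $1/N$-weighted $u$-terms yields a finite bound $H$ that is uniform in $k$ and in $N$.

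The only place where care is needed — the step I expect to be the main obstacle — is precisely this boundary behaviour. The logit map diverges logarithmically as its argument approaches $0$ or $1$, so one cannot simply bound the coordinates, and for the prior $\balpha_p = (0.5,\dots,0.5)$ the Beta density itself blows up at the endpoints. The argument nevertheless goes through because the divergence of the density is only of order $\phi^{a-1}$ (respectively $(1-\phi)^{b-1}$) with $a,b>0$, which is weak enough to integrate against a squared logarithm; the moment generating function computation packages this integrability cleanly and simultaneously certifies the existence of all higher moments.
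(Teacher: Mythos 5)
Your proposal is correct and its mathematical core coincides with the paper's: the whole burden is to show $\E[u_j^2]<\infty$ when $\bu=\Lambda(\Phi(\bp))$ with $\bp$ Dirichlet, and both arguments rest on the fact that the transformed density factorises over the $u_j$ (equivalently, that the stick-breaking coordinates are independent $\mathrm{Beta}(\alpha_j,\gamma_j)$ variables with $\gamma_j=\sum_{i>j}\alpha_i$) together with the observation that a squared logarithm is integrable against $\phi^{a-1}(1-\phi)^{b-1}$ for any $a,b>0$. The differences are in packaging and in the $\btau$ part. For the $u$-moments, the paper writes the marginal density explicitly as $e^{\alpha_j u_j}/(1+e^{u_j})^{\alpha_j+\gamma_j}$ and bounds the two tails by $\int u^2 e^{-cu}\,du$, which yields the explicit bound $\tfrac{\Gamma(\alpha_j+\gamma_j)}{\Gamma(\alpha_j)\Gamma(\gamma_j)}\bigl(\tfrac{2}{\alpha_j^3}+\tfrac{2}{\gamma_j^3}\bigr)$; your moment-generating-function identity $\E[e^{t\,\logit\phi}]=B(a+t,b-t)/B(a,b)$ is a cleaner and more general certificate (it gives all moments at once) but does not produce the explicit constant. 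For the $\btau$ part you assume a Dirichlet-prior initialisation, whereas the paper initialises $T$ deterministically at a near-identity matrix, making those second moments trivially bounded; your argument subsumes this case but does not match the implementation described. Also note that the paper initialises the $\bp_j$ from a data-dependent Dirichlet with shapes $x_{ij}+\alpha_p$ rather than the prior itself — your bound still applies verbatim since those shapes are positive and independent of $N$ and $k$, but it is worth being aware that the theorem is a claim about the specific initialisation used in the code, not an arbitrary one.
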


\subsection{Preliminaries: changes of coordinate}
\label{sec:changeOfCoords}
We will need to compute derivatives with respect to the unconstrained coordinates, so 
here we compute the Jacobians of the coordinate changes from stick-breaking to 
simplex coordinates and from logit-transformed variables to those constrained to lie in the unit interval. The first Jacobian we will need is $J_{\bphi}$,
a $d \times (d-1)$ matrix that relates stick breaking coordinates 
$\bphi = (\phi_1, \dots, \phi_{d-1})$ to points $\bp = (p_1, \dots, p_d)$ in the
$d$-simplex. It is given by
\begin{equation}
	\def\arraystretch{1.5}
	J_\bphi = \left[ \begin{array}{cccc}
		\frac{\partial p_1}{\partial \phi_1} & \frac{\partial p_1}{\partial \phi_2} 
			& \cdots & \frac{\partial p_1}{\partial \phi_{d-1}} \\ 
		\frac{\partial p_2}{\partial \phi_1} & \frac{\partial p_2}{\partial \phi_2} 
			&  &  \vdots \\ 
		\vdots & & \ddots & \vdots  \\
		\frac{\partial p_{d-1}}{\partial \phi_1} & \cdots & &
		\frac{\partial p_{d-1}}{\partial \phi_{d-1}} \\
		\frac{\partial p_d}{\partial \phi_1} & \cdots & &
		\frac{\partial p_d}{\partial \phi_{d-1}} \\
	\end{array} \right]
	\; = \; 
	\left[ \begin{array}{cccc}
		\frac{p_1}{\phi_1} & 0 & \cdots & 0 \\ 
		\frac{-p_2}{1 - \phi_1} & \frac{p_2}{\phi_1}
			&  &  \vdots \\ 
		\vdots & & \ddots & \vdots  \\
		\frac{-p_{d-1}}{1 - \phi_1} & \cdots & &
		\frac{p_{d-1}}{\phi_{d-1}} \\
		\frac{- p_d}{1 - \phi_1} & \cdots & &
		\frac{- p_d}{1 - \phi_{d-1}} 
	\end{array} \right],
\label{eqn:sbPhiToPJacobian}
\end{equation}
where 
\begin{displaymath}
	\def\arraystretch{1.5}
	\frac{\partial p_i}{\partial \phi_j} = 
	\left\{ \begin{array}{ccl}
		-\phi_i \prod_{r=1, r \neq j}^{i-1} (1 - \phi_r) = -p_i/ (1 - \phi_j) & \; & \mbox{If $j < i < d$} \\
		\prod_{r=1}^{i-1} (1 - \phi_r) = p_i/\phi_j & \; & \mbox{If $j = i$ and $i < d$ } \\
		0 &&  \mbox{If $j > i$ and $i < d$} \\
		-\prod_{r=1, r \neq j}^{d-1} (1 - \phi_r) = -p_d/ (1 - \phi_j) & \; & \mbox{If $i = d$} \\
	\end{array} \right.
\end{displaymath}
which follows from the expressions in Eqn.~\eqref{eqn:stickBreakingPhiToP}.
In addition to $J_\bphi$, we will need $J_{\bu}$,
a $(d - 1) \times (d-1)$ matrix that relates the stick-breaking coordinates 
$\bphi = (\phi_1, \dots, \phi_{d-1})$ to their logit-transformed partners
$\bu = (u_1, \dots, u_{d-1})$. As $\phi_j$ depends on $u_j$ alone, 
$J_{\bu}$ is diagonal with entries given by 
\begin{align}
	\frac{\partial \phi_j}{\partial u_j}  
		& = \frac{d}{du_j} \left(\frac{e^{u_j}}{1 + e^{u_j}} \right) \nonumber \\
		& = \frac{e^{u_j}}{1 + e^{u_j}} - \frac{e^{2u_j}}{(1 + e^{u_j})^2} \nonumber \\
		& = \frac{e^{u_j}(1 + e^{u_j})}{(1 + e^{u_j})^2} - \frac{e^{2u_j}}{(1 + e^{u_j})^2} \nonumber \\
		& = \frac{e^{u_j} + e^{2u_j} - e^{2u_j}}{(1 + e^{u_j})^2} \nonumber \\
		& = \frac{e^{u_j}}{1 + e^{u_j}}  \frac{1}{1 + e^{u_j}}\nonumber \\
		& = \phi_j (1 - \phi_j),
\label{eqn:uJacobianDiag}
\end{align}
where, to obtain the final line, we used
\begin{equation}
	1 - \phi_j \; = \; 1 - \frac{e^{u_j}}{1 + e^{u_j}} 
	\; = \; \frac{1 + e^{u_j}}{1 + e^{u_j}} - \frac{e^{u_j}}{1 + e^{u_j}}
	\; = \; \frac{1}{1 + e^{u_j}}.
\label{eqn:oneMinusPhi}
\end{equation}

In the arguments that follow, it will prove useful to have the product
\begin{align}
	J_\bphi J_\bu  & = 
		\left[ \begin{array}{cccc}
			\frac{p_1}{\phi_1} & 0 & \cdots & 0 \\ 
			\frac{-p_2}{1 - \phi_1} & \frac{p_2}{\phi_1}
				&  &  \vdots \\ 
			\vdots & & \ddots & \vdots  \\
			\frac{-p_{d-1}}{1 - \phi_1} & \cdots & &
			\frac{p_{d-1}}{\phi_{d-1}} \\
			\frac{- p_d}{1 - \phi_1} & \cdots & &
			\frac{- p_d}{1 - \phi_{d-1}} 
		\end{array} \right]
		\left[ \begin{array}{ccc}
			\phi_1 (1 - \phi_1) & \cdots & 0 \\
			\vdots & \ddots & \vdots \\
			0 & \cdots & \phi_{d-1} (1 - \phi_{d-1})
		\end{array} \right] \nonumber  \\
	& = \left[ \begin{array}{cccc}
			p_1 (1 - \phi_1) & 0 & \cdots & 0 \\ 
			-p_2 \phi_1 & p_2 (1 - \phi_2) &  &  \vdots \\ 
			\vdots & & \ddots & \vdots  \\
			-p_{d-1} \phi_1 & \cdots & & p_{d-1}( 1 - \phi_{d-1}) \\
			- p_d \phi_1 & \cdots & & - p_d \phi_{d-1}
		\end{array} \right].
\label{eqn:JphiJu}
\end{align}
The $(d-1)\times(d-1)$ block consisting of the top $d-1$ rows of the 
product $J_\bphi J_\bu $ is the Jacobian 
\begin{align}
	J_{\bp\bu} &= 	\left[ \begin{array}{cccc}
		\frac{\partial p_{1j}}{\partial u_{1j}} & \frac{\partial p_{1j}}{\partial u_{2j}} 
			& \cdots & \frac{\partial p_{1j}}{\partial u_{(d-1)j}} \\ 
		\frac{\partial p_{2}}{\partial u_{1}} & \frac{\partial p_{2j}}{\partial u_{2j}} 
			&  &  \vdots \\ 
		\vdots & & \ddots & \vdots  \\
		\frac{\partial p_{(d-1)j}}{\partial u_{1}} & \cdots & &
		\frac{\partial p_{(d-1)j}}{\partial u_{(d-1)j}} 
	\end{array} \right] \nonumber \\
	& = \left[ \begin{array}{cccc}
			p_1 (1 - \phi_1) & 0 & \cdots & 0 \\ 
			-p_2 \phi_1 & p_2 (1 - \phi_2) &  &  \vdots \\ 
			\vdots & & \ddots & \vdots  \\
			-p_{d-1} \phi_1 & \cdots & & p_{d-1}( 1 - \phi_{d-1}) 
		\end{array} \right].
\label{eqn:uToPJacobian}
\end{align}
of the transformation from $u \in \R^{d-1}$ to the tuple $(p_1, \dots, p_{d-1})$ obtained by projecting-out the first $d-1$ components of a point $p \in \Delta^d$ in the $d$-simplex. This Jacobian will play a role in the next section, about transforming densities.

\subsection{Transforming the posterior density}
\label{sec:transformedDensity}

\begin{proof}[Proof of Proposition~\ref{prop:transformedDensity}]
We wish to transform the posterior likelihood
\begin{align*}
	L(\bp, T) & = \left(
		\prod_j \Mult(\bx_j \, | \, \bp_j ) \Mult(\by_j \, | \, T \bp_j ) 
		\Dir( \bp_j \, | \, \balpha_p )
	\right) \\
	& \qquad \times \left(
		\prod_{i=1}^d \Dir( \bT_{\star i}\, | \, \balpha_T ) \right)
\end{align*}
which, up to normalisation, is the posterior density over $(\bp, T)$, to the corresponding 
unnormalised posterior density over the unconstrained variables $\bu$ and $\btau$. The usual change of variables formula for densities says that
\begin{align}
	L(\bu, \btau) 
	& = \left(
			\prod_j \Mult(\bx_j \, | \, \bp_j(\bu_j) ) \Mult(\by_j \, | \, T(\btau)
			\bp_j(\bu_j) ) \Dir( \bp_j(\bu_j) \, | \, \balpha_p )
		\right) \nonumber \\
	& \qquad \times \left(
			\prod_{i=1}^d \Dir( \bT_{\star i}(\btau_i)\, | \, \balpha_T ) 
		\right) \times \left| \det\left( \frac{\partial (\bp, T)}{\partial (\bu, \btau)}\right) \right|.
\label{eqn:transformedPosteriorWithJacobian}
\end{align}
where the final factor in the product above is the determinant of a block-diagonal Jacobian whose blocks are of two kinds
\begin{align}
	\frac{\partial \bp_j}{\partial \bu_j} & = 
	\left[ \begin{array}{cccc}
		\frac{\partial p_{1j}}{\partial u_{1j}} & \frac{\partial p_{1j}}{\partial u_{2j}} 
			& \cdots & \frac{\partial p_{1j}}{\partial u_{(d-1)j}} \\ 
		\frac{\partial p_{2}}{\partial u_{1}} & \frac{\partial p_{2j}}{\partial u_{2j}} 
			&  &  \vdots \\ 
		\vdots & & \ddots & \vdots  \\
		\frac{\partial p_{(d-1)j}}{\partial u_{1}} & \cdots & &
		\frac{\partial p_{(d-1)j}}{\partial u_{(d-1)j}} 
	\end{array} \right]
	\;\; \mbox{and} \nonumber \\
	\frac{\partial T_{\star i}}{\partial \btau_i} & =
	\left[ \begin{array}{cccc}
		\frac{\partial T_{1i}}{\partial \tau_{1i}} & \frac{\partial T_{1i}}{\partial \tau_{2i}} 
			& \cdots & \frac{\partial T_{1i}}{\partial \tau_{(d-1)i}} \\ 
		\frac{\partial T_{2}}{\partial \tau_{1}} & \frac{\partial T_{2i}}{\partial \tau_{2i}} 
			&  &  \vdots \\ 
		\vdots & & \ddots & \vdots  \\
		\frac{\partial T_{(d-1)i}}{\partial \tau_{1}} & \cdots & &
		\frac{\partial T_{(d-1)i}}{\partial \tau_{(d-1)i}} 
	\end{array} \right].
\label{eqn:JacobianBlocks}
\end{align}
Both are the Jacobians of transformations from $\R^{d-1}$ to tuples of coordinates
such as $(p_1, \dots, p_{d-1})$ with $p_i \geq 0 \; \forall i$ and 
$\sum_{i=1}^{d-1} \leq 1$ and both have a form similar to that of $J_{\bp\bu}$ 
in Eqn.~\eqref{eqn:uToPJacobian}. In particular, both sorts of blocks are lower-triangular,
and so their determinants given by the products of their diagonal elements:
\begin{align}
	\det\left(\frac{\partial \bp_j}{\partial \bu_j} \right) 
		& = \prod_{i=1}^{d-1} \frac{\partial p_{ij}}{\partial u_{ij}} 
		= \prod_{i=1}^{d-1} p_{ij}(1 - \phi_{ij}) \nonumber \\
		&= \left(\prod_{i=1}^{d-1} p_{ij} \right)
			\left( \prod_{i=1}^{d-1} (1 - \phi_{ij})\right)
		= \left(\prod_{i=1}^{d-1} p_{ij} \right) p_{dj} \nonumber \\
		&= \prod_{i=1}^{d} p_{ij} 
\label{eqn:uJacobianBlocks}
\end{align}
where $\phi_{ij} = \Lambda^{-1}(u_{ij})$ and the equality in
the second line follows from Eqn.~\ref{eqn:stickBreakingPhiToP}. Similar calculations
lead to the conclusion that 
\begin{equation}
	\det \left( \frac{\partial T_{\star i}}{\partial \btau_i} \right) =
	\prod_{r=1}^{d} T_{ri} 
\label{eqn:tauJacobianBlocks}
\end{equation}

Combining Eqns.~\ref{eqn:transformedPosteriorWithJacobian}--\ref{eqn:tauJacobianBlocks}
leads to the conclusion that the transformed posterior likelihood is given by
\begin{align}
	L(\bu, \btau) 
	& = \left(
			\prod_j \Mult(\bx_j \, | \, \bp_j(\bu_j) ) \Mult(\by_j \, | \, T(\btau)
			\bp_j(\bu_j) ) \right) \nonumber \\
	& \qquad \times \left( \Dir( \bp_j(\bu_j) \, | \, \balpha_p )
		\left(\prod_{i=1}^{d} p_{ij} \right)\right) \nonumber \\
	& \qquad \times \left(
			\prod_{i=1}^d \Dir( \bT_{\star i}(\btau_i)\, | \, \balpha_T ) 
			\left( \prod_{r=1}^{d} T_{ri}  \right)
		\right).
\label{eqn:transformedPosteriorLikelihood}
\end{align}
Taking logs and dropping terms that don't depend on the $\bu_j$ or $\btau_j$ leads to
\begin{align}
	\log\left(L(\bu, \btau) \right)
	& = \sum_j \sum_{i=1}^d x_{ij} \log(p_{ij}(\bu_j)) 
		+ y_{ij} \log(q_{ij}(\bu_j, \btau)) \nonumber \\
	& \qquad + \sum_j \sum_{i=1}^d \alpha_p \log(p_{ij}(\bu_j))  
		+ \sum_{i=1}^d \sum_{r=1}^d \alpha_T \log\left(T_{ri}(\btau_i)\right) \nonumber \\ 
	& = \sum_j \sum_{i=1}^d (\alpha_p + x_{ij}) \log(p_{ij}(\bu_j)) 
		+ y_{ij} \log(q_{ij}(\bu_j, \btau)) \nonumber \\ 
	& \qquad + \sum_{i=1}^d \sum_{r=1}^d \alpha_T \log\left(T_{ri}(\btau_i)\right).
\label{eqn:transformedPosteriorLikelihoodWithDependency}
\end{align}
where $\bq_j(\bu_j, \btau) = T(\btau) \bp_j$. Making the substitutions
\begin{align}
	U(\bu, \btau) = - \log\left(L(\bu, \btau) \right), \;\;
	p_{ij} = \Phi^{-1}_i(\Lambda^{-1}(\bu_j)), \nonumber \\
	q_{ij} = \sum_{k=1}^d \Phi^{-1}_i(\Lambda^{-1}(\btau_k))
		\Phi^{-1}_k(\Lambda^{-1}(\bu_j)) \;\; \mbox{and} \;\;
	T_{ik} =  \Phi^{-1}_i(\Lambda^{-1}(\btau_k))	
\label{eqn:pqTasFuncsOfutau}
\end{align}
yields the formula in Proposition~\ref{prop:transformedDensity} and concludes the proof.
\end{proof}

\subsection{Lipschitz gradients}
\label{sec:LipschitzGradients}
Before turning to the log-posterior used in this paper, it's helpful to assemble a 
few general facts about Lipschitz functions.

\begin{lemma}[Linear combinations of functions with Lipschitz gradient]  \ \newline
\label{lemma:sumOfLipschitz}
	If all members of set of functions $f_j:\R^n \rightarrow \R$ with 
	$j \in \{1,\dots, N\}$ have Lipschitz gradients, then so does any 
	linear combination of the form
	\begin{equation}
		f(x) = \sum_{j=1}^N \beta_j f_j(x),
	\label{eqn:LipschitzSum}
	\end{equation}
	where the $\beta_j \in \R$ are arbitrary real coefficients.
\end{lemma}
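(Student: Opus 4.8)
The plan is to exploit the linearity of differentiation together with the triangle inequality; this is a routine estimate rather than a delicate argument. First I would observe that, because each $f_j$ is differentiable (this is implicit in the assumption that each has a Lipschitz gradient), differentiation acts linearly on the combination, so that
\begin{equation*}
	\nabla f(x) = \sum_{j=1}^N \beta_j \, \nabla f_j(x).
\end{equation*}
The object whose Lipschitz continuity we must verify is therefore the vector-valued map $\nabla f : \R^n \rightarrow \R^n$, and it suffices to exhibit a single finite constant $K$ satisfying the bound in Eqn.~\eqref{eqn:Lipschitz}.

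Next I would fix two points $x_1, x_2 \in \R^n$ and estimate the difference of their gradients. Substituting the linear expression above, pulling the finite sum outside the norm via the triangle inequality, and using the absolute homogeneity of the norm gives
\begin{equation*}
	\| \nabla f(x_1) - \nabla f(x_2) \|
		\leq \sum_{j=1}^N |\beta_j| \, \| \nabla f_j(x_1) - \nabla f_j(x_2) \|.
\end{equation*}
By hypothesis each $\nabla f_j$ admits some Lipschitz constant $K_j$, so each summand is bounded by $|\beta_j| K_j \| x_1 - x_2 \|$, and collecting terms yields
\begin{equation*}
	\| \nabla f(x_1) - \nabla f(x_2) \|
		\leq \left( \sum_{j=1}^N |\beta_j| K_j \right) \| x_1 - x_2 \|.
\end{equation*}

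Finally I would identify $K = \sum_{j=1}^N |\beta_j| K_j$ as a valid Lipschitz constant for $\nabla f$, which is finite precisely because the sum has only $N$ terms. This establishes the claim. I do not expect any substantive obstacle here: the lemma is an elementary consequence of linearity and the triangle inequality, and the only points that warrant care are using $|\beta_j|$ rather than $\beta_j$ (since the coefficients may be negative) and applying the triangle inequality in the norm on the codomain $\R^n$ of the gradient, as opposed to on its domain. The value of the lemma lies not in its difficulty but in its role as a building block, since it lets us reduce the Lipschitz claim of \Cref{thm:A1} to checking each term of $U(\bu, \btau)$ in Eqn.~\eqref{eqn:iplaUlogTermsUnconstrained} separately.
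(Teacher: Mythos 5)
Your proof is correct and is exactly the routine argument the paper has in mind: the paper offers no written proof for this lemma, simply remarking that it ``follows readily from the definition of a Lipschitz function'' and standard results, and your linearity-plus-triangle-inequality estimate with constant $K = \sum_{j=1}^N |\beta_j| K_j$ is precisely that argument spelled out.
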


\begin{lemma}[Bounded Hessians and Lipschitz gradients]  \ \newline
\label{lemma:bddHessian}
	If a continuous function $f:\R^n \rightarrow \R$ is twice continuously differentiable
	and its Hessian has bounded norm, then $f$ has a Lipschitz gradient.
\end{lemma}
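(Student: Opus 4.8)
The plan is to represent the difference $\nabla f(\bx) - \nabla f(\by)$ as an integral of the Hessian along the straight-line segment joining $\by$ to $\bx$, and then to control that integral using the assumed bound on the Hessian. First I would fix two points $\bx, \by \in \R^n$ and introduce the vector-valued function $g(t) = \nabla f(\by + t(\bx - \by))$ for $t \in [0,1]$, so that $g(0) = \nabla f(\by)$ and $g(1) = \nabla f(\bx)$. Since $f$ is twice continuously differentiable, $g$ is continuously differentiable on $[0,1]$ and the chain rule gives $g'(t) = \nabla^2 f(\by + t(\bx - \by))\,(\bx - \by)$, where $\nabla^2 f$ denotes the Hessian.

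Next I would apply the fundamental theorem of calculus, componentwise, to recover
\[
  \nabla f(\bx) - \nabla f(\by) = g(1) - g(0) = \int_0^1 \nabla^2 f(\by + t(\bx - \by))\,(\bx - \by)\, dt.
\]
Taking Euclidean norms and moving the norm inside the integral (the integral form of the triangle inequality) gives
\[
  \| \nabla f(\bx) - \nabla f(\by) \| \leq \int_0^1 \left\| \nabla^2 f(\by + t(\bx - \by))\,(\bx - \by) \right\| dt.
\]
I would then bound the integrand with the operator-norm inequality $\| A \bv \| \leq \| A \|\,\| \bv \|$ together with the hypothesis $\| \nabla^2 f \| \leq M$, obtaining $\| \nabla^2 f(\cdots)(\bx - \by) \| \leq M \| \bx - \by \|$ uniformly in $t$. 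The integrand bound is constant in $t$, so the integral collapses to $M \| \bx - \by \|$, which is exactly the Lipschitz estimate with constant $M$.

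The chain rule and fundamental theorem of calculus are routine here, and the continuity of the Hessian (guaranteed by the twice-continuous-differentiability of $f$) ensures the integrand is continuous, so the integral is well defined. The one point that genuinely needs care is the matrix norm: the clean constant $M$ appears precisely when the norm bounding the Hessian is the operator norm induced by the Euclidean norm, since that is the norm for which $\| A \bv \| \leq \| A \|\,\| \bv \|$ holds with no extra factor. If the hypothesis instead supplies a bound in a different matrix norm, the conclusion still follows because all norms on the finite-dimensional space of $n \times n$ matrices are equivalent; one simply replaces $M$ by $cM$ for the appropriate equivalence constant $c$. This norm-compatibility step is the main thing to watch; the remainder is a direct calculation.
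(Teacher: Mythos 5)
Your argument is correct and is precisely the standard one: the paper does not write out a proof of this lemma at all, but instead defers to ``standard results found in, for example, Section~1.2.2 of Nesterov,'' and the integral-of-the-Hessian-along-a-segment argument you give is exactly that standard result. Your closing remark about which matrix norm makes the constant clean (and the norm-equivalence fallback) is a sensible precaution but changes nothing of substance.
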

\noindent
Both these results follow readily from the definition of a Lipschitz function and 
from standard results found in, for example, Section~1.2.2 of \cite{Nesterov:2018wj}.

\subsubsection{Proof of Theorem~\ref{thm:A1} via two Lemmas}
\label{sec:LipschitzGradientProof}
\begin{proof}[Proof of Theorem~\ref{thm:A1}]
The negative log-posterior $U(\bu,\btau)$ in Eqn.~\eqref{eqn:iplaUlogTermsUnconstrained} has three kinds of terms:
\begin{enumerate}
	\item those proportional to $\log\left(\Phi^{-1}_i(\Lambda^{-1}(\bu_j))\right)$,
	\item those proportional to $\log\left(\Phi^{-1}_i(\Lambda^{-1}(\btau_r))\right)$ and
	\item those proportional to 
		$\log\left(\sum_{r=1}^d \Phi^{-1}_i(\Lambda^{-1}(\btau_r))
		 	\Phi^{-1}_r(\Lambda^{-1}(\bu_j))\right)$.
\end{enumerate}
In light of Lemma~\ref{lemma:sumOfLipschitz}, all we need prove is
that terms of these three types have Lipschitz gradients. Those that
involve either $\bu_j$ or  $\btau_r$ alone turn out to be
straightforward as they are covered by
Lemma~\ref{lemma:dirichletDensitiesAreLipschitz} below. The terms
involving both $\bu_j$ and $\btau_r$ together require somewhat more
involved calculations, but can be dispatched as is shown in the proof
Lemma~\ref{lemma:qTermsLipschitzGradient} below. Thus we are able to
establish the first of the three properties required by Akyildiz et al.: our
negative log-posterior has a Lipschitz gradient with respect to the
unconstrained coordinates.
\end{proof}

\begin{lemma}[Lipschitz gradient for sums of log-probabilities] \ \newline
\label{lemma:dirichletDensitiesAreLipschitz}
If a function $f:\Delta_d \rightarrow \R$ has the form
\begin{displaymath}
	f(p_1, \dots, p_d) = \sum_{i=1}^d \beta_i \log(p_i)
\end{displaymath}
then the map $\hat{f}:\R^{d-1} \rightarrow \R$ defined by 
$\hat{f} =  f \circ \Phi^{-1} \circ \Lambda^{-1}$
has a Lipschitz gradient
\begin{displaymath}
	\nabla \hat{f} =  \left( \frac{\partial \hat{f}}{\partial u_1}, \dots,  \frac{\partial \hat{f}}{\partial u_{d-1}}\right ).
\end{displaymath}
\end{lemma}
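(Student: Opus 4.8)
The plan is to reduce the claim to a bounded-Hessian computation and invoke Lemma~\ref{lemma:bddHessian}: since the functions that appear are $C^\infty$, it suffices to show that the Hessian of $\hat f$ has norm bounded \emph{uniformly} over all of $\R^{d-1}$. The structural observation that makes the whole argument short is that the stick-breaking construction writes each $\log p_i$ as a sum of functions of single coordinates $u_k$, so that $\hat f$ separates into one-dimensional pieces and its Hessian is diagonal.

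First I would substitute the inverse coordinate maps to obtain an explicit formula for $\hat f$. Writing $\phi_k = \Lambda^{-1}(u_k)$ via Eqn.~\eqref{eqn:invLogitTransform} and using the identity in Eqn.~\eqref{eqn:oneMinusPhi}, one gets $\log \phi_k = u_k - \log(1 + e^{u_k})$ and $\log(1 - \phi_k) = -\log(1 + e^{u_k})$. Feeding these into the inverse stick-breaking formulas of Eqn.~\eqref{eqn:stickBreakingPhiToP}, so that $\log p_j = \log\phi_j + \sum_{k<j}\log(1-\phi_k)$ for $j<d$ and $\log p_d = \sum_{k=1}^{d-1}\log(1-\phi_k)$, shows that $\hat f(\bu) = \sum_i \beta_i \log p_i$ collapses into a linear part $\sum_{k=1}^{d-1}\beta_k u_k$ plus a combination $-\sum_{k=1}^{d-1} c_k\, g(u_k)$, where $g(u) = \log(1+e^u)$ and, after collecting terms, $c_k = \sum_{i=k}^d \beta_i$ is a constant independent of $\bu$. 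The one piece of genuine bookkeeping is verifying this coefficient by tracking in which of the $\log p_i$ the term $g(u_k)$ occurs, including the contribution from the $p_d$ factor.

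With $\hat f$ in this separated form, every summand depends on a single coordinate, so the Hessian is diagonal with entries $\partial^2 \hat f/\partial u_k^2 = -c_k\, g''(u_k)$. Since $g'(u) = e^u/(1+e^u)$ is the logistic function and $g''(u) = e^{u}/(1+e^{u})^2 \in (0, 1/4]$ for every $u \in \R$, each diagonal entry is bounded in absolute value by $\tfrac14|c_k|$, whence the operator norm of the Hessian is at most $\tfrac14\max_k|c_k|$, a constant, uniformly over the unbounded domain $\R^{d-1}$; Lemma~\ref{lemma:bddHessian} then delivers the Lipschitz gradient. The main obstacle is conceptual rather than computational: one must notice that uniformity over all of $\R^{d-1}$ is exactly what the bound $g''\leq 1/4$ supplies, since the unbounded coordinates enter $\hat f$ only through the linear term (which contributes nothing to the Hessian) and through the tame, bounded-curvature softplus $g$. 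The coefficient bookkeeping for $c_k$ and the separate handling of the $p_d$ factor are the most error-prone steps, but they are routine.
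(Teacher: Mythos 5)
Your proof is correct and follows essentially the same route as the paper's: both reduce the claim to showing that the Hessian of $\hat f$ is diagonal with entries of the form $-\bigl(\sum_{i=k}^d \beta_i\bigr)\,\phi_k(1-\phi_k)$, bounded in absolute value by $\tfrac14\bigl|\sum_{i=k}^d\beta_i\bigr|$, and then invoke Lemma~\ref{lemma:bddHessian}. The only cosmetic difference is that you obtain the gradient by explicitly separating $\hat f$ into a linear part plus softplus terms $g(u_k)=\log(1+e^{u_k})$, whereas the paper computes it via the chain rule with the Jacobian product $J_{\bphi}J_{\bu}$; both yield the identical formula $\partial\hat f/\partial u_k = \beta_k - \phi_k\sum_{i=k}^d\beta_i$ and the same conclusion.
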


\begin{lemma}[Lipschitz gradients for terms involving both $\btau_r$ and $\bu_j$] \ \newline
\label{lemma:qTermsLipschitzGradient}
Consider the set of $d \times d$ matrices whose columns lie in
$\Delta_d$ and regard its elements as points in $\Delta_d^d$, the
Cartesian product of $d$ copies of the $d$-simplex. 
Define a function $\hat{T}: \R^{(d-1) \times d} \rightarrow \Delta_d^d$ that
maps $d$-tuples of logit-transformed stick-breaking coordinates to
$\Delta_d^d$. If $\btau = (\btau_1, \dots, \btau_d) \in \R^{(d-1) \times d}$ 
is $d$-tuple of unconstrained coordinates and $T = \hat{T}(\btau)$, 
then $\bT_{\star j}$, the $j$-th column of $T$, is given by 
\begin{displaymath}
	\bT_{\star j} = \Phi^{-1} \left( \Lambda^{-1} \left( \btau_j \right) \right).
\end{displaymath}

Now suppose that $f:\Delta_d \rightarrow \R$ is a function of the form
\begin{displaymath}
	f(\bq) = f(q_1, \dots, q_d) = \sum_{i=1}^d \beta_i \log(q_i).
\end{displaymath}
Then the function $\tilde{f}:\R^{d \times(d-1)} \times \R^{d-1} \rightarrow \R$
defined by 
\begin{displaymath}
	\tilde{f}(\btau, \bu) = 
	f\left( \hat{T}(\btau) \Phi^{-1} \left( \Lambda^{-1}(\bu) \right) \right)
	= f(T \bp),
\end{displaymath}
where $T = \hat{T}(\btau)$ and $\bp = \Phi^{-1} \left( \Lambda^{-1}(\bu) \right)$, has Lipschitz gradients with respect to $\bu$ and the $\btau_j$.
\end{lemma}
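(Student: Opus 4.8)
The plan is to reduce the claim, via Lemma~\ref{lemma:sumOfLipschitz} and Lemma~\ref{lemma:bddHessian}, to a single uniform bound on the Hessian of each term $\log(q_i)$, where $q_i = \sum_{k=1}^d T_{ik}\,p_k$ is viewed as a function of $(\btau, \bu) \in \R^{d\times(d-1)} \times \R^{d-1}$. Since $\tilde{f} = \sum_{i=1}^d \beta_i \log(q_i)$ is a linear combination, Lemma~\ref{lemma:sumOfLipschitz} lets me treat the terms one at a time; and because $\Phi^{-1}$ and $\Lambda^{-1}$ map $\R^{d-1}$ into the \emph{interior} of the simplex, every $p_k$ and every $T_{ik}$ is strictly positive, so $q_i > 0$ everywhere and $\log(q_i)$ is $C^2$ on the whole unconstrained space. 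By Lemma~\ref{lemma:bddHessian} it then suffices to bound the entries of the Hessian of $\log(q_i)$ uniformly in $(\btau, \bu)$.

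First I would record the elementary ``sub-multiplicative'' estimates that drive the whole argument. Reading off the Jacobian $J_{\bp\bu}$ in Eqn.~\eqref{eqn:uToPJacobian}, each entry $\partial p_k/\partial u_\ell$ equals $p_k$ times one of $1-\phi_k$, $-\phi_\ell$, or $0$, all lying in $[-1,1]$; hence $|\partial p_k/\partial u_\ell| \le p_k$. Differentiating once more and using $\partial \phi_j/\partial u_j = \phi_j(1-\phi_j) \in [0,\tfrac14]$ from Eqn.~\eqref{eqn:uJacobianDiag}, the second derivatives again factor as a bounded function of the $\phi$'s times $p_k$, giving $|\partial^2 p_k/\partial u_\ell \partial u_{\ell'}| \le C\,p_k$ for a constant $C$ depending only on $d$. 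The identical reasoning applied to the columns of $T$ (built from $\btau$ by the same composition $\Phi^{-1}\circ\Lambda^{-1}$) yields $|\partial T_{im}/\partial \tau_{\ell m}| \le T_{im}$ and $|\partial^2 T_{im}/\partial \tau_{\ell m}\partial \tau_{\ell' m}| \le C\,T_{im}$, with $T_{ik}$ depending on $\btau_k$ alone.

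Next I would propagate these bounds through $q_i = \sum_k T_{ik} p_k$. Because every $T_{ik}\ge 0$, the triangle inequality recombines each first derivative into $q_i$ itself: e.g. $|\partial q_i/\partial u_\ell| \le \sum_k T_{ik}|\partial p_k/\partial u_\ell| \le \sum_k T_{ik} p_k = q_i$, and likewise $|\partial q_i/\partial \tau_{\ell m}| = |p_m\,\partial T_{im}/\partial \tau_{\ell m}| \le p_m T_{im} \le q_i$. The same mechanism gives $|\partial^2 q_i| \le C\,q_i$ for the pure-$\bu$ and pure-$\btau$ second derivatives, while the mixed derivative $\partial^2 q_i/\partial u_\ell\partial\tau_{\ell' m} = (\partial T_{im}/\partial\tau_{\ell' m})(\partial p_m/\partial u_\ell)$ is bounded by $T_{im}p_m \le q_i$. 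Writing the composite Hessian as
\begin{displaymath}
	\frac{\partial^2 \log(q_i)}{\partial a\,\partial b}
	= \frac{1}{q_i}\frac{\partial^2 q_i}{\partial a\,\partial b}
	- \frac{1}{q_i^2}\frac{\partial q_i}{\partial a}\frac{\partial q_i}{\partial b},
\end{displaymath}
for $a,b$ any two unconstrained coordinates, the first term is bounded by $C$ and the second by $1$, so every Hessian entry is bounded by $C+1$ uniformly over the domain; as the dimension is fixed this bounds the Hessian norm, and Lemma~\ref{lemma:bddHessian} then delivers the Lipschitz gradient, with Lemma~\ref{lemma:sumOfLipschitz} finishing the proof.

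The hard part is precisely the potential blow-up of the prefactors $1/q_i$ and $1/q_i^2$ as $(\btau,\bu)$ runs off to infinity and $T\bp$ approaches the boundary of the simplex, where $q_i \to 0$. The resolution---and the only genuinely delicate point---is that every derivative of $q_i$ carries a matching factor of $q_i$, in the sense of the inequalities above, so the singular prefactors cancel \emph{exactly} and the bound is uniform. This is the same cancellation that makes Lemma~\ref{lemma:dirichletDensitiesAreLipschitz} work, now propagated through the bilinear form $q_i = \sum_k T_{ik}p_k$ rather than through a single coordinate. The one extra ingredient worth flagging is the nonnegativity of the $T_{ik}$, which is what lets the triangle inequality recombine the summands back into $q_i$; without it the argument would not close.
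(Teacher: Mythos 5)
Your proposal is correct and follows essentially the same route as the paper's proof: both reduce the claim, via Lemma~\ref{lemma:sumOfLipschitz} and Lemma~\ref{lemma:bddHessian}, to uniform bounds on the Hessian entries, and both rest on the same key cancellation that every derivative of $q_i$ is dominated by $q_i$ itself because $T_{ik}, p_k \geq 0$ (the content of Lemma~\ref{lemma:qRatioBdd}). Your packaging of that cancellation as the multiplicative estimates $|\partial^{\alpha} q_i| \leq C\,q_i$ fed into the quotient formula for the Hessian of $\log(q_i)$ is a slightly tidier bookkeeping of the same computation the paper carries out term by term on the ratios $T_{ij}p_j/q_i$.
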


\subsubsection{A useful observation about $T_{ij} p_j / q_i$}
As an ingredient in the proof of Lemma~\ref{lemma:qTermsLipschitzGradient}  we will
need the following simple result:
\begin{lemma}[Boundedness of ratios of the form $T_{ij}p_j/q_i$] \ \newline
If $\bp$, $T$ and $\bq$ are as in Lemma~\ref{lemma:qTermsLipschitzGradient}, then
\begin{equation}
	0 \leq \frac{T_{ij} p_j}{q_i} \leq 1.
\end{equation}
whenever $q_i > 0$.
\label{lemma:qRatioBdd}
\end{lemma}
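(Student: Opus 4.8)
The plan is to exploit the elementary fact that any single term in a sum of nonnegative reals lies between $0$ and the whole sum. The starting point is the defining relation $\bq = T \bp$, which I would write out component-wise as $q_i = \sum_{k=1}^d T_{ik} p_k$. This makes $T_{ij} p_j$ visible as precisely the $k = j$ summand of $q_i$, and everything then reduces to controlling a single summand of a sum of nonnegative quantities.

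For the lower bound, I would observe that the hypotheses of Lemma~\ref{lemma:qTermsLipschitzGradient} force the relevant nonnegativities: the columns of $T$ lie in $\Delta_d$, so every entry satisfies $T_{ik} \geq 0$, and $\bp \in \Delta_d$ gives $p_k \geq 0$. Hence the numerator $T_{ij} p_j \geq 0$, and since we are assuming $q_i > 0$, the ratio $T_{ij} p_j / q_i$ is nonnegative. For the upper bound, I would use the same nonnegativity applied to \emph{all} of the summands: since every $T_{ik} p_k \geq 0$, discarding the terms with $k \neq j$ can only decrease the total, so $T_{ij} p_j \leq \sum_{k=1}^d T_{ik} p_k = q_i$. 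Dividing through by $q_i > 0$ yields $T_{ij} p_j / q_i \leq 1$, completing both inequalities.

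There is no genuine obstacle in this argument; it is a one-line consequence of the simplex constraints. The only point that warrants a moment's care is making explicit that the nonnegativity of the $T_{ik}$ and the $p_k$ comes from membership in $\Delta_d$ (rather than, say, from the unconstrained coordinates, where positivity is automatic under $\Phi^{-1} \circ \Lambda^{-1}$), after which both bounds follow immediately.
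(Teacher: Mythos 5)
Your proposal is correct and follows essentially the same route as the paper's own proof: both arguments note that simplex membership gives $T_{ik} \geq 0$ and $p_k \geq 0$, so $T_{ij}p_j$ is a single nonnegative summand of $q_i = \sum_{k=1}^d T_{ik}p_k$, and dividing by $q_i > 0$ yields both bounds. No differences worth noting.
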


\begin{proof}[Proof of Lemma~\ref{lemma:qRatioBdd}]
	As $\bT_{\star j} \in \Delta_d$, we know that $T_{ij} \geq 0$.
	Similarly, $p_j \geq 0$ and so
	\begin{displaymath}
		0 \leq T_{ij} p_j \leq \sum_{k = 1}^d T_{ik} p_k = q_i.
	\end{displaymath}
	If $q_i > 0$, we can divide these inequalities by it to obtain
	\begin{displaymath}
		0 \leq \frac{T_{ij} p_j}{q_i} \leq 1,
	\end{displaymath}
	which is the result we sought.
\end{proof}

\subsubsection{Proofs of Lemmas~\ref{lemma:dirichletDensitiesAreLipschitz} and
\ref{lemma:qTermsLipschitzGradient}}
\begin{proof}[Proof of Lemma~\ref{lemma:dirichletDensitiesAreLipschitz}]
	We can compute the gradient $\nabla \hat{f}$ using the chain rule and the Jacobians from
	Eqns.~\eqref{eqn:sbPhiToPJacobian} and \eqref{eqn:uJacobianDiag}.
	\begin{equation}
		\nabla \hat{f} = 
		\left[  \frac{\partial \hat{f}}{\partial u_1}, \dots,  \frac{\partial \hat{f}}{\partial u_{d-1}} \right] =
		\nabla_\bp f J_\bphi J_\bu = 
		\left[  \frac{\partial f}{\partial p_1}, \dots,  \frac{\partial f}{\partial p_{d}} \right] J_\bphi J_\bu.
	\end{equation}
	Eqn.~\eqref{eqn:JphiJu} gives the product $J_\bphi J_\bu$ , which we can then act with
	from the right on $\nabla_\bp f$ to obtain
	\begin{align}
		\nabla_\bp f J_\bphi J_\bu  
				& = \left[  \frac{\beta_1}{p_1}, \dots,  \frac{\beta_d}{p_d} \right]
				\left[ \begin{array}{cccc}
					p_1 (1 - \phi_1) & 0 & \cdots & 0 \\ 
					-p_2 \phi_1 & p_2 (1 - \phi_2) &  &  \vdots \\ 
					\vdots & & \ddots & \vdots  \\
					-p_{d-1} \phi_1 & \cdots & & p_{d-1}( 1 - \phi_{d-1}) \\
					- p_d \phi_1 & \cdots & & - p_d \phi_{d-1}
				\end{array} \right], \nonumber \\
				& = \left[ \beta_1 - \phi_1 \left(\sum_{i=1}^d \beta_i \right), 
						\dots,  \beta_{d-1} - \phi_{d-1} (\beta_{d-1} + \beta_d)
					\right].
	\label{eqn:logDirichletDensityGradientJphiJu}
	\end{align}
	In general, the $j$-th component of $\nabla \hat{f}$ is given by
	\begin{align}
		\frac{\partial \hat{f}}{\partial u_j} 
			& = \beta_j - \phi_j \left( \sum_{i=j}^d \beta_i \right) \nonumber \\
			& = \beta_j - \frac{e^{u_j}}{1 + e^{u_j}} \left( \sum_{i=j}^d \beta_i \right).
	\label{eqn:logDirichletDensityGradientWrtU}
	\end{align}
	
	We will now establish that the Hessian of $\hat{f}$ with respect to the $u_j$ has
	bounded norm.  Begin by noting that $\partial \hat{f}/\partial u_j$ depends 
	only on $u_j$, so the Hessian of $\hat{f}$, whose entries are 
	$H_{jk} = \frac{\partial^2 f}{\partial u_j \partial u_k}$, is diagonal. 
	Calculations similar to those leading up to Eqn.~\eqref{eqn:uJacobianDiag} 
	then establish that 
	\begin{align*}
		\frac{\partial^2 \hat{f}}{\partial u_j^2} & = 
			\frac{d}{d u_j} \left( \beta_j - \frac{e^{u_j}}{1 + e^{u_j}} \left( \sum_{i=j}^d \beta_i \right) \right) \\
			& = -\frac{e^{u_j}}{(1 + e^{u_j})^2} \sum_{i=j}^d \beta_i   \\
			& = -\phi_j (1 -\phi_j) \sum_{i=j}^d \beta_i.
	\end{align*}
	Now, $\phi_j (1 -\phi_j)$ is a quadratic with a maximum at $\phi_j = 1/2$ and so the 
	diagonal entries of the Hessian are bounded by 
	\begin{displaymath}
		\left|\frac{\partial^2 f}{\partial u_j^2} \right| 
		\; \leq \; \frac{1}{2} \left(1 - \frac{1}{2} \right) \left| \sum_{i=j}^d \beta_i \right|   
		\; \leq \; \frac{1}{4} \sum_{i=j}^d |\beta_i |	
		\; \leq \;  \frac{1}{4} \sum_{i=1}^d |\beta_i |,
	\end{displaymath}
	which establishes that, for an arbitrary vector $\bx \in \R^{d-1}$,  
	\begin{equation}
		\| H \bx \| \leq \frac{1}{4} \left( \sum_{i=1}^d |\beta_i | \right) \|\bx \|.
		\label{eqn:HessianNormBound}
	\end{equation}
	As $\hat{f}$ is continuous and twice continuously differentiable, 
	Eqn.~\eqref{eqn:HessianNormBound}, along with Lemma~\ref{lemma:bddHessian},
	establishes that $\nabla \hat{f}$ is Lipschitz with constant
	\begin{displaymath}
		K = \frac{1}{4} \sum_{i=1}^d |\beta_i |.
	\end{displaymath}
\end{proof}

\begin{proof}[Proof of Lemma~\ref{lemma:qTermsLipschitzGradient}]
We will need to refer to variously transformed versions of the unconstrained 
coordinates $\bu \in \R^{d-1}$ and $\btau_k \in \R^{d-1}$, so we introduce the following notation:
\begin{displaymath}
	\bp = \Phi^{-1}\left(\bphi \right), \; \; \bphi = \Lambda^{-1}\left(\bu\right)
\end{displaymath}
where $\bp \in \Delta_d$ is a point in the $d$-simplex and $\bphi \in [0,1]^{d-1}$ is the associated set of stick-breaking coordinates, while
\begin{displaymath}
	\bT_{\star k} = \Phi^{-1}\left(\bpsi_k\right), \; \; 
	\bpsi_k = \Lambda^{-1}\left(\btau_k\right)
\end{displaymath}
where $\bT_{\star k} \in \Delta_d$ is both the $k$-th column of $T$ and a 
point in the $d$-simplex while  $\bpsi_k \in [0,1]^{d-1}$ is the associated a set of stick-breaking coordinates. As in the note transmission model in Eqn.~\eqref{eqn:qTp}, 
we will write
\begin{displaymath}
	\bq = T \bp
\end{displaymath}
and note that as we are interested here in values of $\bp$, $T$ and $\bq$ associated
with finite unconstrained coordinates, we can be sure that, for all $j, k \in \{1,\dots,d\}$
\begin{equation}
	0 < p_j < 1, \; 0 < T_{jk} < 1 \mbox{ and } 0 < q_j < 1.
\label{eqn:pqTInInterior}
\end{equation}

With these definitions,  calculations similar to those in 
Eqns.~\eqref{eqn:JphiJu}--\eqref{eqn:logDirichletDensityGradientWrtU}  lead to
\begin{equation}
	\frac{\partial q_i}{\partial u_j} = 
	\frac{\partial}{\partial u_j} \left( \sum_{r = 1}^d T_{ir} p_r \right)  = 
	\sum_{r = 1}^d T_{ir} \, \frac{\partial p_r}{\partial u_j} = 
	T_{ij}p_j - \phi_j \sum_{r = j}^d T_{ir}p_r 
\label{eqn:partialQiWrtUj}
\end{equation}
and
\begin{align}	
	\frac{\partial q_i}{\partial \tau_{jk}} 
		= \frac{\partial}{\partial \tau_{jk}} \left( \sum_{r = 1}^d T_{ir} p_r \right) 
		& =\sum_{r = 1}^d \frac{\partial T_{ir}}{\partial \tau_{jk}} \, p_r \nonumber \\
		& = \frac{\partial T_{ik}}{\partial \tau_{jk}}  p_k  
		= \left\{ \begin{array}{ccl}
			- \psi_{jk} T_{ik}  p_k & \; & \mbox{if $j < i$} \\
			(1 - \psi_{jk}) T_{jk}  p_k && \mbox{if $j = i$} \\
			0 && \mbox{if $j > i$} \\
		\end{array} \right..
\label{eqn:partialQiWrtTauJk}
\end{align}
The components of the gradient of $\tilde{f}$ with respect to the unconstrained 
variables are thus
\begin{align}
	\frac{\partial \tilde{f}}{\partial u_j}  & = 
	\sum_{i = 1}^d \beta_i \left( \frac{\partial \log(q_i)}{\partial u_j} \right) \nonumber \\
	& = \sum_{i = 1}^d \frac{\beta_i}{q_i} \left( \frac{\partial q_i}{\partial u_j} \right) \nonumber \\
	& = 
	\sum_{i = 1}^d \beta_i \left(\frac{T_{ij}p_j}{q_i} 
			- \phi_j \sum_{r = j}^d \frac{T_{ir}p_r}{q_i} \right)
\label{eqn:partialFtildeWrtUj}
\end{align}
and
\begin{equation}	
	\frac{\partial \tilde{f}}{\partial \tau_{jk}} = 
	\sum_{i = 1}^d \beta_i \left( \frac{\partial \log(q_i)}{\partial \tau_{jk}} \right) = 
	\beta_j\frac{T_{jk} p_k}{q_j} - 
		\psi_{jk}\sum_{i =1}^j \beta_i \left(\frac{T_{ik} p_k}{q_i} \right)
\label{eqn:partialFtildeWrtTauIj}
\end{equation}
Our goal now is to show that this gradient is Lipschitz
using Lemma~\ref{lemma:bddHessian}. We'll do this by computing the second
partial derivatives of $\tilde{f}$ and demonstrating that they, and thus all entries
in $\tilde{f}$'s Hessian, are bounded. This, in turn, will imply that the Hessian
has bounded norm, completing the proof.

Consider first the terms $\phi_j T_{ir} p_r / q_i$ that appear in 
Eqn.~\eqref{eqn:partialFtildeWrtUj}. We will prove that they have bounded partial derivatives with respect to the unconstrained coordinates. 
\begin{equation}
	\frac{\partial}{\partial u_k}\left( \frac{\phi_j T_{ir}p_r}{q_i} \right)  = 
		\frac{\partial \phi_j}{\partial{u_k}} \left( \frac{T_{ir} p_r}{q_i} \right) +
		\left( \frac{\phi_j T_{ir}}{q_i} \right) \frac{\partial p_r}{\partial u_k} - 
		\left( \frac{ \phi_j T_{ir}p_r}{q_i^2} \right) \frac{\partial q_i}{\partial u_k} 
\label{eqn:partialQRatioWrtUk}
\end{equation}
The first of the terms at right in Eqn.~\eqref{eqn:partialQRatioWrtUk} is 
\begin{equation}
	\frac{\partial \phi_j}{\partial{u_k}} \left( \frac{T_{ir} p_r}{q_i} \right) = 
	\left\{  \begin{array}{ccl}
		0 & \; & \mbox{if $j \neq k$} \\
		\phi_j(1 - \phi_j) \left(  T_{ir} p_r/ q_i  \right) &&  \mbox{if $j = k$} \\
	\end{array} \right..
\label{eqn:partialQRatioWrtUkFirstTerm}
\end{equation}
We argued in the proof of Lemma~\ref{lemma:dirichletDensitiesAreLipschitz} that  
$0 \leq \phi_j(1 - \phi_j) \leq 1/4$ and we know from Eqn.~\eqref{eqn:pqTInInterior} and Lemma~\ref{lemma:qRatioBdd} that $T_{ir p_r} p_r / q_i \leq 1$ and so the whole term is bounded by 1.

The next term in Eqn.~\eqref{eqn:partialQRatioWrtUk} is 
\begin{equation}
	\left( \frac{\phi_j T_{ir}}{q_i} \right) \frac{\partial p_r}{\partial u_k} = 
	\left\{  \begin{array}{ccl}
		- \phi_k \left(\phi_j T_{ir} p_r/q_i \right) & \; &  \mbox{if $k < r$} \\
		(1 - \phi_k) \left(\phi_j T_{ik} p_k/q_i \right) &&  \mbox{if $k = r$} \\
		0 && \mbox{if $k > r$} 
	\end{array} \right..
\label{eqn:partialQRatioWrtUkSecondTerm}
\end{equation}
where we have used Eqn.~\eqref{eqn:JphiJu}.
Here too arguments depending on the boundedness of $\phi_k$, and Lemma \ref{lemma:qRatioBdd} establish that this
term too is bounded by 1. Finally we turn to the third term in Eqn.~\eqref{eqn:partialQRatioWrtUk}
\begin{align}	
	\left( \frac{ \phi_j T_{ir}p_r}{q_i^2} \right) \frac{\partial q_i}{\partial u_k} 
	& = \left( \frac{ \phi_j T_{ir}p_r}{q_i^2} \right) 
	\left( T_{ik}p_k - \phi_k \sum_{r = k}^d T_{ir}p_r \right) \nonumber \\ 
	& = \left( \frac{ \phi_j T_{ir}p_r}{q_i} \right) 
	\left( \frac{T_{ik} p_k}{q_i} - \phi_k \sum_{r = k}^d \frac{T_{ir}p_r}{q_i} \right) 
\label{eqn:partialQRatioWrtUkThirdTerm}
\end{align}
where I have used Eqn.~\eqref{eqn:partialQiWrtUj}. Arguments similar to those used for the
previous two terms establish that this term is bounded by $(d+1)$, and thus so is the second partial derivative
in Eqn.~\eqref{eqn:partialQRatioWrtUk}. Essentially the same reasoning also proves that
\begin{displaymath}
	\frac{\partial}{\partial u_k}\left( \frac{T_{ir}p_r}{q_i} \right) 
	\mbox{ is bounded $\forall k$}
\end{displaymath}
and this observation, along with Eqn.~\eqref{eqn:partialFtildeWrtUj},
establishes that 
\begin{equation}
	\frac{\partial^2 \tilde{f}}{\partial u_j \partial u_k} \mbox{ and }
	\frac{\partial^2 \tilde{f}}{\partial u_j \partial \tau_{kr}} \mbox{ and }
	\mbox{ are bounded $\forall j, k, r$.}
\label{eqn:fTildeUuDerivs}
\end{equation}

To complete our proof of the boundedness of the entries in $\tilde{f}$'s Hessain, we
need to consider
\begin{equation}
	\frac{\partial}{\partial \tau_{rs}}\left( \frac{\psi_{jk} T_{ik}p_k}{q_i} \right) = 
	\frac{\partial \psi_{jk}}{\partial \tau_{rs}}\left( \frac{T_{ik}p_k}{q_i} \right) +
	\frac{\partial T_{ik}}{\partial \tau_{rs}}\left( \frac{\psi_{jk}p_k}{q_i} \right) -
	\left( \frac{\psi_{jk} T_{ik}p_k}{q_i^2} \right) 
		\frac{\partial q_i}{\partial \tau_{rs}} 
\label{eqn:partialQRatioWrtTauRs}
\end{equation}
The analogues of Eqns.~\eqref{eqn:partialQRatioWrtUkFirstTerm}--\eqref{eqn:partialQRatioWrtUkThirdTerm}
are
\begin{equation}
	\frac{\partial \psi_{jk}}{\partial \tau_{rs}}\left( \frac{T_{ik}p_k}{q_i} \right)
	\left\{  \begin{array}{ccl}
		0 & \; & \mbox{if $s \neq k$ or $r \neq j$} \\
		\psi_{jk}(1 - \psi_{jk}) \left( T_{ik}p_k/ q_i  \right) &&  
			\mbox{if $s = k$ and $r = j$} \\
	\end{array} \right.,
\label{eqn:partialQRatioWrtTauRsFirstTerm}
\end{equation}
\begin{equation}
	\frac{\partial T_{ik}}{\partial \tau_{rs}}\left( \frac{\psi_{jk}p_k}{q_i} \right) =
	\left\{  \begin{array}{ccl}
		- \psi_{rk} \left( \psi_{jk} T_{ik} p_k / q_i\right) & \; &  
				\mbox{if $s = k$ and $r < i$} \\
		(1 - \psi_{ik} )\left( \psi_{jk} T_{ik} p_k / q_i \right) &&  
				\mbox{if $s = k$ and $r = i$} \\
		0 && \mbox{otherwise} 
	\end{array} \right.
\label{eqn:partialQRatioWrtTauRsSecondTerm}
\end{equation}
and
\begin{equation}
	\left( \frac{\psi_{jk} T_{ik}p_k}{q_i^2} \right) 
		\frac{\partial q_i}{\partial \tau_{rs}}
	= \left\{ \begin{array}{ccl}
	- \psi_{rs} (T_{is}  p_s/q_i)(\psi_{jk} T_{ik}p_k / q_i) & \; & \mbox{if $r < i$} \\
	(1 - \psi_{is}) (T_{is}  p_s/q_i)(\psi_{jk} T_{ik}p_k / q_i) && \mbox{if $r = i$} \\
		0 && \mbox{if $r > i$} \\
	\end{array} \right.
\label{eqn:partialQRatioWrtTauRsThirdTerm}
\end{equation}
and, as above, consideration of the boundedness of stick-breaking coordinates,
Eqn.~\eqref{eqn:pqTInInterior} and Lemma~\ref{lemma:qRatioBdd} establish the boundedness of the derivative in Eqn.~\eqref{eqn:partialQRatioWrtTauRs}. 
Thus all entries in the Hessian of $\tilde{f}$ are bounded and so, using 
Lemma~\ref{lemma:bddHessian}, we can conclude that the terms in $U(\bu,\btau)$ 
that involve both $\btau_r$ and $\bu_j$ have Lipschitz gradients.
\end{proof}

\subsection{Strong convexity}
\label{sec:noStrongConvexity}
\begin{proof}[Proof of Theorem~\ref{thm:A2}]
	Lemmas~\ref{lemma:dirichletDensitiesNotStronglyConvex} and
	\ref{lemma:qTermsNotStronglyConvex} below establish that none of the
	three kinds of terms (see the proof of Theorem~\ref{thm:A1})
	that contribute to $U(\bu,\btau)$  are strongly convex and so neither
	is their sum.
\end{proof}

\begin{lemma}[Linear combinations of log-probabilities are not strongly convex] \ \newline
\label{lemma:dirichletDensitiesNotStronglyConvex}
If a function $f:\Delta_d \rightarrow \R$ has the form
\begin{displaymath}
	f(p_1, \dots, p_d) = \sum_{i=1}^d \beta_i \log(p_i)
\end{displaymath}
then the map $\hat{f}:\R^{d-1} \rightarrow \R$ defined by 
$\hat{f} =  f \circ \Phi^{-1} \circ \Lambda^{-1}$ is such that
for every real number $\alpha > 0$ and displacement $\bdelta \in \R^{d-1}$, 
there exists $\bu \in \R^{d-1}$ such that
\begin{displaymath}
	\left\langle 
		\bdelta, 
		\nabla \hat{f}(\bu) - \nabla \hat{f}(\bu - \bdelta) 
	\right\rangle 
	< \alpha \|\bdelta\|^2.
\end{displaymath} 
\end{lemma}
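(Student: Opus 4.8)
The plan is to reuse the explicit gradient computed in the proof of Lemma~\ref{lemma:dirichletDensitiesAreLipschitz},
\begin{displaymath}
	\frac{\partial \hat{f}}{\partial u_j}(\bu) = \beta_j - \phi_j(u_j)\left(\sum_{i=j}^d \beta_i\right),
	\qquad \phi_j(u_j) = \frac{e^{u_j}}{1 + e^{u_j}},
\end{displaymath}
and to exploit one structural feature: the $j$-th component of $\nabla\hat{f}$ depends on $\bu$ only through $u_j$. The conceptual reason the function fails to be strongly convex is already visible in the Hessian computed there, which is diagonal with entries $-\phi_j(u_j)(1-\phi_j(u_j))\sum_{i=j}^d\beta_i$; these entries all tend to $0$ as $u_j \to \pm\infty$, so the smallest eigenvalue of the Hessian admits no positive lower bound. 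I would, however, argue directly with the gradient, since that yields the stated inequality immediately.

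First I would write out the inner product. Because $\partial\hat{f}/\partial u_j$ depends only on $u_j$, the change in the $j$-th gradient component between $\bu$ and $\bu-\bdelta$ is $-\bigl(\phi_j(u_j)-\phi_j(u_j-\delta_j)\bigr)\sum_{i=j}^d\beta_i$, and hence
\begin{displaymath}
	\left\langle \bdelta,\, \nabla\hat{f}(\bu) - \nabla\hat{f}(\bu-\bdelta)\right\rangle
	= -\sum_{j=1}^{d-1} \delta_j\,\bigl(\phi_j(u_j)-\phi_j(u_j-\delta_j)\bigr)\sum_{i=j}^d\beta_i.
\end{displaymath}
The sum thus decouples into one-dimensional contributions, each governed by a single logistic sigmoid.

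The key analytic input is the saturation of the sigmoid: for any fixed $\delta_j$, both $\phi_j(u_j)$ and $\phi_j(u_j-\delta_j)$ tend to $1$ as $u_j\to+\infty$, so their difference tends to $0$. Consequently every summand above tends to $0$ as $u_j\to+\infty$, independently of the other coordinates. Given $\alpha>0$ and a fixed nonzero displacement $\bdelta$, I would therefore take $\bu=(M,\dots,M)$ and let $M\to\infty$; the whole inner product then tends to $0$, and since $\alpha\|\bdelta\|^2>0$, choosing $M$ large enough makes it strictly less than $\alpha\|\bdelta\|^2$. This is precisely the failure of strong convexity asserted in the statement.

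I do not expect a genuine obstacle here: the result is essentially a one-line consequence of the gradient formula and the flatness of the logistic curve at infinity. The only matters needing care are the degenerate case $\bdelta=\boldvec{0}$ (for which the claimed strict inequality cannot hold and which the statement tacitly excludes) and, should one prefer an explicit threshold on $M$ rather than a limiting argument, quantifying the decay; this can be done with the saturation estimate $1-\phi_j(u)=1/(1+e^{u})\le e^{-u}$, which forces $|\phi_j(u_j)-\phi_j(u_j-\delta_j)|$ to shrink exponentially in $M$, but it is routine and does not change the structure of the argument.
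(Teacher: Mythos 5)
Your argument is correct and follows essentially the same route as the paper's proof: both use the explicit gradient formula $\partial\hat{f}/\partial u_j = \beta_j - \phi_j(u_j)\sum_{i=j}^d \beta_i$, decouple the inner product into one-dimensional sigmoid differences, and kill each summand by driving the coordinates to infinity where the sigmoid saturates (the paper sends $u_j \to -\infty$ rather than $+\infty$, which is immaterial, and phrases the argument as a contradiction rather than a direct construction). Your remark about the degenerate case $\bdelta = \boldvec{0}$ is a fair observation the paper glosses over, but it does not affect the substance.
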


\begin{lemma}[Terms involving both $\btau_k$ and $\bu_j$ are not strongly convex] \ \newline
\label{lemma:qTermsNotStronglyConvex}
If $\bp$, $\bu$, $\btau$, $\bT$ and $\hat{T}$ are as in Lemma~\ref{lemma:qTermsLipschitzGradient}
and $f:\Delta_d \rightarrow \R$ is a function of the form
\begin{displaymath}
	f(\bq) = f(q_1, \dots, q_d) = \sum_{i=1}^d \beta_i \log(q_i),
\end{displaymath}
then the function $\tilde{f}:\R^{(d-1) \times d} \times \R^{d-1} \rightarrow \R$
defined by 
\begin{displaymath}
	\tilde{f}(\btau, \bu) = 
	f\left( \hat{T}(\btau) \Phi^{-1} \left( \Lambda^{-1}(\bu) \right) \right)
	= f(T \bp),
\end{displaymath}
where $T = \hat{T}(\btau)$ and $\bp = \Phi^{-1} \left( \Lambda^{-1}(\bu) \right)$, is
such that for every real number $\alpha > 0$ and all displacements
$\delta\btau \in \R^{(d-1) \times d}$ and $\delta\bu \in \R^{d-1}$, there exist 
$\btau \in \R^{(d-1) \times d}$ and $\bu \in \R^{d-1}$ such that
\begin{displaymath}
	\left\langle 
		\bdelta, 
		\nabla \tilde{f}(\btau, \bu) - \nabla \tilde{f}(\btau - \delta\btau, \bu - \delta\bu) 
	\right\rangle 
	< \alpha \left( \|\delta\btau\|^2 + \|\delta\bu\|^2\right).
\end{displaymath}
\end{lemma}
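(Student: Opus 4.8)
The plan is to refute strong convexity by exhibiting, for each fixed $\alpha>0$ and each fixed combined displacement $(\delta\btau,\delta\bu)$, a base point $(\btau,\bu)$ at which the secant expression $\langle(\delta\btau,\delta\bu),\,\nabla\tilde f(\btau,\bu)-\nabla\tilde f(\btau-\delta\btau,\bu-\delta\bu)\rangle$ is smaller than $\alpha(\|\delta\btau\|^2+\|\delta\bu\|^2)$. Since the displacement is held fixed, it suffices to drive the entire gradient-difference vector $\nabla\tilde f(\btau,\bu)-\nabla\tilde f(\btau-\delta\btau,\bu-\delta\bu)$ to $\mathbf 0$: once every block of this vector is small, its inner product with the fixed displacement is small and the strict inequality follows. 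I would therefore locate a region of $(\btau,\bu)$-space in which every partial derivative appearing in Eqns.~\eqref{eqn:partialFtildeWrtUj} and \eqref{eqn:partialFtildeWrtTauIj} saturates, so that shifting the base point by the fixed displacement changes it negligibly.

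The concrete strategy reduces the coupled problem to the single-column situation already handled in Lemma~\ref{lemma:dirichletDensitiesNotStronglyConvex}. First I would hold $\btau$ fixed and finite and let $u_1\to+\infty$. By the stick-breaking formulae this sends $\bp\to(1,0,\dots,0)$, whence $q_i=\sum_k T_{ik}p_k\to T_{i1}$; crucially, because $\btau$ stays finite the limits $T_{i1}$ are strictly positive (Eqn.~\eqref{eqn:pqTInInterior}), so the ratios $T_{ir}p_r/q_i$ converge without any $0/0$ indeterminacy, tending to $1$ when $r=1$ and to $0$ when $r\ge2$. Feeding these limits into Eqns.~\eqref{eqn:partialFtildeWrtUj} and \eqref{eqn:partialFtildeWrtTauIj} (together with $\phi_j,\psi_{jk}\in[0,1]$ and Lemma~\ref{lemma:qRatioBdd}) shows that the $\bu$-block of $\nabla\tilde f$ and every $\btau_k$-block with $k\ge2$ tend to $\mathbf 0$, since each of their terms carries a vanishing factor, while the $\btau_1$-block tends to the gradient of the reduced function $\hat f(\btau_1)=\sum_i\beta_i\log\bigl(\Phi^{-1}_i(\Lambda^{-1}(\btau_1))\bigr)$. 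In other words, in this limit $\tilde f$ degenerates into the single-simplex function of Lemma~\ref{lemma:dirichletDensitiesNotStronglyConvex} acting on the first column of $T$.

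The remaining task is to kill the surviving $\btau_1$-block of the gradient difference, and here I would invoke the saturation mechanism from the proof of Lemma~\ref{lemma:dirichletDensitiesNotStronglyConvex}: sending each $\tau_{j1}\to+\infty$ drives the logit-transformed stick-breaking coordinates $\psi_{j1}$ to their extreme values, so that $\partial\hat f/\partial\tau_{j1}=\beta_j-\psi_{j1}\sum_{i\ge j}\beta_i$ becomes locally constant and the difference $\nabla\hat f(\btau_1)-\nabla\hat f(\btau_1-\delta\btau_1)\to\mathbf 0$. Combining the two limits with a routine $\epsilon/2$ argument — first choosing $\btau_1$ far enough toward the vertex that the reduced $\btau_1$-block difference contributes less than $\tfrac12\alpha(\|\delta\btau\|^2+\|\delta\bu\|^2)$, then, with $\btau$ thereby fixed, choosing $u_1$ large enough that the remaining blocks and the approximation error from the first limit contribute less than the other half — furnishes a base point at which the full secant expression is strictly below $\alpha(\|\delta\btau\|^2+\|\delta\bu\|^2)$, as required.

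I expect the main obstacle to be the bookkeeping of the coupled ratios $T_{ir}p_r/q_i$ in the first step: one must confirm that their contributions to both gradient blocks vanish in the $u_1\to+\infty$ limit, and that they do so uniformly enough that the $\epsilon/2$ splitting is legitimate. Taking the two limits sequentially rather than jointly is what makes this manageable, since fixing $\btau$ finite while $u_1\to+\infty$ keeps every denominator $q_i$ bounded away from zero and thereby avoids the $0/0$ behaviour that would otherwise appear when both $\bp$ and $\bT_{\star 1}$ approach the same vertex simultaneously.
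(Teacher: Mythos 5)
Your proof is correct, and it rests on the same underlying mechanism as the paper's --- for a fixed displacement, push the base point to infinity in the unconstrained coordinates so that the gradient difference, and hence its inner product with $\bdelta$, becomes arbitrarily small --- but the limit you take and the resulting decomposition are genuinely different. The paper sends the relevant $u_j$ and $\tau_{ij}$ to $-\infty$ and, via the explicit factorisation $T_{ij}p_j/q_i = e^{\tau_{ij}+u_j}\times(\cdot)$ built from Eqns.~\eqref{eqn:PiAsFuncOfUs}--\eqref{eqn:TikAsFuncOfTaus}, argues that \emph{every} component of $\nabla\tilde f$ at both the base and shifted points tends to zero. You instead send $u_1\to+\infty$ with $\btau$ held finite, which keeps each $q_i\to T_{i1}$ bounded away from zero and so disposes of the ratios $T_{ir}p_r/q_i$ by elementary limits; the price is that the $\btau_1$-block does not vanish but converges to the gradient of $\sum_i\beta_i\log T_{i1}$, which you then kill by reusing the saturation argument of Lemma~\ref{lemma:dirichletDensitiesNotStronglyConvex} (at $+\infty$ rather than $-\infty$; either end of the sigmoid works). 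Your sequential choice --- fix the $\tau_{j1}$ large first, then take $u_1$ large for that fixed $\btau$ --- is in the right order, so the $\epsilon/2$ splitting is legitimate and the uniformity worry you flag is not a real obstacle. What your route buys is that it sidesteps the delicate regime in which $\bp$ and the columns of $T$ approach the same vertex simultaneously, which is exactly where the paper must lean on the boundedness of the parenthesised factor multiplying $e^{\tau_{ij}+u_j}$; what it costs is the extra reduction step and a reliance on the corrected form of Eqn.~\eqref{eqn:partialFtildeWrtTauIj} (the inner sum there should run over $i\ge j$, as your formula has it). Two cosmetic points: state explicitly that $(\delta\btau,\delta\bu)\neq 0$, since otherwise the strict inequality is unattainable (the paper's proof has the same implicit assumption), and note that the Cauchy--Schwarz step $\langle\bdelta,\Delta\nabla\rangle\le\|\bdelta\|\,\|\Delta\nabla\|$ is what converts ``gradient difference small'' into the required strict bound.
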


\begin{proof}[Proof of Lemma~\ref{lemma:dirichletDensitiesNotStronglyConvex}] \ \newline
Assume for contradiction that there is some $\alpha > 0$ such that, for all 
$\bu, \bdelta \in \R^{d-1}$, 
\begin{equation}
	\left\langle 
		\bdelta, 
		\nabla \hat{f}(\bu) - \nabla \hat{f}(\bu - \bdelta) 
	\right\rangle 
	\geq \alpha \|\bdelta\|^2.
\label{eqn:StrongConvexityForDirichletDensity}
\end{equation} 
Eqn.~\eqref{eqn:logDirichletDensityGradientWrtU} tells us that 
components of $\nabla \hat{f}$ are
\begin{displaymath}
	\frac{\partial \hat{f}}{\partial u_j} 
	 = \beta_j - \frac{e^{u_j}}{1 + e^{u_j}} \left( \sum_{i=j}^d \beta_i \right),
\end{displaymath}
so that the components of $\nabla \hat{f}(\bu) - \nabla \hat{f}(\bu - \bdelta)$
are
\begin{displaymath}
	\frac{\partial}{\partial u_j} \left( \hat{f}(\bu) - \hat{f}(\bu - \bdelta) \right)
	= \left( 
		\frac{e^{u_j- \delta_j}}{1 + e^{u_j - \delta_j}} -
		\frac{e^{u_j}}{1 + e^{u_j}} 
	\right)\left( \sum_{i=j}^d \beta_i \right).
\end{displaymath}
Substituting this into the inner product in Eqn.~\eqref{eqn:StrongConvexityForDirichletDensity} yields
\begin{align}
	\left\langle 
		\bdelta, 
		\nabla \hat{f}(\bu) - \nabla \hat{f}(\bu - \bdelta) 
	\right\rangle 
	& = \sum_{j=1}^{d-1} \delta_j 
		\left(
		\frac{e^{u_j- \delta_j}}{1 + e^{u_j - \delta_j}} -
		\frac{e^{u_j}}{1 + e^{u_j}} 
		\right)\left( \sum_{i=j}^d \beta_i \right) \nonumber \\
	& = \sum_{j=1}^{d-1} \delta_j e^{u_j}
		\left(
		\frac{e^{- \delta_j}}{1 + e^{u_j - \delta_j}} -
		\frac{1}{1 + e^{u_j}} 
		\right)\left( \sum_{i=j}^d \beta_i \right) 
\label{eqn:DirichletDensityConvexityContradiction}	
\end{align}
For fixed $\bdelta$, we can make the terms in this sum arbitrarily small by 
sending $u_j$~to~$-\infty$ and so the inner product itself can be made arbitrarily small,
contradicting the inequality in Eqn.~\eqref{eqn:StrongConvexityForDirichletDensity}.
\end{proof}

\begin{proof}[Proof of Lemma~\ref{lemma:qTermsNotStronglyConvex}] \ \newline
Assume for contradiction that there is some $\alpha > 0$ such that, for all 
$\btau, \delta\btau \in \R^{(d-1) \times d}$ and $\bu, \delta\bu \in \R^{d-1}$, 
\begin{equation}
	\left\langle 
		\bdelta, 
		\nabla \tilde{f}(\btau, \bu) - \nabla \tilde{f}(\btau - \delta\btau, \bu - \delta\bu) 
	\right\rangle 
	\geq \alpha \|\bdelta\|^2.
\label{eqn:StrongConvexityForQTerms}
\end{equation} 
Eqns.~\eqref{eqn:partialFtildeWrtUj} and \eqref{eqn:partialFtildeWrtTauIj} tell
us that the components of $\nabla \tilde{f}$ are
\begin{displaymath}
	\frac{\partial \tilde{f}}{\partial u_j} =  
	\sum_{i = 1}^d \beta_i \frac{T_{ij}p_j}{q_i} 
		- \frac{e^{u_j}}{1 + e^{u_j}} 
		\sum_{r = j}^d \beta_i \left(\frac{T_{ir}p_r}{q_i} \right)
\end{displaymath}
and
\begin{displaymath}
	\frac{\partial \tilde{f}}{\partial \tau_{jk}} = 
	\beta_i\frac{T_{ik} p_k}{q_i} - 
		\frac{e^{\tau_{jk}}}{1 + e^{\tau_{jk}}} 
		\sum_{s =1}^j \beta_s \left(\frac{T_{sk} p_k}{q_s} \right).
\end{displaymath}
As in the proof of Lemma~\ref{lemma:dirichletDensitiesNotStronglyConvex}, we can
make the contributions to the inner product from those terms proportional 
to $e^{u_j}/(1 + e^{u_j})$ or 
$e^{\tau_{jk}}/(1 + e^{\tau_{jk}})$ arbitrarily small by sending the $u_j$ or
$\tau_{jk}$ toward~$-\infty$, but the terms in the first sum, which
are of the from $\beta_i T_{ij}p_j/q_i$, 
require more careful treatment.

First, note that Eqns.~\eqref{eqn:stickBreakingPhiToP},  
\eqref{eqn:invLogitTransform} and \eqref{eqn:oneMinusPhi} imply that
\begin{equation}
	p_i = \frac{e^{u_i}}{\prod_{j=1}^i (1 + e^{u_j})} \;\;
	\mbox{ for $1 \leq j < d$ and } \;\;
	p_d = \frac{1}{\prod_{j=1}^d (1 + e^{u_j})}
\label{eqn:PiAsFuncOfUs}
\end{equation}
and 
\begin{equation}
	T_{ik} = \frac{e^{\tau_{ik}}}{\prod_{j=1}^i (1 + e^{\tau_{jk}})} \;\;
	\mbox{ for $1 \leq j < d$ and } \;\;
	T_{dk} = \frac{1}{\prod_{j=1}^d (1 + e^{\tau_{jk}})}.
\label{eqn:TikAsFuncOfTaus}
\end{equation}
As we are interested in gradients with respect to the unconstrained coordinates,
the terms we need to consider involve $T_{ij}p_j/q_i$ with 
$1 \leq i, j < d$ and so we can write them as follows:

\begin{align}
	\frac{T_{ij}p_j}{q_i} & = \frac{T_{ij}p_j}{\sum_{k=1}^{d} T_{ik} p_k} \nonumber \\
	& = \frac{T_{ij}p_j}{\sum_{k=1}^{d-1} T_{ik}p_k + T_{id} p_d} \nonumber \\
	& = \left( 
		\frac{ 
			\frac{e^{\tau_{ij}}}{\prod_{r=1}^i (1 + e^{\tau_{rj}})} 
			\frac{e^{u_j}}{\prod_{r=1}^j (1 + e^{u_r})}
		}{
			\sum_{k=1}^{d-1} 
			\frac{e^{\tau_{ik}}}{\prod_{r=1}^i (1 + e^{\tau_{rk}})}
			\frac{e^{u_k}}{\prod_{r=1}^k (1 + e^{u_r})}
			+ 
			\frac{1}{\prod_{r=1}^{d-1} (1 + e^{\tau_{rk}}) (1 + e^{u_r})}
		}
	\right) \nonumber \\ 
	& = e^{\tau_{ij} + u_j}
	\left( 
		\frac{ 
			\frac{1}{\prod_{r=1}^i (1 + e^{\tau_{rj}})} 
			\frac{1}{\prod_{r=1}^j (1 + e^{u_r})}
		}{
			\sum_{k=1}^{d-1} 
			\frac{e^{\tau_{ik}}}{\prod_{r=1}^i (1 + e^{\tau_{rk}})}
			\frac{e^{u_k}}{\prod_{r=1}^k (1 + e^{u_r})}
			+ 
			\frac{1}{\prod_{r=1}^{d-1} (1 + e^{\tau_{rk}}) (1 + e^{u_r})}
		}
	\right). 
\end{align}
The exponential factor
$e^{\tau_{ij} + u_j}$ can be made arbitrarily small by sending an appropriate
unconstrained coordinate---$\tau_{jk}$ or $u_k$---to $-\infty$. Meanwhile,
the quantity in parentheses is bounded above by 1, so an argument similar 
to that in the proof of  
Lemma~\ref{lemma:dirichletDensitiesNotStronglyConvex} establishes
that the contributions to the inner product in Eqn.~\eqref{eqn:StrongConvexityForQTerms}
from terms of the form $\beta_i T_{ij} p_j/q_i$ can be made arbitrarily small, 
no matter the values of $\delta\bu$ and $\delta\btau$. This provides the contradiction
we sought---the inner product cannot be both arbitrarily small and bounded
from below by a quadratic---and so completes the proof of
Lemma~\ref{lemma:qTermsNotStronglyConvex}.
\end{proof}

\subsection{Bounded second moments}
\label{sec:boundedSecondMoments}

\begin{proof}[Proof of Theorem~\ref{thm:A3}]
	Previous work suggests that pupils learn their tutors' songs faithfully,
	so we initialise the $\btau_j = (\tau_{1j}, \dots, \tau_{dj})$ with 
	unconstrained coordinates that correspond
	to a constant matrix close to the the $d \times d$ identity,
	\begin{displaymath}
		T = \left[ \begin{array}{cccc}
			1 - \epsilon & \frac{\epsilon}{d-1} & \cdots & \frac{\epsilon}{d-1} \\
			\frac{\epsilon}{d-1} & 1 - \epsilon & \cdots & \frac{\epsilon}{d-1} \\
			\vdots & \vdots & \ddots & \vdots \\
			\frac{\epsilon}{d-1} & \frac{\epsilon}{d-1} & \cdots & 1 - \epsilon
		\end{array} \right]
		\; \mbox{ so } \;
		T_{ij} = \left\{ 
			\begin{array}{ccl} 
				(1 - \epsilon) & \; & \mbox{if $i=j$} \\ 
				\epsilon/(d-1) && \mbox{otherwise}
			\end{array} \right..
	\end{displaymath}
	Once $\epsilon$ is fixed (we used $\epsilon =0.1$ in the computations that lead 
	to Figure~\ref{fig:birdsongTransMat}) this is a constant matrix and 
	so the second moments are of the $\tau_{ij,0}$ are trivially bounded by the
	squares of their values.
	
	We initialise the $\bu_j$ by computing the unconstrained coordinates
	corresponding to a value $\bp_j$ drawn from a Dirichlet distribution
	with shape parameters
	\begin{displaymath}
		\alpha_{ij} = x_{ij} + \alpha_p
	\end{displaymath}
	where the $x_{ij}$ give the tutor's note-usage counts at aligned position $j$ and 
	$\alpha_p = 1$ is the parameter of the prior on $\bp_j$.
	
	As we draw the $\bu_{j,0} = (u_{1j,0}, \dots, u_{dj,0})$ independently, it 
	is sufficient to establish that the unconstrained coordinates corresponding
	to a $\bp_j$ drawn from a Dirichlet distribution have bounded second moments.
	Lemma~\ref{lemma:unconstrainedDirichletMoments} below does this, establishing that
	$\E[u_{ij,0}^2]$ is bounded by a quantity that depends only on 
	$\alpha_{ij}$, and so completes the proof of Theorem~\ref{thm:A3}.
\end{proof}

\begin{lemma}[Second moments of unconstrained coordinates for Dirichlet $\bp$] \ \newline
\label{lemma:unconstrainedDirichletMoments}
Let $\bu = (u_1, \dots, u_{d-1}) \in \R^{d-1}$ be the unconstrained coordinates 
corresponding to $\bp = (p_1, \dots, p_d) \in \Delta_d$. If $\bp$ is drawn 
from a Dirichlet distribution with shape parameters 
$\balpha = (\alpha_1, \dots, \alpha_d)$, then
\begin{displaymath}
	\E[u_j^2] \leq \frac{\Gamma(\alpha_j+\gamma_j)}{\Gamma(\alpha_j) \Gamma(\gamma_j)}
	\left( \frac{2}{\alpha_j^3} \; + \; \frac{2}{\gamma_j^3} \right),
\end{displaymath}
where
\begin{displaymath}
	\gamma_j = \sum_{i=j+1}^d \alpha_i.
\end{displaymath}
\end{lemma}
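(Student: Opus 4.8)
The plan is to reduce the $(d-1)$-dimensional problem to a one-dimensional Beta integral by exploiting the neutrality (aggregation) property of the Dirichlet distribution, and then to bound that integral directly. First I would recall the standard fact that when $\bp\sim\Dir(\balpha)$ the stick-breaking coordinates $\phi_j = p_j/\sum_{i=j}^d p_i$ of Eqn.~\eqref{eqn:stickBreakingPToPhi} are mutually independent, with $\phi_j$ distributed as $\mathrm{Beta}(\alpha_j,\gamma_j)$ where $\gamma_j = \sum_{i=j+1}^d\alpha_i$. This follows quickly from the Gamma construction of the Dirichlet: writing $p_i = G_i/\sum_k G_k$ with $G_i\sim\mathrm{Gamma}(\alpha_i)$ independent, one has $\phi_j = G_j/\sum_{i\geq j}G_i \sim \mathrm{Beta}(\alpha_j,\gamma_j)$. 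Since the unconstrained coordinate $u_j = \logit(\phi_j)$ of Eqn.~\eqref{eqn:logitTransform} depends on $\phi_j$ alone, this collapses $\E[u_j^2]$ to an integral against the $\mathrm{Beta}(\alpha_j,\gamma_j)$ density, whose normalising constant is exactly the prefactor $\Gamma(\alpha_j+\gamma_j)/(\Gamma(\alpha_j)\Gamma(\gamma_j))$ appearing in the claim.

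Writing $u_j = \log\phi_j - \log(1-\phi_j)$, the key observation is that both $\log\phi_j$ and $\log(1-\phi_j)$ are non-positive on $(0,1)$, so their product is non-negative and the cross term in $u_j^2$ can be dropped:
\[
	u_j^2 = (\log\phi_j - \log(1-\phi_j))^2 \leq (\log\phi_j)^2 + (\log(1-\phi_j))^2.
\]
Taking expectations against the Beta density and abbreviating $B = \Gamma(\alpha_j+\gamma_j)/(\Gamma(\alpha_j)\Gamma(\gamma_j))$ gives
\begin{align*}
	\E[u_j^2] \leq {}& B\int_0^1 (\log\phi)^2\,\phi^{\alpha_j-1}(1-\phi)^{\gamma_j-1}\,d\phi \\
	& + B\int_0^1 (\log(1-\phi))^2\,\phi^{\alpha_j-1}(1-\phi)^{\gamma_j-1}\,d\phi.
\end{align*}
In the first integral I would bound $(1-\phi)^{\gamma_j-1}\leq 1$, and in the second $\phi^{\alpha_j-1}\leq 1$ (after the substitution $\phi\mapsto 1-\phi$), reducing each to an integral of the form $\int_0^1(\log\phi)^2\phi^{s-1}\,d\phi$. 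The only analytic ingredient needed is the identity $\int_0^1(\log\phi)^2\phi^{s-1}\,d\phi = 2/s^3$, obtained by differentiating $\int_0^1\phi^{s-1}\,d\phi = 1/s$ twice with respect to $s$. Setting $s=\alpha_j$ and $s=\gamma_j$ yields the two terms $2/\alpha_j^3$ and $2/\gamma_j^3$, and hence the stated bound. Note that it is the sign observation above, rather than the cruder $(a-b)^2\leq 2a^2+2b^2$, that produces the exact constant $2$ rather than $4$ in each term.

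I expect the main obstacle to be not the calculation but the validity of the two bounding steps: replacing $(1-\phi)^{\gamma_j-1}$ and $\phi^{\alpha_j-1}$ by $1$ is only legitimate when $\gamma_j\geq 1$ and $\alpha_j\geq 1$ respectively. In the setting that drives Theorem~\ref{thm:A3} this is automatic, since the relevant shape parameters are $\alpha_{ij} = x_{ij}+\alpha_p$ with counts $x_{ij}\geq 0$ and $\alpha_p = 1$, so every $\alpha_j\geq 1$; and because the index $j$ ranges only over $1,\dots,d-1$, the sum $\gamma_j$ always contains at least the term $\alpha_d\geq 1$. I would therefore state these inequalities as the operative hypothesis, flagging that for a general Dirichlet with small shape parameters the clean constants $2/\alpha_j^3$ and $2/\gamma_j^3$ need not survive the crude bounding of the Beta kernel. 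The second point deserving a brief justification is the Beta marginal itself, which can either be cited as the standard neutrality property of the Dirichlet or derived in one line from the Gamma representation as indicated above.
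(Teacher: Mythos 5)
Your reduction to a one--dimensional problem is the same as the paper's: the independent $\mathrm{Beta}(\alpha_j,\gamma_j)$ marginals you obtain from neutrality are exactly the factorisation the paper derives by explicit change of variables in Eqn.~\eqref{eqn:factoredUnconstrainedDensity}, and the identity $\int_0^1(\log\phi)^2\phi^{s-1}\,d\phi = 2/s^3$ is the $\phi$-coordinate version of the paper's $\int_0^\infty u^2e^{-su}\,du = 2/s^3$. Your observation that the cross term in $u_j^2$ has the right sign to be discarded is also a nice way to see where the constant $2$ (rather than $4$) comes from. The genuine divergence, and the gap, is in the bounding step. You first bound the integrand pointwise by $(\log\phi)^2+(\log(1-\phi))^2$ and then discard the factors $(1-\phi)^{\gamma_j-1}$ and $\phi^{\alpha_j-1}$ from the Beta kernel, which, as you note, is only valid when $\gamma_j\geq 1$ and $\alpha_j\geq 1$. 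The lemma as stated carries no such hypothesis, and the restriction is not harmless in this paper: the prior actually used is $\balpha_p=(0.5,\dots,0.5)$ by Eqn.~\eqref{eqn:shapeParamsForPriors}, so $\alpha_{ij}=x_{ij}+\alpha_p$ drops below $1$ whenever a tutor never sings note $i$ at position $j$ ($x_{ij}=0$), which is the typical case given how sparse the note-usage vectors are. (The value $\alpha_p=1$ you quote appears only in the prose of the proof of Theorem~\ref{thm:A3} and contradicts the main text.) So your argument proves a restricted version of the lemma that does not cover the configuration the paper runs.

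The fix is to keep $u_j^2$ intact and split the \emph{domain} rather than the integrand, which is what the paper does in Eqn.~\eqref{eqn:IjUpperBound}. Working in the $u$-coordinate, the kernel is $e^{\alpha_j u}/(1+e^{u})^{\alpha_j+\gamma_j}$; for $u\leq 0$ one uses $(1+e^{u})^{\alpha_j+\gamma_j}\geq 1$ to bound it by $e^{\alpha_j u}$, and for $u\geq 0$ one uses $e^{u}/(1+e^{u})\leq 1$ together with $1/(1+e^{u})\leq e^{-u}$ to bound it by $e^{-\gamma_j u}$. The two resulting integrals are $2/\alpha_j^3$ and $2/\gamma_j^3$ exactly as in your calculation, but the bounds on the kernel hold for \emph{all} positive shape parameters, so no extra hypothesis is needed. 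If you prefer to stay in $\phi$-coordinates, the same split at $\phi=1/2$ works, but the $u$-coordinate form is cleaner; either way you should drop the operative hypotheses $\alpha_j\geq 1$, $\gamma_j\geq 1$ rather than add them to the statement.
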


\begin{proof}[Proof of Lemma~\ref{lemma:unconstrainedDirichletMoments}] \ \newline
If $\bu$ and $\bp$ are as in Lemma~\ref{lemma:unconstrainedDirichletMoments} 
then the distribution of $\bu$ has density
\begin{equation}
	f(\bu) = \left[
		\frac{\Gamma\left( \alpha \right)}{\prod_{i=1}^d \Gamma(\alpha_i)}
		\prod_{i=1}^{d} p_i^{\alpha_i-1}
	\right] 
	\left[
		\prod_{i=1}^{d-1} p_i (1- \phi_i) 
		\vphantom{\frac{\Gamma\left( \alpha \right)}{\prod_{i=1}^d \Gamma(\alpha_i)}}
	\right] 
\label{eqn:densityForUZero}
\end{equation}
where
\begin{displaymath}
	\alpha = \sum_{i=1}^d \alpha_i, \qquad p_i = \Phi^{-1}\left( \Lambda^{-1}\left( u_i \right)\right) 
	\mbox{ for $1 \leq i \leq d-1$}
\end{displaymath}
and, using Eqn.~\eqref{eqn:stickBreakingPToPhiToo},
\begin{displaymath}
	p_d = \prod_{i=1}^{d-1}\left(1- \phi_i\right)
	= \prod_{i=1}^{d-1}\left(1- \Lambda^{-1}\left( u_i \right)\right).
\end{displaymath}
The leftmost factor in $f(\bu)$ is the density of the Dirichlet distribution, while the remaining factor comes from the determinant of the Jacobians in Eqn.~\eqref{eqn:JphiJu}.

Rearranging the products in Eqn.~\eqref{eqn:densityForUZero} and using
Eqn.~\eqref{eqn:PiAsFuncOfUs} to write the density explicitly in terms of the 
$u_j$, we find
\begin{align}
	f(\bu) &= 
	\frac{\Gamma\left( \alpha \right)}{\prod_{i=1}^d \Gamma(\alpha_i)}
	\left[
		\prod_{i=1}^{d-1} p_i^{\alpha_i}
	\right] p_d^{\alpha_d}
	\nonumber \\
	& = 
	\frac{\Gamma\left( \alpha \right)}{\prod_{i=1}^d \Gamma(\alpha_i)}
	\left[
		\prod_{i=1}^{d-1} 
		\left( 
			\frac{e^{u_i}}{\prod_{j=1}^i \left(1 +e^{u_j}\right)}
		\right)^{\alpha_i}
	\right] 
	\left( \frac{1}{\prod_{j=1}^{d-1} (1 +e^{u_j})}\right)^{\alpha_d}.
\end{align}
Gathering factors that involve $e^{u_j}$ leads to 
\begin{equation}
	f(\bu) = 
	\frac{\Gamma\left( \alpha \right)}{\prod_{i=1}^d \Gamma(\alpha_i)}
		\left[ \prod_{i=1}^{d-1} 
			\frac{e^{\alpha_i u_i}}{\left(1 +e^{u_i}\right)^{\alpha_i + \gamma_i}}
		\right],
\label{eqn:factoredUnconstrainedDensity}
\end{equation}
where the $\gamma_i$ appearing in the exponents of the denominators are 
\begin{displaymath}
	\gamma_i = \sum_{j=i+1}^d \alpha_j.
\end{displaymath}

Each of the factors in the representation of $f(\bu)$ in 
Eqn.~\eqref{eqn:factoredUnconstrainedDensity} depends on only a single one of the
$u_j$ and so if we define
\begin{displaymath}
	I_j = \int_{-\infty}^{\infty} 
		\frac{e^{\alpha_j u_j}}{\left(1 +e^{u_j}\right)^{\alpha_j + \gamma_j}} \, du_j,
\end{displaymath}
the fact that $f(\bu)$ is properly normalised implies 
\begin{displaymath}
	\int_{\R^{d-1}} f(\bu) \, d\bu = 
	\frac{\Gamma\left(\alpha \right)}{\prod_{i=1}^d \Gamma(\alpha_i)} \,
	\prod_{i=1}^{d-1} I_j = 1.
\end{displaymath}
We can also calculate $I_j$ explicitly by introducing $\phi_j = e^{u_j} / (1 + e^{u_j})$:
\begin{align}
	I_j = \int_{-\infty}^{\infty} 
		\frac{e^{\alpha_j u_j}}{\left(1 +e^{u_j}\right)^{\alpha_j + \gamma_j}} \, du_j
	& = \int_0^1 \phi_j^{\alpha_j} (1 - \phi_j)^{\gamma_j} \, 
		\frac{d\phi_j}{\phi_j (1 - \phi_j)} \nonumber \\
	& = \int_0^1 \phi_j^{\alpha_j-1} (1 - \phi_j)^{\gamma_j-1} \, d\phi_j \nonumber \\
	& = B(\alpha_j, \gamma_j)
	= \frac{\Gamma(\alpha_j) \Gamma(\gamma_j)}{\Gamma(\alpha_j+\gamma_j)},
\label{eqn:IjAsBetaFunction}
\end{align}
where $B(p,q)$ is the Beta function.

Now express the second moment of $u_j$ in terms of the $I_j$:
\begin{align}
	\E[u_j^2] = \int_{\R^{d-1}} u_j^2 \,f(\bu) \, d\bu 
	& = \frac{\int_{-\infty}^{\infty} 
		\frac{u_j^2 \, e^{\alpha_j u_j}}{\left(1 +e^{u_j}\right)^{\alpha_j + \gamma_j}} 
	\, du_j}{I_j} \nonumber \\ 
	& = \frac{\Gamma(\alpha_j+\gamma_j)}{\Gamma(\alpha_j) \Gamma(\gamma_j)}
	\int_{-\infty}^{\infty} 
		\frac{u_j^2 \, e^{\alpha_j u_j}}{\left(1 +e^{u_j}\right)^{\alpha_j + \gamma_j}} 
	\, du_j,
\label{eqn:secondMomentAsRatio}
\end{align}
Using this, we can establish the our result by bounding the integral
at right in the last line of Eqn.~\eqref{eqn:secondMomentAsRatio}:  
\begin{align}
	\int_{-\infty}^{\infty} 
		\frac{u_j^2 \,e^{\alpha_j u_j}}{\left(1 +e^{u_j}\right)^{\alpha_j + \gamma_j}} 
		\, du_j 
	& = \int_{-\infty}^{0} 
		\frac{u_j^2 \,e^{\alpha_j u_j}}{\left(1 +e^{u_j}\right)^{\alpha_j + \gamma_j}} 
		\, du_j \; + \; 
		\int_{0}^{\infty}
		\frac{u_j^2 \,e^{\alpha_j u_j}}{\left(1 +e^{u_j}\right)^{\alpha_j + \gamma_j}} 
		\, du_j \nonumber \\
	& \leq \int_{-\infty}^{0} u_j^2 \,e^{\alpha_j u_j} \, du_j \; + \; 
		\int_{0}^{\infty}
			u_j^2
			\left( \frac{e^{u_j}}{1 +e^{u_j}}\right)^{\alpha_j} 
			\left( \frac{1}{1 +e^{u_j}}\right)^{\gamma_j} 
			\, du_j \nonumber \\
	& \leq \frac{2}{\alpha_j^3} \; + \; 
		\int_{0}^{\infty}
			u_j^2 \,e^{-\gamma_j u_j} \, du_j \nonumber \\
	& \leq \frac{2}{\alpha_j^3} \; + \; \frac{2}{\gamma_j^3}.
\label{eqn:IjUpperBound}
\end{align}
Finally, by combining Eqns.~\eqref{eqn:secondMomentAsRatio} and 
\eqref{eqn:IjUpperBound} we obtain
\begin{displaymath}
	\E[u_j^2] \leq \frac{\Gamma(\alpha_j+\gamma_j)}{\Gamma(\alpha_j) \Gamma(\gamma_j)}
	\left( \frac{2}{\alpha_j^3} \; + \; \frac{2}{\gamma_j^3} \right),
\end{displaymath}
completing the proof of Lemma~\ref{lemma:unconstrainedDirichletMoments}.
\end{proof}

\newpage
\printbibliography

\newpage
\section*{Supplementary Material}
\label{sec:supplement}
\renewcommand{\thetable}{{S\arabic{table}}}
\renewcommand{\thefigure}{{S\arabic{figure}}}
\setcounter{table}{0}
\setcounter{figure}{0}

\begin{figure}[hb]
	\includegraphics[width=0.32\linewidth]{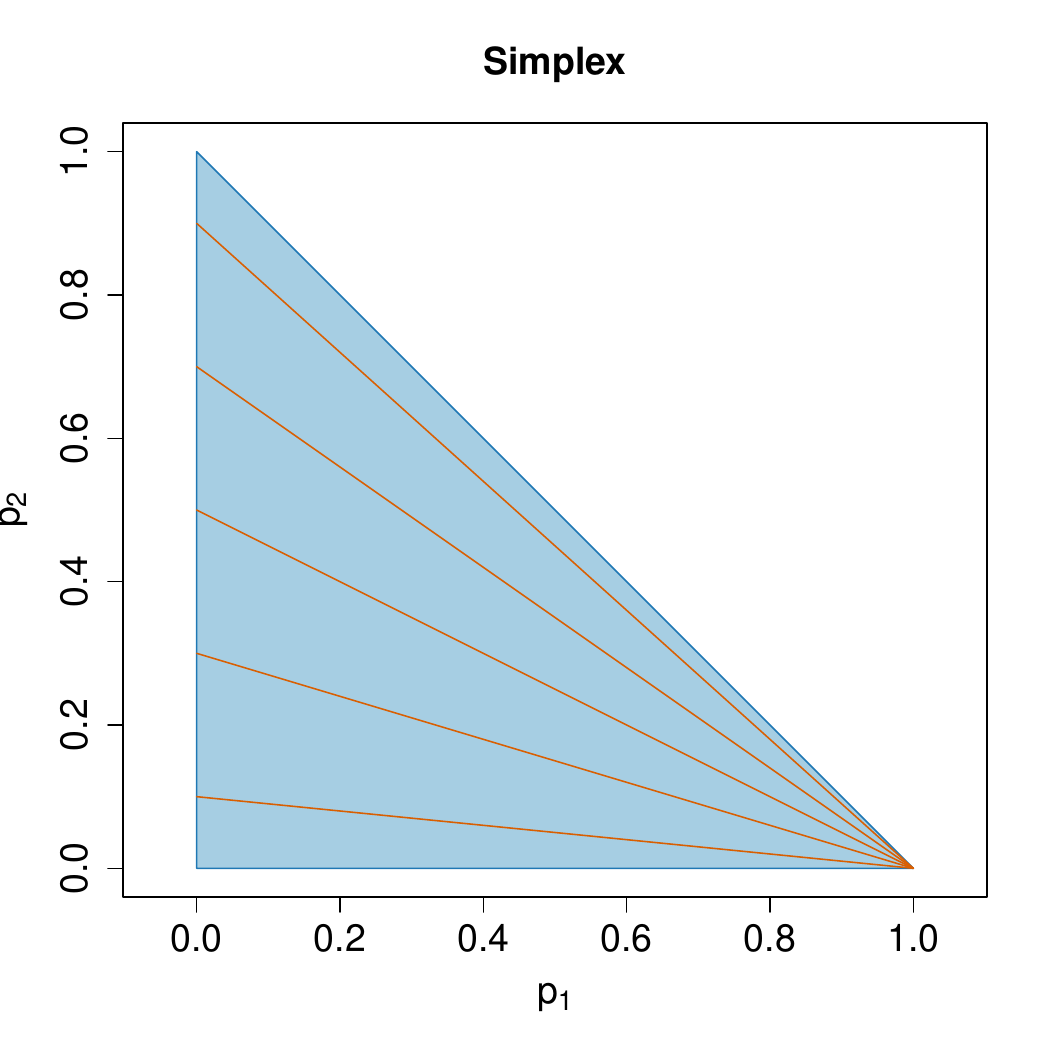}
	\includegraphics[width=0.32\linewidth]{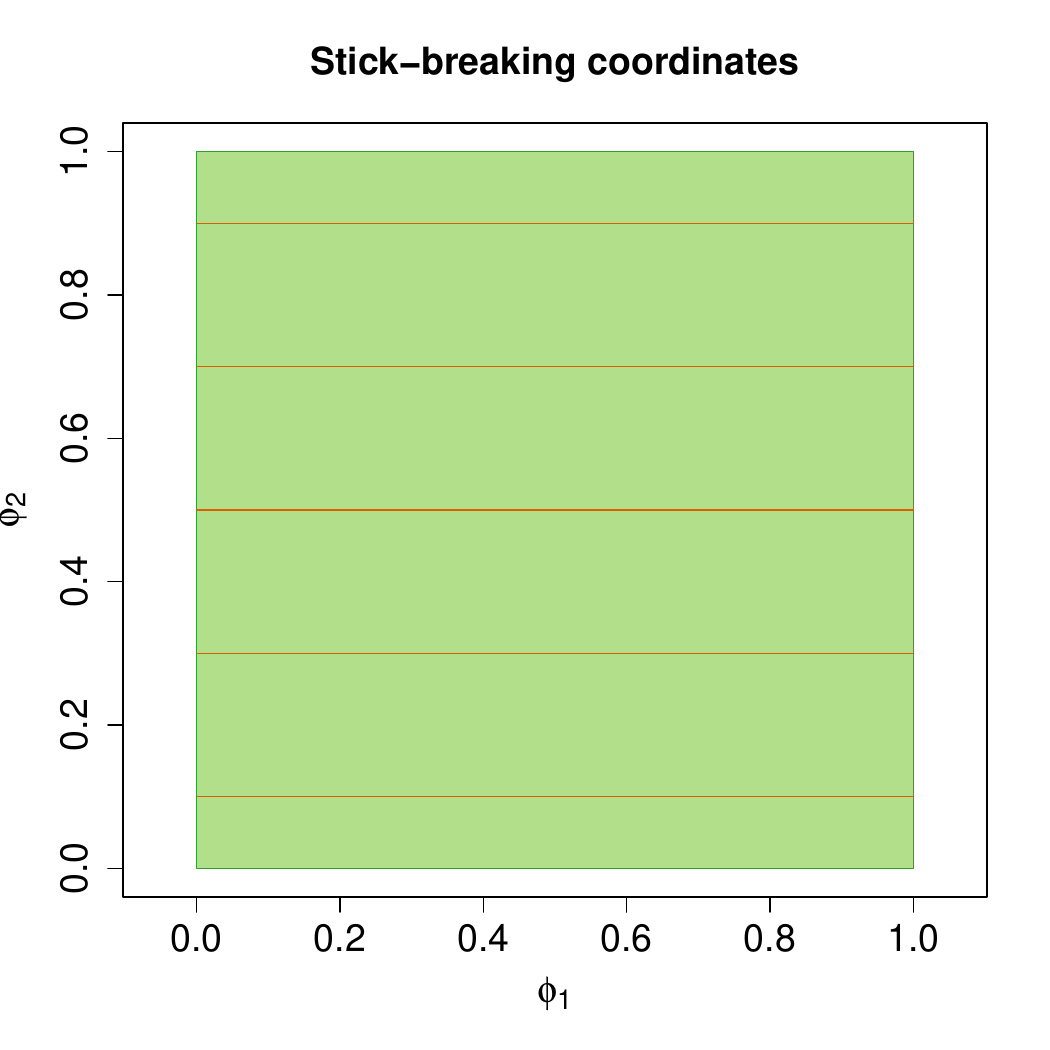}
	\includegraphics[width=0.32\linewidth]{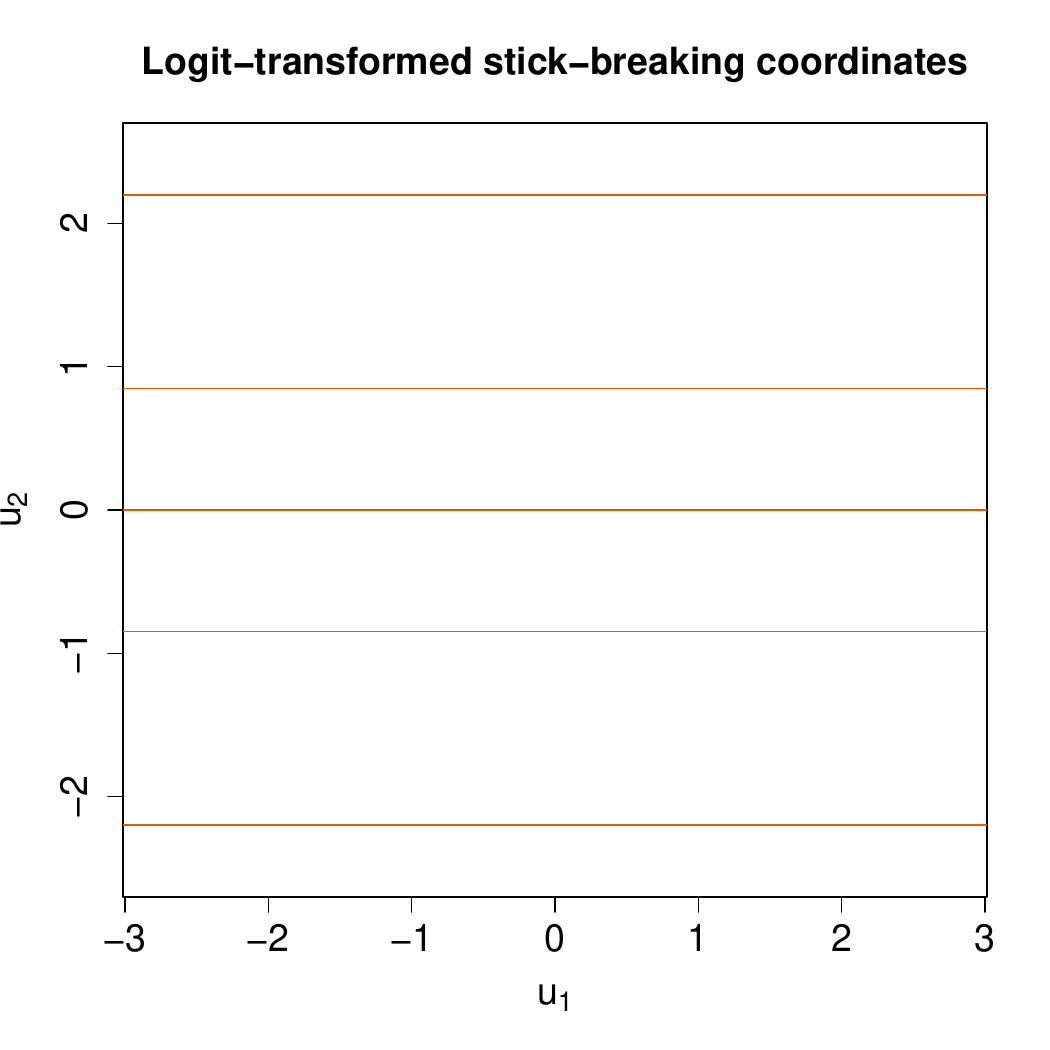}
	\caption{
		A diagram illustrating the coordinates systems used in the paper. The panel 
		at left is the projection of the three-simplex $\Delta_3$ into the $(p_1, p_2)$
		plane, while the panel in the center shows the associated stick-breaking coordinates
		$\phi_1 = p_1$ and $\phi_2 = p_2/(1 - p_1)$. The panel at right shows the 
		logit-transformed coordinates $u_j = \log(\phi_j / (1 - \phi_j))$. All three panels
		have dark orange lines corresponding to the sets $u_2 = c$ for 
		$c \in \{0.1, 0.3, 0.5, 0.7, 0.9\}$.
	}
\label{fig:CoordinateSystemsS1}
\end{figure}

\begin{figure}
	\includegraphics[width=0.95\linewidth]{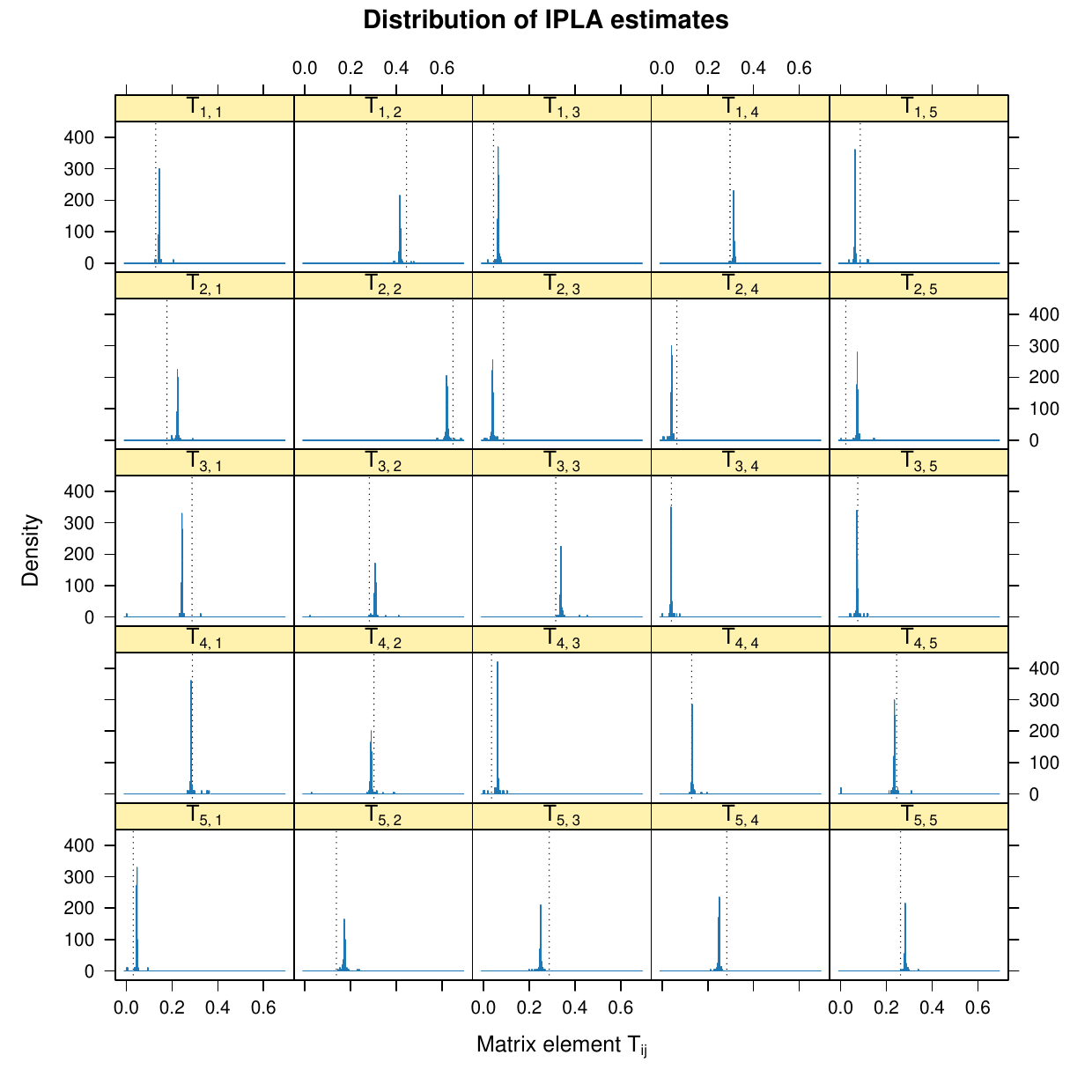}
	\caption{
		Histograms showing the distribution of the estimates of the entries $T^\star_{ij}$ in 
		the MMLE transmission matrix obtained from 200 independently-initialised
		runs of the IPLA. Visual inspection finds little evidence of multimodality
		for any entry, a conclusion supported by the dip test of unimodality developed
		in \cite{Hartigan:1985yn}. This is a hypothesis test whose null hypothesis
		is that the data are drawn from a distribution that has a continuous,
		unimodal density. In all cases, the dip test suggested no significant 
		deviation from unimodality ($p > 0.7$ in all cases).
	}
\label{fig:TijUnimodality}
\end{figure}

\begin{table}[hp]
\centering
\begin{tabular}{lrlrrr}
		 & Number of & Note & Note & Proportion & Cumulative \\ 
	Name & Occurrences & Char. & Class & of all notes & Proportion \\ 
  \hline
  Curve &  8470 & A &     1 & 0.3687 & 0.3687 \\ 
  Slope &  5855 & B &     2 & 0.2549 & 0.6236 \\ 
  Chip &  2946 & C &     3 & 0.1282 & 0.7518 \\ 
  Chirp &  1954 & D &     4 & 0.0851 & 0.8369 \\ 
  Stack &  1741 & E &     5 & 0.0758 & 0.9127 \\ 
  Curve/Stack &   991 & F &     6 & 0.0431 & 0.9558 \\ 
  Whine &   310 & G &     7 & 0.0135 & 0.9693 \\ 
  Curve/Slope &   188 & H &     8 & 0.0082 & 0.9775 \\ 
  Stack/Curve &   143 & I &     9 & 0.0062 & 0.9837 \\ 
  U\_Stack &   132 & J &     9 & 0.0057 & 0.9895 \\ 
  NLP\_Slope &   101 & K &     9 & 0.0044 & 0.9939 \\ 
  Noisy\_Chirp &    97 & L &     9 & 0.0042 & 0.9981 \\ 
  Wavy &    23 & M &     9 & 0.0010 & 0.9991 \\ 
  Wavy\_Short &    10 & N &     9 & 0.0004 & 0.9995 \\ 
  Stack/Slope &     9 & O &     9 & 0.0004 & 0.9999 \\ 
  NLP\_Other &     2 & P &     9 & 0.0001 & 1.0000 \\ 
\end{tabular}
\caption{
	A table summarising the frequency with which the various notes defined
	in \citep{Lewis:2021hj} are sung. The rows of the table correspond to notes
	and are ordered according to decreasing frequency of occurrence in the
	data set. The first column gives the note's name as
	specified by Lewis et al., the second gives the number of occurrences
	of the note across all songs used in our study, the third gives 
	the character used to represent the note in alignments such as that
	illustrated in Figure~4 of the paper and the fourth column gives the note's
	number as used in the transmission matrix $T$: note that the 8 least 
	commonly-sung notes, which account for around 2.25\% of all notes
	recorded, are amalgamated into a single class. The rightmost two columns
	give, respectively, the proportion of all notes that are of the given
	type and the cumulative proportion for the given note and those sung 
	more commonly.
}
\label{tbl:noteUsageS1}
\end{table}
\end{document}